\newtheorem{theorem}{Theorem}[section]
\newtheorem{lemma}[theorem]{Lemma}
\newtheorem{definition}[theorem]{Definition}
\newtheorem{corollary}[theorem]{Corollary}
\newtheorem{observation}[theorem]{Observation}
\newcommand{\tw}{\mathrm{\textbf{tw}}}
\newcommand{\br}[1]{\left(#1\right)}
\newcommand{\eps}{\varepsilon}
\newcommand{\ex}[1]{\mathbb{E}\left[#1\right]}
\newcommand{\pr}[1]{\mathbb{P}\left[#1\right]}
\newcommand{\Oh}{\mathcal{O}}
\newcommand{\poly}{\ensuremath{\text{poly}}}
\newcommand{\opt}{\ensuremath{\mathsf{opt}}}
\newcommand{\sub}{\subseteq}
\newcommand{\sm}{\setminus}
\newcommand{\rr}{\ensuremath{\mathbb{R}}}
\newcommand{\nn}{\ensuremath{\mathbb{N}}}
\newcommand{\gbf}{\ensuremath{\mathbf{G}}}
\newcommand{\fcal}{\ensuremath{\mathcal{F}}}
\newcommand{\acal}{\ensuremath{\mathcal{A}}}
\newcommand{\bcal}{\ensuremath{\mathcal{B}}}
\newcommand{\rcal}{\ensuremath{\mathcal{R}}}
\newcommand{\pcal}{\ensuremath{\mathcal{P}}}
\newcommand{\scal}{\ensuremath{\mathcal{S}}}
\newcommand{\shat}{\ensuremath{{\mathcal{S}}}}
\newcommand{\twdel}{\textsc{Weighted Treewidth-$\eta$ Deletion}\xspace}
\newcommand{\planardel}{\textsc{Weighted Planar $\mathcal{F}$-M-Deletion}\xspace}
\newcommand{\fdel}{\textsc{Weighted $\mathcal{F}$-M-Deletion}\xspace}
\newcommand{\ufdel}{\textsc{$\mathcal{F}$-M-Deletion}\xspace}
\newcommand{\fvs}{\textsc{Feedback Vertex Set}\xspace}
\newcommand{\wei}{\textsc{Weighted}\xspace}
\newcommand{\cover}{-modulator hitting family\xspace}
\def\DEBUG{true}
\ifdefined\DEBUG{}
\def\rem#1{{\marginpar{\raggedright\scriptsize #1}}}
\newcommand{\micr}[1]{\rem{\textcolor{blue}{\(\bullet \) #1}}}
\newcommand{\meir}[1]{\rem{\textcolor{purple}{\(\bullet \) #1}}}
\newcommand{\micr}[1]{ }
\newcommand{\meir}[1]{ }
\title{Losing Treewidth In The Presence Of Weights\footnote{This project has been supported by Polish National Science Centre SONATA-19 grant 2023/51/D/ST6/00155.}}
\author{Michał Włodarczyk \\ University of Warsaw \\  {\texttt{michal.wloda@gmail.com}}}
\date{}
\begin{document}

\maketitle

\begin{abstract}
   In the \twdel problem we are given a node-weighted graph $G$ and we look for a vertex subset $X$ of minimum weight such that the treewidth of $G-X$ is at most~$\eta$.
   We show that \twdel admits a randomized polynomial-time constant-factor approximation algorithm for every fixed $\eta$.
   Our algorithm also works for the more general \planardel problem.
   %We also give a randomized single-exponential FPT algorithm parameterized by the number of vertices in the solution.
   %The approximation factor and the polynomial degree are computable functions of $\eta$.

   %As a consequence, we obtain a constant-factor approximation for \planardel as well.
   This work extends the results for unweighted graphs by [Fomin,  Lokshtanov,  Misra, Saurabh; FOCS '12] and answers a question posed by [Agrawal, Lokshtanov, Misra, Saurabh, Zehavi; APPROX/RANDOM '18] and [Kim, Lee, Thilikos; APPROX/RANDOM '21].
    The presented algorithm is based on a novel technique of random sampling of so-called protrusions.
\end{abstract}

\iffalse
In the Weighted Treewidth-Eta Deletion problem we are given a node-weighted graph G and we look for a vertex subset X of minimum weight such that the treewidth of G-X is at most Eta. We show that Weighted Treewidth-Eta Deletion admits a randomized polynomial-time constant-factor approximation algorithm for every fixed Eta. Our algorithm also works for the more general Weighted Planar F-M-Deletion problem.

 This work extends the results for unweighted graphs by [Fomin,  Lokshtanov,  Misra, Saurabh; FOCS '12] and answers a question posed by [Agrawal, Lokshtanov, Misra, Saurabh, Zehavi; APPROX/RANDOM '18] and [Kim, Lee, Thilikos; APPROX/RANDOM '21]. The presented algorithm is based on a novel technique of random sampling of so-called protrusions.
\fi

\section{Introduction}

Let $\mathcal{H}$ be a class of  graphs.
A natural vertex deletion problem corresponding to $\mathcal{H}$ is
defined as follows: given a graph $G$ find a set $X \sub V(G)$ of minimum size for which $G-X \in \mathcal{H}$.
%In the {\sc Weighted $\mathcal{H}$-Deletion}\micr{change name} problem we are given a graph $G$ with a weight function $w\colon V(G) \to \rr^+$ and we ask for a vertex set $X \sub V(G)$ of minimum total weight, for which $G-X \in \mathcal{H}$.
%In the unweighted variant all the weights are equal 1.
This captures numerous fundamental graph problems such as {\sc Vertex Cover}, \fvs, or {\sc Vertex Planarization}. %, or {\sc $k$-Separator}
Vertex deletion problems have been long studied from the perspective of approximation algorithms, starting from the classic 2-approximation for {\sc Vertex Cover}~\cite{bar1981linear, nemhauser1974properties} from the 70s.  
It is known that for every non-trivial hereditary class $\mathcal{H}$ the corresponding vertex deletion problem
is NP-hard and APX-hard~\cite{lund1993approximation} so the best we can hope for in polynomial time is a constant-factor approximation.
Alike for {\sc Vertex Cover}, this can be achieved for any class $\mathcal{H}$ characterized by a finite family of forbidden induced subgraphs via a linear programming relaxation (or sometimes by other means~\cite{guruswami2017inapproximability,Lee19,OKUN2003231}).
Lund and Yannakakis~\cite{lund1993approximation} conjectured in the early 90s that this condition in fact captures all vertex deletion problems admitting a constant-factor approximation.
This was quickly disproved with a 2-approximation for \fvs~\cite{bafna19992, becker1994approximation} as the class of acyclic graphs (i.e. forests) does not admit such a characterization.
Later, Fujito~\cite{fujito1997primal} constructed an infinite family of counterexamples based on matroidal properties. % $\mathcal{H}$ for which {\sc Weighted $\mathcal{H}$-Deletion} admits a constant-factor approximation.
Nowadays, we know several more interesting counterexamples: block graphs~\cite{agrawal2016faster}, 3-leaf power graphs~\cite{ahn2023polynomial}, (proper) interval graphs~\cite{cao2016linear, HofV13}, ptolemaic graphs~\cite{ahn2022towards}, and bounded-treewidth graphs~\cite{FominLMS12, GuptaLLM019}.

The landscape of positive results is grimmer when we consider {\em weighted vertex deletion} problems.
Here, the graph $G$ is equipped with a  weight function $w\colon V(G) \to \rr^+$ and we ask for a vertex set $X \sub V(G)$ of minimum total weight for which $G-X \in \mathcal{H}$.
The constant-factor approximations for {\sc Vertex Cover}, \fvs, and for classes defined by a finite family of forbidden induced subgraphs were originally presented already in the weighted setting, but apart from that we are only aware of such algorithms for ptolemaic graphs~\cite{ahn2022towards}, $\theta_c$-free graphs~\cite{fiorini2010hitting, KimLT21} (a special case of bounded-treewidth graphs; explained shortly), and the matroidal classes introduced by Fujito~\cite{fujito1997primal}.

%in general such results are scarce and apply yo very specific graph classes~\cite{fiorini2010hitting,fujito1997primal,KimLT21}.

\paragraph{Treewidth reduction and minor hitting.}

The class of forests coincides with the class of graphs with treewidth one~\cite[\S 7]{cygan2015parameterized}.
Hence a natural generalization of (\wei) \fvs is (\wei) {\sc Treewidth-$\eta$ Deletion}, a problem studied in the context of bidimensionality theory~\cite{demaine2005bidimensionality, fomin2012bidimensionality, FominL0Z20, fomin2018excluded} and satisfiability~\cite{Bansal0U17}.
Here, we look for a minimum-weight {\em treewidth-$\eta$ modulator}, that is, a vertex set $X \sub V(G)$ for which the treewidth of $G-X$ is bounded~by~$\eta$.

Treewidth is closely related to the theory of graph minors (see \Cref{sec:prelim} for definitions). % and basic facts).
Since the graph property ``treewidth at most $\eta$'' is closed under taking minors, Robertson and
Seymour's Theorem implies that it can be characterized by a finite family of forbidden minors~\cite{GM20}.
In addition, a minor-closed graph class has bounded treewidth if and only if at least one of its forbidden minors is planar~\cite{GM5}.
The treewidth bound is known to be polynomial in the size of this planar~graph~\cite{chuzhoy2021towards}.

For a finite family of graphs $\fcal$, the (\wei) \ufdel problem (also known as {\sc $\fcal$-Minor Cover}) asks for a minimum-size (resp. minimum-weight) set $X \sub V(G)$ for which $G-X$ does not contain any graph $F \in \fcal$ as a \underline{minor}.
If the family $\fcal$ contains a planar graph, we speak of  (\wei) {\sc Planar} \ufdel.
Observe that  {\sc Treewidth-$\eta$ Deletion} is a special case of {\sc Planar} \ufdel. % for some family $\fcal$ containing a planar graph.
This special case is paramount to understand the general case: if $X \sub V(G)$ is a solution to {\sc Planar} \ufdel then $\tw(G-X) \le \eta$ for some constant $\eta$ depending on $\fcal$.
Hence if we were able to compute a constant-approximate treewidth-$\eta$ modulator $Y$, we could then solve the problem optimally on the bounded-treewidth graph $G-Y$, and output the union of both solutions.

Other examples of \ufdel include deletion problems into the classes of outerplanar graphs (for $\fcal = \{K_4,K_{3,2}\}$), planar graphs (for $\fcal = \{K_5,K_{3,3}\}$), or classes of bounded pathwidth/treedepth.
Note that in the second case the family $\fcal$ does not contain a planar graph, which corresponds to the fact that the class of planar graphs has unbounded treewidth.

\paragraph{Approximating \ufdel.}
In their seminal work, Fomin, Lokshtanov, Misra, and Saurabh~\cite{FominLMS12} presented a randomized constant-factor approximation for every {\sc Planar} \ufdel problem in the unweighted setting.
Subsequently, Gupta, Lee, Li, Manurangsi, and Włodarczyk~\cite{GuptaLLM019} gave a deterministic algorithm with explicit approximation factor $\Oh(\log \eta)$ where $\eta$ is the treewidth bound imposed by excluding the family~$\fcal$.
Obtaining a constant-factor approximation for any family $\fcal$ without a planar graph is wide open.
This includes the {\sc Vertex Planarization} problem ($\fcal = \{K_5,K_{3,3}\}$) for which 
the best approximation factors are $\Oh_\eps(\opt^{\eps})$ in polynomial time, for any $\eps > 0$, and $\log^{\Oh(1)}(\opt)$ in quasi-polynomial time~\cite{Jansen022, kawarabayashi2017polylogarithmic}.

The current-best approximation algorithms for \planardel have factors $\Oh(\log^{1.5} |V(G)|)$ (randomized), $\Oh(\log^{2} |V(G)|)$ (deterministic)~\cite{AgrawalLM0Z18} %\micr{they solved more problem, like chordal deletion} 
and $\Oh(\log |V(G)|\log\log |V(G)|)$ for the edge-deletion version~\cite{Bansal0U17}.
Apart from \wei\fvs, the only other case known to admit a constant-factor approximation corresponds to $\fcal = \{\theta_c\}$\footnote{The multigraph $\theta_c$ consists of two vertices joined by $c$ parallel edges. Excluding $\theta_c$ as a minor can be  expressed as excluding a finite family of simple graphs containing a planar graph, e.g., subdivided $\theta_c$.}~\cite{KimLT21}.
Beforehand, a specialized algorithm was given for $\fcal = \{\theta_3\}$ where the corresponding deletion problem was dubbed {\sc Weighted Diamond Hitting Set}~\cite{fiorini2010hitting}.
Apart from that, \planardel admits a PTAS on bounded-genus graphs and a~QPTAS on $H$-minor-free graphs\footnote{These results were presented only for \twdel or for families $\fcal$ with only connected graphs.
This is because they relied on the conference version of the work~\cite{BasteST20} which also required this restriction.
In the journal version of~\cite{BasteST20} this restriction has been dropped so the results in~\cite{FominL0Z20} can be automatically strengthened~\cite{meirav-personal}.}~\cite{FominL0Z20}.
To the best of our knowledge, the existence of a polynomial-time constant-factor approximation for \planardel remains open even on $H$-minor-free graphs.
%A common feature of all these algos is usage of LP techniques.
%\micr{add a sentence to make a nicer break}

\paragraph{Protrusion replacement.}
The results of Fomin et al.~\cite{FominLMS12} are based on the powerful technique of {\em protrusion replacement}~\cite{fellows1989analogue}.
An {\em $r$-protrusion} in a graph $G$ is a vertex subset $A \sub V(G)$ for which $\tw(G[A]) \le r$ and $|\partial_G(A)| \le r$, where $\partial_G(A)$ is the subset of vertices from $A$ that have neighbors outside $A$.
The idea of protrusion replacement is to analyze the behavior of an $r$-protrusion with respect to the considered problem and replace it with an equivalent subgraph of constant size.
Apart from constant-factor approximations, the same work  spawned single-exponential FPT algorithms parameterized by the solution size (under some additional restriction) and polynomial kernelizations (using a more advanced concept of a {\em near-protrusion}). %~\cite{FominLMS12}.
For other applications of this technique, see e.g.,~\cite{BasteST23hittingIV, bodlaender2016meta, FominLST20, Jansen022, LokshtanovR0Z18}.

The caveat of protrusion replacement is that in general it is only known to work for the unweighted problems.
Kim, Lee, and Thilikos~\cite{KimLT21} designed a weighted variant of this technique for the special case of \planardel with  $\fcal = \{\theta_c\}$.
They suggested a roadmap to achieve a constant-factor approximation for the general case, concluding that
``($\dots$) the most challenging step is
to design a weighted protrusion replacer''.

We remark that the techniques used by Gupta et al.~\cite{GuptaLLM019} in their constant-factor approximation for {\sc Planar} \ufdel also fall short to process weighted graphs.
%do not seem to easily generalize to the weighted setting.
Their algorithm repeatedly improves a solution $X$ by removing a small vertex set $Y$ to reduce the number of $X$-vertices in each connected component to $\Oh_\fcal(1)$. 
The argument ensuring the existence of such a set  $Y$ relies on the properties of separators in a bounded-treewidth graph
and breaks down when we replace ``small'' with ``of small weight''. 
%fact that removing any bag in a tree decomposition of small width is comparably costly, and breaks down in the presence of weights.\micr{properties of separators}
%The existence of a set $Y$ of small cardinality follows from the structure of separators in a bounded-treewidth graph but we could not control the weights in $Y$.

%\micr{ Lee \cite{Lee19}}
%suggested that obtaining an $\Oh(1)$-approximation for more complicated cases of \planardel may require designing a general weighted counterpart of protrusion replacement.

\paragraph{FPT algorithms for weighted problems.}
Apart from approximation, \ufdel has been studied extensively from the perspective of %parameterized algorithms
fixed-parameter tractability (FPT)~\cite{chen2010improved, cygan2012improved, fellows1988nonconstructive, jansen2014near, joret2014hitting, kim2012single, KociumakaP19, LiN22, MorelleSST23, SauST22}.
It is known that every {\sc Planar} \ufdel problem can be solved in time $2^{\Oh_\fcal(k)}\cdot n^{\Oh(1)}$ where $k$ stands for the solution size~\cite{kim2015linear}.
Such a single-exponential parameter dependency is best possible under the Exponential Time Hypothesis already
 for {\sc Vertex Cover}~\cite{impagliazzo2001problems}.

We will consider parameterized %{\em fixed-parameter tractable} (FPT) 
algorithms for weighted problems under the parameterization by the solution size $k$.
More precisely, we look for a solution of minimum weight among those on at most $k$ vertices, called a {\em $k$-optimal} solution (see, e.g.,~\cite{cygan2014minimum, EtscheidKMR17, fomin2016efficient, flow-dir, flow-undir, MandalMRS24, shachnai2017multivariate}). %, in running time of the form $f(k)\cdot n^{\Oh(1)}$.
%Even thought a large part of the FPT toolbox is tailored for unweighted problems, some techniques can be generalized to handle weights~\cite{cygan2014minimum, fomin2016efficient, shachnai2017multivariate}.
%A remarkable example is the framework of {\em flow augmentation}~\cite{flow-dir, flow-undir} that allowed to, e.g., classify {\sc Weighted Directed Feedback Vertex Set} as FPT.
Observe that the treewidth of a solvable instance of \planardel cannot exceed $k + \Oh_\fcal(1)$.
Therefore, we can find a solution in time $2^{\Oh_\fcal(k\log k)}\cdot n^{\Oh(1)}$ via dynamic programming on a tree decomposition~\cite{BasteST23hittingIV} (see \cite[\S 7]{SauST22} for a discussion on treating weights). 
A single-exponential FPT algorithm is known for \wei\fvs~\cite{agrawal2016faster, chen2008improved}.
Although we are not aware of any more general results,
it is plausible that the techniques developed by Kim et al.~\cite{kim2015linear} can be adapted to solve general \planardel in single-exponential time~\cite{ignasi-personal}.

\subsection{Our results}

Our main conceptual message is that the need to design a weighted counterpart of the protrusion replacement technique can be circumvented.
Instead, our algorithms are based on a new structural result about protrusions and treewidth modulators.
We say that a non-empty family of $r$-protrusions $\mathcal{P}$ in $G$ forms an {\em $(\eta, r, c)$\cover} if for every treewidth-$\eta$ modulator $X$ in $G$ at least $|\pcal|\, /\, c$ sets in $\mathcal{P}$ have a non-empty intersection with $X$.

\begin{restatable}{theorem}{thmCover}
\label{thm:main:cover}
    %For every $\eta \in \nn$ there is $c_\eta \in \nn$ such that 
    For each $\eta > 0$,
    every graph $G$ with $\tw(G) > \eta$ has an $\br{\eta, \Oh(\eta), 2^{\Oh(\eta)}}$\cover $\mathcal{P}$ of size $|V(G)|^{\Oh(\eta)}$. Furthermore, $\mathcal{P}$ can be constructed in time $|V(G)|^{\Oh(\eta)}$.%\micr{computability of $c$}
\end{restatable}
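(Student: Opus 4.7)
The plan is to construct $\mathcal{P}$ by enumerating candidate $r$-protrusions for $r=\Oh(\eta)$ via small separators of $G$. Concretely, for every $S\sub V(G)$ with $|S|\le r$ and every connected component $C$ of $G-S$, I would form the candidate $A_{S,C}= C\cup (S\cap N_G(C))$; a short check shows $\partial_G(A_{S,C})\sub S\cap N_G(C)$, so the boundary has size at most $r$. I would keep those candidates satisfying $\tw(G[A])\le r$, verifiable in FPT time via Bodlaender's algorithm. The number of pairs $(S,C)$ is at most $|V(G)|^{r+1}=|V(G)|^{\Oh(\eta)}$, yielding the required bounds on $|\mathcal{P}|$ and on the construction time.

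For the hitting property the cleanest case is a candidate $A$ with $\tw(G[A])\ge \eta+1$: if $X\cap A=\emptyset$ then $G[A]$ sits inside $G-X$ as an induced subgraph, forcing $\tw(G-X)\ge \eta+1$ and contradicting that $X$ is a treewidth-$\eta$ modulator. Hence every such ``high-treewidth'' candidate is hit by every modulator, i.e., the hitting ratio is $c=1$, well within the allowed $c=2^{\Oh(\eta)}$.

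The main obstacle is guaranteeing non-emptiness (or sufficient density) of $\mathcal{P}$ when $\tw(G)>\eta$. When $\tw(G)\le r$ the set $A=V(G)$ with $S=\emptyset$ is immediately a valid high-treewidth protrusion. When $\tw(G)\gg r$, however --- already in bounded-degree expanders of treewidth $\Omega(|V(G)|)$ --- vertex expansion forces every small-boundary subset to be either tiny (a forest, of treewidth $\le 1$) or to cover almost all of $V(G)$ (with treewidth $\Omega(|V(G)|)$), so no candidate of the form above has treewidth in $[\eta+1,r]$. I therefore expect the proof to also admit candidates $A$ with $\tw(G[A])\le \eta$ and to replace the clean ``every modulator hits every candidate'' argument by a counting lemma of the shape: for every treewidth-$\eta$ modulator $X$, the $r$-protrusions of $G$ contained in the treewidth-$\eta$ subgraph $G[V(G)\sm X]$ constitute at most a $\br{1-2^{-\Oh(\eta)}}$-fraction of $\mathcal{P}$. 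The $2^{\Oh(\eta)}$ factor in the hitting constant would then stem from the exponential dependence on $\eta$ in enumerations over subsets of bags in a width-$\eta$ tree decomposition of $G-X$, and establishing this counting bound is where I expect the technical heart of the proof to lie.
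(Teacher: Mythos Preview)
Your proposal correctly identifies the overall shape of the argument: enumerate $r$-protrusions via small separators, and prove a counting lemma showing that a constant fraction of the family intersects any treewidth-$\eta$ modulator. You are also right that the ``high-treewidth candidate'' idea collapses on expanders and that the real work is a counting bound for low-treewidth candidates. However, the family you actually define does not admit such a counting bound, and the proposal stops precisely where the essential idea is needed.

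Concretely, take $\eta$ fixed, set $r=\Oh(\eta)$, and let $G$ be an $m\times m$ grid with a path of length $n$ attached at one corner, where $m>r$ and $n\gg m$. Then $\tw(G)=m>\eta$, and any treewidth-$\eta$ modulator $X$ lies entirely inside the grid. In your family you keep \emph{every} pair $(S,C)$ with $|S|\le r$ and $\tw(G[A_{S,C}])\le r$. For each $S$ contained in the path, the resulting path sub-intervals are kept while the grid component is rejected (its treewidth exceeds $r$); this already produces $\Theta(n^{r})$ members disjoint from $X$. On the other hand, a member hitting $X$ must reach into the grid, which forces either $S$ to meet the grid or $C$ to be a small grid fragment with boundary at most $r$; either way at least one of the $r$ separator slots is spent inside the grid, so the number of hitting members is $\Oh_\eta(n^{r-1})$. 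The hitting fraction is therefore $\Oh_\eta(1/n)$, not $2^{-\Oh(\eta)}$. In short, the counting lemma you hope for is false for the family you propose.

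The paper repairs this by keeping only \emph{$\vartriangleleft$-maximal} simple $\eta$-separations (pairs $(C,S)$ with $G[C]$ connected, $N(C)=S$, $|S|\le 2\eta+2$, $\tw(G[C])\le\eta$), then merging all maximal pairs that share the same $S$ into a single semi-simple separation, and finally inserting an edge-pair $(\emptyset,\{u,v\})$ for every edge not yet covered. Maximality is the decisive restriction: it guarantees that whenever $(C,S)$ avoids a modulator $X$, the set $S$ is an \emph{important} $(v,X)$-separator for every $v\in C$, and the $4^{|S|}$ bound on important separators of each size is what ultimately yields the $2^{\Oh(\eta)}$ constant. The rest of the proof is a marking argument on a width-$\eta$ tree decomposition of $G-X$ together with an LCA-closure bound (\Cref{lem:treewidth:lca}) to control the separations disjoint from the marked bags. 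None of these ingredients---maximality, merging, edge-pairs, or the link to important separators---appears in your plan, and without at least the first of them the approach cannot be completed.
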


Equipped with a modulator hitting
family $\pcal$ we can randomly sample an $\Oh(\eta)$-protrusion $A \in \pcal$ and then guess a piece of an optimal solution $X \cap A$ by considering $\Oh_\eta(1)$ options.
The concrete choice of the probability distribution will be adjusted to specific applications. 

This strategy bears some resemblance to random sampling of important separators~\cite[\S 8]{cygan2015parameterized} used in the FPT algorithm for {\sc Multicut} by Marx and Razgon~\cite{MarxR14}.
Moreover, the proof of \Cref{thm:main:cover} also involves important separators and structures somewhat similar to {\em important clusters} from~\cite{MarxR14}. 
We remark however that the setups of these two techniques are disparate when it comes to the separator size (constant vs. parameter), the desired subset property (bounded treewidth vs. disjointedness with a given set) and the objective (intersect a solution vs. cover vertices that are in some sense irrelevant).
%o the best of our knowledge, important separators have not been employed so far in the context of {\sc Planar \fcal-Deletion}.\micr{approximation algos? but we used them in "losing treewidth"}
\Cref{thm:main:cover} can also be compared to the algorithm by Fomin el al.~\cite{FominLMPPS22} that involves random sampling of low-treewidth subgraphs in an apex-minor-free graph to cover an unknown small connected vertex set.
That construction however relies on totally different techniques. 

Our new tool enables us to design a first constant-factor approximation for \planardel, extending the results by Fomin et al.~\cite{FominLMS12} and Gupta et al.~\cite{GuptaLLM019} to the weighted realm.
%In particular, this captures \twdel and
This resolves an open problem by Agrawal et al.~\cite{AgrawalLM0Z18} and Kim et al.~\cite{KimLT21}.
In the statement below, $\eta(\fcal)$ denotes the treewidth bound for the class of graphs excluding all $F \in \fcal$ as minors.

\begin{restatable}{theorem}{thmApx}
\label{thm:main:apx}
    Let $\fcal$ be a finite family of graphs containing a planar graph. % and $\eta$ be the treewidth bound
    The \fdel problem admits a randomized %polynomial-time 
    constant-factor approximation algorithm with running time $\Oh_\fcal\br{ |V(G)|^{\Oh(\eta(\fcal))}}$. 
\end{restatable}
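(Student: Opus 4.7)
The plan is to split the argument into two stages: a reduction of \fdel\ to \twdel, and then a randomized constant-factor approximation for \twdel\ driven by \Cref{thm:main:cover}. For the reduction, since $\fcal$ contains a planar graph, every $\fcal$-minor-free graph has treewidth at most $\eta = \eta(\fcal) = \Oh_\fcal(1)$, so any optimum $X^*$ for \fdel\ is in particular a feasible treewidth-$\eta$ modulator. Given a $c$-approximate solution $Y$ to \twdel\ on $(G,w)$, I would run a standard bounded-treewidth DP on $G - Y$ (with weighted handling as in \cite{BasteST23hittingIV, SauST22}) to solve \fdel\ optimally on $G - Y$ in time $n^{\Oh(\eta)}$, obtaining a set $Z$. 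Since $X^* \sm Y$ is feasible for \fdel\ on $G - Y$, we get $w(Z) \le w(X^*)$ and $w(Y) \le c\,w(X^*)$, hence $w(Y \cup Z) \le (c+1)\,w(X^*)$. So only the \twdel\ case needs genuinely new work.

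For the approximation of \twdel, I would use \Cref{thm:main:cover} as follows. While $\tw(G) > \eta$, compute the modulator hitting family $\pcal$ and sample an $\Oh(\eta)$-protrusion $A \in \pcal$; by the covering guarantee, $A$ meets a fixed optimum $X^*$ with probability at least $1/c = 2^{-\Oh(\eta)}$. Since $|\partial A| \le \Oh(\eta)$ and $\tw(G[A]) \le \Oh(\eta)$, a boundary-indexed DP on a width-$\Oh(\eta)$ tree decomposition of $G[A]$ enumerates all $2^{\Oh(\eta)}$ ``boundary signatures'' $\sigma$ that $X^* \cap A$ could induce on $\partial A$ (encoding which boundary vertices are deleted together with a bounded-width sketch of how the un-deleted part of $G[A]$ interfaces with $\partial A$) and computes, per signature, a minimum-weight realizer $Y_{A,\sigma} \sub A$. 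The signatures are defined so that, for the ``correct'' $\sigma^*$ induced by $X^* \cap A$, the swapped set $(X^* \sm A) \cup Y_{A,\sigma^*}$ remains a treewidth-$\eta$ modulator of $G$; in particular $w(Y_{A,\sigma^*}) + \opt(G - Y_{A,\sigma^*}) \le w(X^*)$. Committing $Y_{A,\sigma}$ for the $\sigma$ chosen by the appropriate rule and iterating yields a constant-factor approximation in expectation; the total work per round is $|\pcal| \cdot 2^{\Oh(\eta)} \cdot n^{\Oh(\eta)} = n^{\Oh(\eta)}$.

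The main obstacle is the definition of the boundary signatures: they must be $2^{\Oh(\eta)}$ in number, must capture enough global information that two realizers with the same signature can be freely exchanged inside any treewidth-$\eta$ modulator of $G$, and must admit an efficient minimum-weight realization via DP on a width-$\Oh(\eta)$ tree decomposition of $G[A]$. This is exactly the weighted analogue of the protrusion replacement problem identified as the bottleneck in \cite{KimLT21}; the sampling-based approach sidesteps the need to construct an explicit constant-size replacement gadget and instead performs an \emph{implicit} replacement by picking a minimum-weight representative out of a discrete family of signatures. A secondary difficulty is calibrating the sampling distribution and the per-round commitment rule so that the $2^{\Oh(\eta)}$ loss from signature enumeration and the $c = 2^{\Oh(\eta)}$ loss from protrusion sampling telescope into a constant approximation factor; here the freedom to tune the distribution, emphasized in the introduction, is exactly what should make the accounting go through.
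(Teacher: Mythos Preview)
Your high-level reduction (approximate a treewidth-$\eta$ modulator first, then run a bounded-treewidth DP on the remainder) matches the paper exactly, and your ``boundary signatures'' are precisely what the paper formalizes as an $(A,\fcal)$-\emph{exhaustive family} (\Cref{lem:exha:compute}), built from the $\equiv_h$-representatives of \cite{BasteST23hittingIV}. So the skeleton is right.

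The genuine gap is the piece you label a ``secondary difficulty'': the sampling distribution and the per-round commitment rule. This is not secondary---it is the core of the weighted argument. Uniform sampling of $A$ followed by any rule that can commit a ``wrong'' $Y_{A,\sigma}$ fails outright: a bad $\sigma$ may produce a set of arbitrarily large weight while $\opt(G)-\opt(G-Y_{A,\sigma})$ is zero, so $\ex{w(Y)}$ is not controlled by $\ex{\opt(G)-\opt(G-Y)}$ and nothing telescopes. The paper's fix (\Cref{lem:apx:step}) is to sample the \emph{pair} $(A,Y)$ jointly, over all $A\in\pcal$ and all non-empty $Y$ in the exhaustive family $\acal_A$, with probability proportional to $1/w(Y)$. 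With $W=\sum_{(A,Y)}1/w(Y)$ one gets $\ex{w(Y)}=|\bcal|/W$, while for each $A$ hitting $X^*$ the good replacement $Y_A$ contributes $w(Y_A)\cdot\frac{1}{w(Y_A)W}=\frac{1}{W}$ to $\ex{\opt(G)-\opt(G-Y)}$; the modulator-hitting bound $|\bcal|\le c_1c_2\,|\pcal_X|$ then gives the constant ratio. The iteration is closed by showing $P_i=w(X_i)+c\cdot\opt(G_i)$ is a supermartingale and applying Doob's optional-stopping theorem (\Cref{lem:apx:doob}). None of this is in your proposal, and without the $1/w(Y)$ weighting there is no reason the accounting goes through.

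A minor correction: your $2^{\Oh(\eta)}$ bound on the number of signatures is too optimistic. The exhaustive family size is governed by the representatives of \Cref{thm:prelim:representative} with $h=\max_{F\in\fcal(\eta)}|V(F)|$; the resulting constant is computable in $\eta$ but at least double-exponential. This does not affect the existence of a constant-factor approximation, only the value of the constant and the running-time bookkeeping.
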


Notably, all the known approximation algorithms for \planardel~\cite{AgrawalLM0Z18, Bansal0U17, FominL0Z20}
and its special cases~\cite{bafna19992, becker1994approximation, fiorini2010hitting, fujito1997primal, KimLT21} rely on LP techniques\footnote{The first algorithms for \wei\fvs~\cite{bafna19992, becker1994approximation}
were originally phrased in a combinatorial language but later Chudak et al.~\cite{chudak1998primal} gave a simplified analysis of them in terms of 
the primal-dual method.} whereas our algorithm is purely combinatorial.
In particular, we do not follow the roadmap suggested by Kim et al.~\cite{KimLT21}.
Circumventing protrusion replacement helps us avoid another issue troubling that technique, namely non-constructivity.
Instead, our treatment of protrusions relies on explicit bounds on the sizes of minimal representative boundaried graphs in a certain equivalence relation by Baste, Sau, and Thilikos~\cite{BasteST23hittingIV}.
The values of the approximation factor 
and the constant hidden in the $\Oh_\fcal$-notation
can be retraced from that work but they are at least double-exponential in $\eta(\fcal)$.
%the maximal size of a graph in~$\fcal$.

As another application of \cref{thm:main:cover}, we present a single-exponential FPT algorithm for \planardel parameterized by the solution size $k$. %, improving upon the time complexity $2^{\Oh_\fcal(k\log k)}\cdot n^{\Oh(1)}$~\cite{BasteST23hittingIV}.
As noted before, such a result could possibly be achieved by combining the protrusion decomposition by Kim et at.~\cite{kim2015linear} with the exhaustive families described in \Cref{sec:exha}, without using \Cref{thm:main:cover}.
Nevertheless, we include it as an illustration of versatility of our approach.

\begin{restatable}{theorem}{thmFpt}
\label{thm:main:fpt}
    Let $\fcal$ be a finite family of graphs containing a 
    planar graph. % and $\eta$ be the treewidth bound 
    %There is a constant $c$ for which a $k$-optimal solution to \fdel %on an $n$-vertex graph 
    A $k$-optimal solution to
     \fdel
     can be computed in randomized time $2^{\Oh_\fcal(k)}\cdot |V(G)|^{\Oh(\eta(\fcal))}$.%$c^k\cdot n^{\Oh(\eta(\fcal))}$.\micr{$2^{\Oh(k)}$}
\end{restatable}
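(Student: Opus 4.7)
The plan is to combine the modulator hitting family from \Cref{thm:main:cover} with a branching procedure guided by the exhaustive families of protrusion deletions constructed in \Cref{sec:exha}. Write $\eta = \eta(\fcal)$. The algorithm maintains a pair $(G', k')$, aiming to compute a minimum-weight $Z \subseteq V(G')$ of size at most $k'$ with $G' - Z$ excluding every $F \in \fcal$ as a minor. Every recursive call first uses a constant-factor treewidth approximation to test whether $\tw(G') \le \eta$; if so, the problem is solved by standard weighted $\fcal$-minor-hitting dynamic programming over a tree decomposition of width $\Oh(\eta)$, in time $\Oh_\fcal(1)\cdot|V(G')|^{\Oh(1)}$. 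Otherwise, \Cref{thm:main:cover} yields a modulator hitting family $\pcal$ of $G'$ in time $|V(G')|^{\Oh(\eta)}$.

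In the recursive step, a single protrusion $A \in \pcal$ is sampled uniformly at random, and the algorithm branches over all non-empty $Y$ from an exhaustive family $\mathcal{F}_A$ of size $2^{\Oh_\fcal(\eta)}$, recursing on $(G' - Y, k' - |Y|)$ and returning $Y \cup Z$ of minimum total weight across all branches. The relevant property of $\mathcal{F}_A$, furnished by \Cref{sec:exha}, is that for any $k'$-optimal solution $X^\star$ with $X^\star \cap A \neq \emptyset$ there is some $Y^\star \in \mathcal{F}_A$ with $|Y^\star|\le |X^\star \cap A|$ and $w(Y^\star) \le w(X^\star \cap A)$ such that $(X^\star \setminus A) \cup Y^\star$ is again $k'$-optimal for $(G', k')$. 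Consequently, whenever the sampled $A$ intersects $X^\star$, the branch taking $Y = Y^\star$ strictly reduces the cardinality budget while preserving the optimum value, and by induction returns an optimal completion for the subproblem, which combined with $Y^\star$ realises a $k'$-optimal solution for $(G', k')$.

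For the complexity, let $c = 2^{\Oh(\eta)}$ be the hitting constant from \Cref{thm:main:cover}. The recursion has depth at most $k$ and branching factor $2^{\Oh_\fcal(\eta)}$, so one run visits $2^{\Oh_\fcal(k)}$ leaves, each spending $|V(G)|^{\Oh(\eta)}$ time to rebuild $\pcal$ and run the base DP. The probability that every sample along the correct root-to-leaf path lands on a solution-intersecting protrusion is at least $c^{-k} = 2^{-\Oh_\fcal(k)}$, so repeating the whole procedure $2^{\Oh_\fcal(k)}$ times boosts the success probability to a constant, yielding total running time $2^{\Oh_\fcal(k)}\cdot|V(G)|^{\Oh(\eta)}$.

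The main obstacle I anticipate is arranging the exhaustive families so that their replacement guarantee respects both the weight and the size cap $|Y^\star|\le |X^\star\cap A|$: classical protrusion-replacement equivalences are designed around validity (and in some references weight), not cardinality, whereas the $k$-optimality setting forces us to track the number of deleted vertices inside the protrusion. Handling this calls for refining the boundaried-graph equivalence so that a canonical representative minimises $(w(Y), |Y|)$ lexicographically, and then checking that this refinement still leaves only $2^{\Oh_\fcal(\eta)}$ equivalence classes, as needed for the branching analysis.
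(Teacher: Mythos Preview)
Your overall plan---sample a protrusion from the modulator hitting family, enumerate a short list of candidate partial solutions inside it, and recurse with a reduced budget---is exactly the skeleton the paper uses. The gap is precisely the obstacle you flag at the end, and your proposed fix does not close it.

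You require a family $\mathcal{F}_A$ of size $2^{\Oh_\fcal(\eta)}$ that, for \emph{every} $X^\star$, contains some $Y^\star$ with \emph{both} $|Y^\star|\le|X^\star\cap A|$ and $w(Y^\star)\le w(X^\star\cap A)$. For a fixed ``outside type'' this is a Pareto-domination requirement in the two objectives (size, weight), and the Pareto front of valid partial deletions inside a protrusion can have one point per achievable cardinality, i.e.\ up to $k$ points. A lexicographic minimiser picks a single extreme of this front: minimising $(w(Y),|Y|)$ lexicographically returns the overall cheapest $Y$, which may have $|Y|\gg|X^\star\cap A|$ when $X^\star\cap A$ happens to be small but heavy; minimising $(|Y|,w(Y))$ fails symmetrically. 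So a constant-size $\mathcal{F}_A$ with your guarantee does not exist in general, and taking the full Pareto front blows the branching factor up to $\Theta(k)$, yielding $2^{\Oh(k\log k)}$ rather than $2^{\Oh(k)}$.

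The paper sidesteps this by decoupling the cardinality guess from the weight optimisation: it samples $\ell$ (a guess for $|X^\star\cap A|$) from a geometric distribution, builds the $(A,\fcal,\ell)$-exhaustive family (which has constant size and guarantees $|Y^\star|\le\ell$), samples $Y$ uniformly from it, and recurses on $(G-Y,\,k-\ell)$ rather than $(G-Y,\,k-|Y|)$. The point of the geometric distribution is that the success probability for guessing $\ell$ is at least $2^{-\ell}$, and since the $\ell$'s along the correct path sum to at most $k$, the product of these probabilities is at least $2^{-k}$; combined with the $c_1^{-1}$ and $c_2^{-1}$ factors per step this gives overall success probability $(2c_1c_2)^{-k}$. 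Equivalently, you could branch deterministically over $\ell\in[1,k']$ at each node and recurse with budget $k'-\ell$: the recurrence $T(k')=c_2\sum_{\ell=1}^{k'}T(k'-\ell)$ solves to $T(k')\le(1+c_2)^{k'}$, which is single-exponential. Either way, the missing ingredient is handling $\ell$ as a separate guess and charging the budget against $\ell$ rather than against $|Y|$.
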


 This algorithm returns a $k$-optimal solution with constant probability; otherwise it may either output a solution of larger weight or report a failure.

\section{Techniques}
 \label{sec:outline}

We will now discuss how \Cref{thm:main:cover} facilitates designing approximation algorithms and compare our techniques to the previous works.

\paragraph{Protrusion replacement.}
The starting point is %of our investigation was 
the following randomized algorithm for unweighted \fvs.
First, we can perform a simple preprocessing to remove vertices of degree 1 and 2. %after which the minimum degree in a graph is at least 3.
Then the number of edges in an $n$-vertex graph $G$ is at least $\frac{3n}{2}$.
But when $X$ is a feedback vertex set then $G-X$ is a forest, which can have at most $n-1$ edges.
Therefore at least $\frac{1}{3}$-fraction of edges is incident to $X$.
We can thus sample a random edge and remove
one of its endpoints, also chosen at random.
The expected number of steps before
we hit a vertex from $X$ is at most 6 so this algorithm yields 6-approximation in expectation.

%\paragraph{Protrusion replacement.}
Fomin et al.~\cite{FominLMS12} invented a far-reaching generalization of this idea for the case when $X$ is a treewidth-$\eta$ modulator, i.e., $\tw(G-X) \le \eta$ for some constant $\eta \in \nn$.
Their preprocessing procedure does not merely remove vertices of low degree but it detects and simplifies %structures called 
$r$-protrusions.
A simple yet crucial observation is that the subgraph
 $G-X$ can be covered by $\Oh(|N(X)|)$ many $r$-protrusions for $r = 2\eta + 2$.
Hence if we could only ensure that each $r$-protrusion has size $\Oh_r(1)$ then again a constant fraction of edges would be incident to $X$ and we could apply the same argument as above.
This is what protrusion replacement is about: given an $r$-protrusion $A$ we would like to perform a ``surgery'' on $G$ to replace $G[A]$ with a subgraph of size $\Oh_r(1)$ that exhibits the same behavior.

The property of ``being $\fcal$-minor-free'' (as well as ``having bounded treewidth'') can be expressed in counting monadic
second-order (CMSO) logic.
For such graph properties, Courcelle's Theorem~\cite[Lemma 3.2]{bodlaender2016meta}
guarantees that subgraphs of constant-size boundary can be divided into a constant number of equivalence classes and replacing a subgraph with an equivalent one does not affect the property in question.
But this is not enough as we do not know which vertices from such a subgraph are being removed by a solution.
Protrusion replacement relies crucially on the observation that {\bf it is cheap to take into the solution the entire boundary of an $r$-protrusion.}
For a broad spectrum of problems this
property (formalized as {\em strong monotonicity}~\cite{bodlaender2016meta, FominLMS12})
implies that
 the difference in cost between any two solutions over an  $r$-protrusion is bounded by $\Oh_r(1)$.
Consequently,
there may be only finitely many non-equivalent ``behaviors'' of an $r$-protrusion.
For each such equivalence class there is some $r$-protrusion of minimal size and it can be safely used as a replacement.
Then the maximal size of such a replacement is some (implicit) function of $r$. % and after performing such replacements exhaustively the graph enjoys the desired structure.

This technique breaks down for weighted graphs because it is no longer true that removing the boundary of an $r$-protrusion is cheap.
Here, it is  unclear whether the number of different  ``behaviors'' can be bounded in terms of $r$ alone.

\paragraph{Sampling with weights.}
Before we deal with weighted protrusions, let us sketch how to adapt the algorithm for \fvs to the weighted setting.
First, we can still safely remove vertices of degree 1.
Even though we cannot remove all vertices of degree 2, we can preprocess the graph to ensure that each such vertex is adjacent to higher degree vertices.
After such preprocessing the number of edges is at least $\frac{5n}{4}$ and so
at least $\frac{1}{5}$-fraction of edges is incident to the optimal feedback vertex set $X$.
Observe that sampling a random endpoint of a random edge
is equivalent to sampling a vertex proportionally to its degree so $\sum_{v \in X} \deg(v) \ge \frac{1}{10} \cdot \sum_{v \in V(G)} \deg(v)$.
It is  futile though to sample vertices according to the same distribution as before because the solution may consist of light-weighted vertices only and guessing just a few heavy vertices can blow up the solution weight.
As we want to make the light-weighted vertices more preferable, we will sample a vertex $v$ of weight $w(v)$ with probability proportional to $\frac{\deg(v)}{w(v)}$.
Let $W = \sum_{v \in V(G)} \frac{\deg(v)}{w(v)}$.
Then the expected weight of the chosen vertex equals
\[\ex{w(v)} = \sum_{v \in V(G)} w(v) \cdot \frac{\deg(v)}{w(v) \cdot W} = \sum_{v \in V(G)} \frac{\deg(v)}{W}.\]
On the other hand, the expected value of the {\bf revealed part of the solution} equals 
\[\ex{w(v) \cdot 1_{v\in X}} = \sum_{v \in X} w(v) \cdot \frac{\deg(v)}{w(v) \cdot W} = \sum_{v \in X} \frac{\deg(v)}{W} \ge \frac{1}{10} \cdot \ex{w(v)}.\]
Even though the expected number of steps before we hit a vertex from $X$ might be large, the expected decrease of the optimum in each step is within a constant factor from the expected cost paid by the solution in this step.
Such an algorithm can be still proven to yield 10-approximation via a martingale argument.

\paragraph{Sampling with protrusions.}
The main obstacle remains how to deal with a situation when only a small fraction of edges is incident to a solution because of large protrusions.
Let $X$ be a treewidth-$\eta$ modulator in $G$ of minimum weight and $r = \Oh(\eta)$.
We exploit the structure of $G-X$ in a different way: since $G-X$ can be covered by few $r$-protrusions, {\bf we will sample a set $A$ from a family of $r$-protrusions that are in some sense maximal and
hope that we pick a set $A$ that intersects $X$.}
This is exactly the idea formalized in \cref{thm:main:cover}. %, which constitutes our main technical contribution.
Next, we try to guess the set $X \cap A$ or some subset $Y \sub A$ which is ``at least as good''. 
Even though the number of different ``behaviors'' of a weighted $r$-protrusion may be a priori unbounded, {\bf we can still use Courcelle's Theorem %\cite{courcelle2012graph} 
to bound the number of different ``types'' of the partial solution $X \cap A$}. % because the interface between $A$ and the rest of the graph has constant size.
%Specifically, there are explicit bounds on the sizes of representatives for each such ``type'' of a partial solution~\cite{BasteST23hittingIV}.
%Notably, these bounds work for every family $\fcal$ even when it contains disconnected graphs.
Intuitively speaking, it is easier to enumerate ``types of a partial solution'' than ``types of a protrusion'' because in the latter case we need to understand all the behaviors of subgraphs attainable after removal of an unknown vertex~set.%\micr{refs}

Suppose now that we are given an $r$-protrusion $A$ that is promised to intersect $X$.
Using the ``graph gluing'' technique~\cite{JansenK021, kim2015linear}, we enumerate a family $\acal \sub 2^A$ of size $\Oh_r(1)$ %containing non-empty subsets of $A$ 
such that either $X \cap A \in \acal$ or there exists $Y \in \acal$ for which $(X \sm A) \cup Y$ is also an optimal solution.
Next, we sample a random non-empty set from $\acal$ in such a way that the probability of picking a particular $Y \in \acal$ is proportional to $\frac{1}{w(Y)}$.
By the same calculation as before, the expected value of the revealed part of a solution is within a multiplicative factor $|\acal| = \Oh_r(1)$ from $\ex{w(Y)}$, which is the key observation to design a constant-factor approximation.
It is crucial that we can assume $X \cap A \ne \emptyset$ because the proposed sampling procedure makes sense only for sets $Y$ with a positive weight.

 \begin{figure} 
\centering
\includegraphics[scale=0.95]{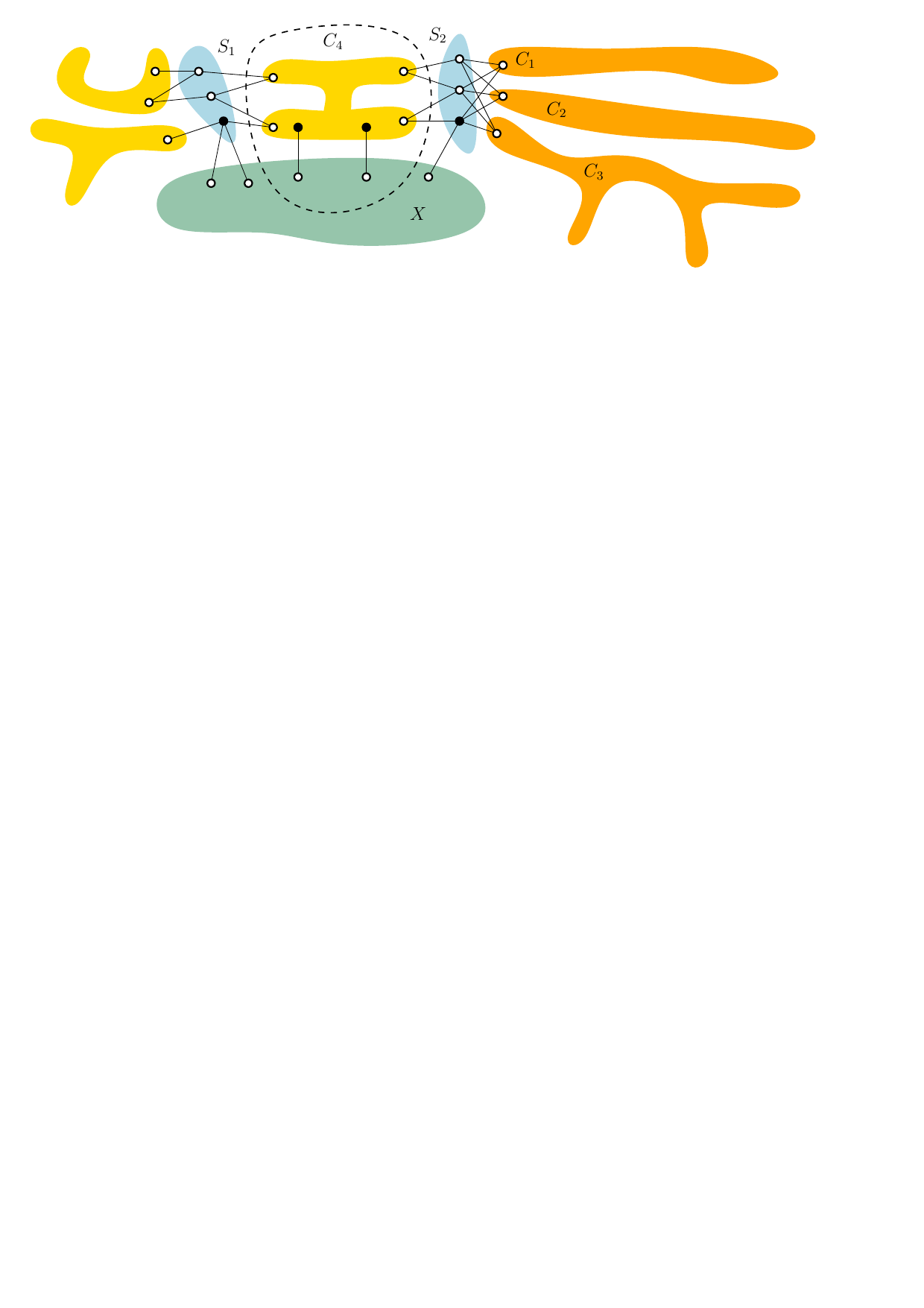}
\caption{
The set $X$ is a treewidth-$\eta$ modulator and $N(X)$ is represented by black discs.
The blue sets $S_1, S_2$ are two bags
in a tree decomposition of $G-X$ that contain the marked vertices.
The yellow and orange sets are the connected components of $G - (X \cup S_1 \cup S_2)$.
For $i = 1,2,3$ we have that $G[C_i]$ is connected, $\tw(G[C_i]) \le \eta$, and $N(C_i) = S_2$,
hence $(C_i,S_2)$ is a simple $\eta$-separation.
The pair $(C_1 \cup C_2 \cup C_3,\, S_2)$ forms a semi-simple $\eta$-separation.
The vertex set $C_4$ is enclosed by the dashed cycle.
The simple $\eta$-separation $(C_4, N(C_4))$ 
covers two edges outgoing from $X$ and so the set $N(C_4) \sm X$ has been marked.
In the proof we take advantage of the fact that there can be no inclusion-wise maximal simple $\eta$-separation $(C,S)$ for which $C$ is contained in the yellow subset of $C_4$.
} \label{fig:outline:cover}
\end{figure}

\paragraph{Main proof sketch.}
Finally, we provide some intuition behind the proof of \Cref{thm:main:cover}.
It is more convenient for us not to work directly with protrusions but with pairs $(C,S)$ of vertex sets, where $N(C) \sub S$ and $\tw(G[C]) + |S|$ is bounded.
For technical reasons, we define an {\em $\eta$-separation} as such a pair in which $\tw(G[C]) \le \eta$ and $|S| \le 2\eta + 2$.
Note that then $C \cup S$ forms a $(3\eta+2)$-protrusion.
We call an $\eta$-separation {\em simple} when $G[C]$ is connected and $S = N(C)$.

We enumerate the family of inclusion-wise maximal simple $\eta$-separations.
Since a bounded-treewidth subgraph may contain a large subfamily of such pairs $(C_i,S_i)$ for which $S_i$ coincide (see \Cref{fig:outline:cover}), we need to merge such a subfamily into a single $\eta$-separation, that is called {\em semi-simple}.
These structures
can be compared to {\em important clusters} from the work~\cite{MarxR14}.
Next, we would like every edge in the graph to be covered by some  $\eta$-separation in the family, so if some edge $uv$ has not been covered so far, we insert the pair $(\emptyset, \{u,v\})$ to the family.
Note that this forms a valid $\eta$-separation for any $\eta \ge 1$.
Let $\scal$ denote the family constructed according to this procedure.

Consider 
an unknown treewidth-$\eta$ modulator $X$ is $G$ and let $\scal^-_X \sub \scal$ denote the subset of those pairs $(C,S)$ for which $X \cap (C \cup S) =\emptyset$.
Our goal is to show that $|\scal^-_X| \le \alpha \cdot |\scal|$ for some $\alpha(\eta) < 1$.
%We want to show that the fraction of pairs $(C,S) \in \scal$ for which $X \cap (C \cup S) =\emptyset$ is bounded by some constant $\alpha < 1$.
First, for simplicity, assume that for every edge $uv$ outgoing from $X$ the pair $(\emptyset, \{u,v\})$ belongs to~$\scal$.
Then it suffices to prove that the size of $\scal^-_X$ %that fit into $G-X$ 
is bounded by $\Oh(|N(X)|)$.
This can be done using known techniques by analyzing a tree decomposition of $G-X$.
However, the assumption above is clearly unrealistic because we do not know the set $E(X,V(G)\sm X)$ and in general we cannot insert into $\scal$ pairs $(\emptyset, \{u,v\})$ covered by larger $\eta$-separations as we want to keep the number of such pairs in $\scal^-_X$ bounded. 

The problematic case occurs when there is some $(C,S) \in \scal$ that covers many edges outgoing from $X$ and so the comparison to $|N(X)|$ is no longer meaningful.
Our strategy is to {\bf mark the vertices of $G-X$ that appear as $S$-vertices 
in some $(C,S) \in \scal$ covering an edge $e \in E(X,V(G)\sm X)$.}
We show that in fact the size of $\scal^-_X$ %that have an~empty intersection with $X$
is proportional to the number of marked vertices, which is %proportional to the number of $\eta$-separation in
bounded by 
$\Oh(|\scal \sm \scal^-_X|)$.
The proof is based on a case analysis inspecting the relative location of $(C,S) \in \scal^-_X$ with respect to the marked set.
We employ a win-win argument: % which says that
in the problematic case the existence of a large simple $\eta$-separation covering many edges in $E(X,V(G)\sm X)$ implies that a large piece of $G-X$ cannot contain simple $\eta$-separations that are inclusion-wise maximal (see \Cref{fig:outline:cover}).
These arguments allow us to bound the number of pairs $(C,S) \in \scal^-_X$ %within $G-X$ 
for which $C$ is disjoint from the marked set.
To handle the remaining pairs, we make a connection between simple $\eta$-separations and important separators.
A~key observation is that when
$(C,S)$ is an inclusion-wise maximal simple  $\eta$-separation with  $X \cap (C \cup S) =\emptyset$ and $v \in C$ then $S$ forms an important $(v,X)$-separator.
The number of such separators of size $\le 2\eta + 2$ is bounded by a function of $\eta$ for each marked vertex $v \in V(G-X)$ which results in the claimed bound.

\section{Preliminaries}
\label{sec:prelim}

\paragraph{Graph notation.}

We consider simple undirected graphs without self-loops.
We follow the standard graph terminology from Diestel's book~\cite{diestel} and here we only list the most important notation.
For a vertex subset $A \sub V(G)$ we denote by $N_G(A)$ the {\em open neighborhood} of $A$, that is, the set of vertices from $V(G) \sm A$ that have a neighbor in $A$.
The {\em closed neighborhoo}d $N_G[A]$ is defined as $A \cup N_G(A)$.
{We sometimes omit the subscript when $G$ is clear from context.}
The {\em boundary} of $A$ is the set $\partial_G(A) = N_G(V(G) \sm A)$.
For a vertex $v \in V(G)$ the {\em reachability set} $R_G(v)$ of $v$ is the set of all vertices lying in the same connected component of $G$ as $v$.
We use shorthand $G-A$ for the graph $G[V(G) \setminus A]$.
A graph $G$ is {\em node-weighted} if it is equipped with a function $w \colon V(G) \to \rr^+$.
%For $v \in V(G)$, we write $G-v$ instead of $G-\{v\}$.
%For $e \in E(G)$ we denote by $G \setminus e$ a graph obtained from $G$ by removing the edge $e$.
%{For~$A \subseteq E(G)$ we denote by~$G \setminus A$ the graph with vertex set~$V(G)$ and edge set~$E(G) \setminus A$.

\paragraph*{Important separators.}
We only define separators in an asymmetric special case, sufficient for our needs.
Let $X \sub V(G)$ and $v \in V(G) \sm X$.
%Let $G$ be a graph and let $X,Y \subseteq V(G)$ be two {disjoint} sets of vertices. 
A vertex set $S \subseteq V(G) \setminus (X \cup \{v\})$ is an {\em $(v,X)$-separator} if $R_{G-S}(v) \cap X = \emptyset$.
An $(v,X)$-separator $S$ is {\em inclusion-minimal} if no proper subset of $S$ is a $(v,X)$-separator.
%and let $R$ be the set of vertices reachable from $X$ in $G - S$.
{We say that $S$ is an \emph{important $(v,X)$-separator} if it is inclusion-minimal and there is no $(v,X)$-separator $S' \subseteq V(G) \setminus (X \cup \{v\})$ such that $|S'| \leq |S|$ and $R_{G-S}(v) \subsetneq R_{G-S'}(v)$.}

\begin{lemma}[{\cite[Lemma 8.51]{cygan2015parameterized}}]\label{lem:prelim:important}
For any $X \sub V(G)$, $v \in  V(G) \sm X$, and $k \in \nn$, there are at most $4^k$
 important $(v,X)$-separators of size at most $k$. % is bounded by $4^k$.
\end{lemma}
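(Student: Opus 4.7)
The plan is to follow the standard Marx-style recursive branching argument for important separators. Define the potential $\mu = 2k - \lambda$, where $\lambda = \lambda_G(v, X)$ denotes the size of a minimum $(v,X)$-separator in $G$. I would prove by induction on $\mu \ge 0$ that the number of important $(v,X)$-separators of size at most $k$ is bounded by $2^{\mu}$; since $\lambda \ge 0$, this yields the desired bound of $2^{2k} = 4^k$. The base case is handled by the observation that if $\lambda > k$ then no $(v,X)$-separator of size at most $k$ exists at all.

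For the inductive step, I would first compute a maximum $(v,X)$-flow and read off the unique minimum $(v,X)$-separator $S^*$ that maximizes $R_{G-S^*}(v)$ among all minimum $(v,X)$-separators; existence and uniqueness of such a ``closest to $X$'' min-cut is a standard consequence of submodularity of cut functions. Now fix any vertex $u \in S^*$ and branch according to whether $u$ lies in the target important separator $S$. In the first branch, assuming $u \in S$, one recurses with the instance $(G, v, X \cup \{u\}, k-1)$: each important separator returned by the recursion, together with $\{u\}$, yields an important separator of the original instance, and since $k$ drops by one while $\lambda$ can drop by at most one, $\mu$ drops by at least one. In the second branch, assuming $u \notin S$, one recurses on the graph obtained by merging $u$ with $v$, keeping the budget $k$. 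The crucial claim here is that the min $(v,X)$-cut in the modified graph is strictly larger than $\lambda$: a $(v,X)$-separator of size $\le \lambda$ avoiding $u$ would itself be a min-cut in $G$, and by submodularity the union of its $v$-side with $R_{G-S^*}(v)$ would produce a min-cut whose reachable region properly contains $R_{G-S^*}(v)$, contradicting the maximality of $S^*$. Hence $\lambda$ strictly increases and again $\mu$ drops by at least one. Since every important separator of size $\le k$ is produced in exactly one branch, the total count satisfies the recurrence $T(\mu) \le 2 \cdot T(\mu-1)$, giving $T(\mu) \le 2^\mu$.

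The main obstacle in carrying out this plan is verifying the two submodularity-based structural claims concerning $S^*$, namely the uniqueness of the ``closest to $X$'' minimum separator and the strict increase of $\lambda$ after forbidding $u \in S^*$ from the separator. Both follow from the fact that over vertex sets $A$ containing $v$ and disjoint from $X$, the function $f(A) = |N(A)|$ is submodular, so whenever $f(A_1) = f(A_2) = \lambda$ one can conclude $f(A_1 \cup A_2) = f(A_1 \cap A_2) = \lambda$ and merge the corresponding cuts accordingly. A minor bookkeeping point is to check in the first branch that the combination of an important separator of the reduced instance with $\{u\}$ is indeed important (not merely inclusion-minimal) in the original instance; this is clean because $u$ lies outside the reachable region considered in the recursion, so any would-be dominating separator in $G$ would also dominate in the reduced instance, contradicting importance there.
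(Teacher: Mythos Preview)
The paper does not prove this lemma; it is quoted from \cite[Lemma~8.51]{cygan2015parameterized} and used as a black box. Your proposal is exactly the standard branching argument given in that reference (originating with Marx), so there is nothing to compare against and your plan is correct.

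One small bookkeeping remark: in the first branch the direction you actually need for the \emph{upper bound} is the reverse of what you wrote. You should argue that every important $(v,X)$-separator $S$ with $u\in S$ satisfies that $S\setminus\{u\}$ is an important $(v,X\cup\{u\})$-separator of size $\le k-1$; this gives the injection that bounds the count. Whether, conversely, every important separator of the reduced instance plus $\{u\}$ is important in the original is irrelevant for the counting (it matters only if you also want an enumeration algorithm without false positives). Your final paragraph hints at this but has the implication oriented the wrong way.
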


\paragraph*{Treewidth.} %and the LCA closure.}
A tree decomposition of a graph $G$ is a pair $(T, \chi)$ where~$T$ is a rooted tree, and~$\chi \colon V(T) \to 2^{V(G)}$, such that:
\begin{enumerate}
    \item For each~$v \in V(G)$ the nodes~$\{t \mid v \in \chi(t)\}$ form a {non-empty} connected subtree of~$T$. 
    \item For each edge~$uv \in E(G)$ there is a node~$t \in V(G)$ with~$\{u,v\} \subseteq \chi(t)$.
\end{enumerate}

The \emph{width} of a tree decomposition is defined as~$\max_{t \in V(T)} |\chi(t)| - 1$. The \emph{treewidth} of a graph~$G$, denoted $\tw(G)$, is the minimum width of a tree decomposition of~$G$.

\begin{lemma}[{\cite[\S 7]{cygan2015parameterized}}]\label{lem:prelim:sparse}
    For every graph $G$ it holds that $|E(G)| \le \tw(G) \cdot |V(G)|$.
\end{lemma}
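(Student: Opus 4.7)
The plan is to prove this classical bound by induction on $|V(G)|$, leveraging the standard fact that a tree-decomposed graph always has a low-degree vertex.

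First I would handle the trivial base case where $|V(G)| \le 1$, for which $|E(G)| = 0$ and the bound holds vacuously. For the inductive step, fix a graph $G$ with $|V(G)| \ge 2$ and set $k = \tw(G)$. The key combinatorial lemma I will establish is: any graph of treewidth $k$ with at least one vertex contains a vertex $v$ with $\deg_G(v) \le k$. Given an optimal tree decomposition $(T, \chi)$ of width $k$, I would consider a leaf bag $\chi(t)$ of $T$. If every vertex of $\chi(t)$ also appears in the bag of $t$'s parent, then $t$ could be removed from $T$ to yield a strictly smaller tree decomposition of the same width; iterating this reduction eventually produces a leaf bag containing some vertex $v$ that appears only in $\chi(t)$. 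By the edge-coverage property of tree decompositions, every neighbor of $v$ must lie in $\chi(t)$, hence $\deg_G(v) \le |\chi(t)| - 1 \le k$.

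Having located such a $v$, I would remove it and apply the inductive hypothesis to $G - v$. Since $\tw(G-v) \le \tw(G) = k$ (deleting a vertex cannot increase treewidth, as one can restrict $\chi$ by removing $v$ from every bag), we obtain
\[
|E(G-v)| \le k \cdot (|V(G)| - 1).
\]
Adding back the at most $k$ edges incident to $v$ yields $|E(G)| \le k \cdot |V(G)|$, completing the induction.

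There is no real obstacle here; the only subtle point is verifying the existence of a vertex $v$ appearing in a single bag whose neighbors are all inside that bag, which is handled by the leaf-pruning argument described above. Everything else is a clean one-line induction.
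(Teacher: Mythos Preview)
Your argument is correct and is the standard induction proof of this fact. Note, however, that the paper does not supply its own proof of \Cref{lem:prelim:sparse}; it simply cites the result from \cite[\S 7]{cygan2015parameterized} as a known preliminary, so there is nothing to compare against.
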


A set  $X \sub V(G)$ is a {\em treewidth-$\eta$ modulator} in $G$ if $\tw(G-X) \le \eta$.
We say that a vertex set $A \sub V(G)$ forms an {\em $\eta$-protrusion} in $G$ if $\tw(G[A]) \le \eta$ and $|\partial_G(A)| \le \eta$.

\begin{definition}\label{def:prelim:cover}
    Let $\eta,r,c \in \nn$ and $G$ be a graph.
    We say that a family $\pcal \sub 2^{V(G)}$ is a  {\em $(\eta,r,c)$\cover} for $G$ if $(i)$ $\pcal$ is non-empty, $(ii)$ each $A \in \pcal$ is an $r$-protrusion, and $(iii)$ for every treewidth-$\eta$ modulator $X \sub V(G)$ the number of sets $A \in \pcal$ for which $A\cap X \ne \emptyset$ is at least $|\pcal|\, /\, c$.
\end{definition}

\paragraph*{LCA closure.}
We will use the following concept in the  analysis of tree decompositions.

\begin{definition}\label{def:treewidth:lca}
Let $T$ be a rooted tree and $S \subseteq V(T)$. % be a set of vertices in $T$. 
We define the least common ancestor of (not necessarily distinct) $u, v \in V(T)$, denoted as {$\mathsf{LCA}(u, v)$}, to be the deepest node $x$ which is an ancestor of both $u$ and $v$.
The LCA closure of $S$ is defined as
\[
\overline{\mathsf{LCA}}(S) = \{\mathsf{LCA}(u, v): u, v \in S\}.
\]
\end{definition}

\begin{lemma}[{\cite[Lemma 1]{FominLMS12}}]\label{lem:treewidth:lca}
Let $T$ be a rooted tree, $S \subseteq V(T)$, and $L = \overline{\mathsf{LCA}}(S)$. 
Then $|L| \le 2|S|$ and for every connected component $C$ of $T-L$, $|N_T(C)| \le 2$.
\end{lemma}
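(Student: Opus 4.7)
The plan is to establish the two conclusions separately using standard properties of the LCA operation on rooted trees. First I would prove the cardinality bound by reducing the LCA-closure to consecutive pairs in a preorder traversal; then I would deduce the neighborhood bound from the fact that $L$ is itself closed under taking LCAs.

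For the size bound $|L|\le 2|S|$, I would fix a preorder traversal of $T$ and list the elements of $S$ in this order as $s_1,s_2,\ldots,s_k$. The key folklore fact (underlying the Euler-tour approach to LCA queries) is that for any $i<j$, the vertex $\mathsf{LCA}(s_i,s_j)$ coincides with the shallowest vertex among the consecutive LCAs $\mathsf{LCA}(s_i,s_{i+1}),\ldots,\mathsf{LCA}(s_{j-1},s_j)$. I would verify this by induction on $j-i$: letting $\ell_m := \mathsf{LCA}(s_m,s_{m+1})$, if $\ell_{m^\star}$ is shallowest for $m^\star\in\{i,\ldots,j-1\}$, then $s_i$ and $s_j$ both lie in the subtree rooted at $\ell_{m^\star}$, while every other $\ell_{m'}$ is a descendant of $\ell_{m^\star}$, so $\ell_{m^\star}$ must be exactly $\mathsf{LCA}(s_i,s_j)$. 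Consequently $L\subseteq S\cup\{\ell_1,\ldots,\ell_{k-1}\}$, and so $|L|\le 2k-1\le 2|S|$.

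For the neighborhood bound I would first observe that $L$ is itself closed under LCA: given $u=\mathsf{LCA}(s_a,s_b)$ and $v=\mathsf{LCA}(s_c,s_d)$ with $s_a,s_b,s_c,s_d\in S$, their LCA equals the LCA of the whole four-element set $\{s_a,s_b,s_c,s_d\}$, which by the previous paragraph is again an LCA of some pair from $S$ and hence lies in $L$. Now fix a connected component $C$ of $T-L$ and let $r$ be its topmost vertex. Every neighbor of $C$ in $T$ lies in $L$ and falls into one of two categories: (i) the parent of $r$ in $T$, contributing at most one vertex; and (ii) vertices of $L$ that are children of some $c\in C$ (by maximality of $C$, every child of a $C$-vertex lying outside $C$ must be in $L$). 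To bound category (ii) by one, suppose for contradiction that $u_1,u_2$ are two distinct such neighbors, children of $c_1,c_2\in C$ respectively. Then $\mathsf{LCA}(u_1,u_2)=\mathsf{LCA}(c_1,c_2)$ lies on the $c_1$–$c_2$ path in $T$; since $C$ is connected in $T$, that path is contained in $C$, so $\mathsf{LCA}(u_1,u_2)\in C$. On the other hand, $u_1,u_2\in L$ and $L$ is LCA-closed, so $\mathsf{LCA}(u_1,u_2)\in L$, contradicting $C\cap L=\emptyset$. Hence $|N_T(C)|\le 2$.

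The main obstacle is really just the consecutive-preorder identity for LCAs; once that is in hand, LCA-closedness of $L$ follows as a one-line corollary, and the neighborhood bound reduces to a clean case analysis on how a component of $T-L$ attaches to $L$ (one upward attachment through the parent of $r$, at most one downward attachment by the contradiction argument).
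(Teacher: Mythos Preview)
The paper does not prove this lemma; it is quoted verbatim from \cite{FominLMS12} and used as a black box, so there is no in-paper argument to compare against. Your approach is the standard one and is correct: the preorder/consecutive-LCA identity gives $|L|\le 2|S|-1$, LCA-closedness of $L$ follows immediately, and the two-neighbor bound is obtained by the ``one parent above $r$, at most one $L$-child below'' dichotomy.

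One small point worth making explicit in your write-up: when you assert $\mathsf{LCA}(u_1,u_2)=\mathsf{LCA}(c_1,c_2)$, this uses that neither $u_1$ nor $u_2$ can be the topmost vertex on the $u_1$--$u_2$ path. That follows because the $c_1$--$c_2$ path lies entirely in $C$ (hence avoids $u_1,u_2\in L$), so the $u_1$--$u_2$ path is $u_1,c_1,\ldots,c_2,u_2$ with $u_i$ strictly below $c_i$; its apex therefore sits on the $c_1$--$c_2$ segment. You already state that the $c_1$--$c_2$ path is in $C$, so this is a one-line addition.
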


\paragraph*{Minors.}
The operation of contracting an edge $uv \in E(G)$ introduces a~new vertex adjacent to all of {$N_G(\{u,v\})$}, after which $u$ and $v$ are deleted. %The result of contracting $uv \in E(G)$ is denoted $G / uv$. 
%For $A \subseteq V(G)$ such that $G[A]$ is connected, we say we contract $A$ if we simultaneously contract all edges in $G[A]$ and introduce a single new vertex.
We say that $H$ is a contraction of $G$, if we can turn $G$ into $H$ by a (possibly empty) series of edge contractions.
%We can represent the result of such a process with a mapping $\Pi \colon V(H) \to 2^{V(G)}$, such that the sets $(\Pi(h))_{h\in V(H)}$ form a partition of $V(G)$, induce connected subgraphs of $G$, 
%and $E_G(\Pi(h_1), \Pi(h_2)) \ne \emptyset$ if and only if $h_1h_2 \in E(H)$.
%This mapping is called  a contraction-model of $H$~in~$G$ and the sets $\Pi(h)$ are called branch sets.
We say that $H$ is a {minor} of $G$, denoted $H \preccurlyeq_m G$, if we can turn $G$ into $H$ by a (possibly empty) series of edge contractions, edge deletions, and vertex deletions.
The condition $H \preccurlyeq_m G$ implies $\tw(H) \le \tw(G)$.

Let $\fcal$ be a finite family  of graphs.
We say that a graph $G$ is {\em $\fcal$-minor-free} if $G$ contains no $F$ from $\fcal$ as a minor.
Whenever $\fcal$ contains a planar graph,  there exists a constant $\eta(\fcal)$ such that being $\fcal$-minor-free implies having treewidth bounded by $\eta(\fcal)$~\cite{chuzhoy2021towards}.
%It is known that $\eta(\fcal)$ is polynomial in the size of the smallest planar graph in $\fcal$
Conversely, the class of graphs of treewidth bounded by $\eta$ can be characterized by a finite family $\fcal(\eta)$ of forbidden minors that contains a planar graph~\cite{GM20}.

The condition $H \preccurlyeq_m G$ can be tested in time $\Oh_H(|V(G)|^3)$ by the classic result of Robertson and Seymour~\cite{GM13}, however there are no explicit upper bounds on the constant hidden in the $\Oh_H$-notation.
As we want to design constructive algorithms,
we will rely on the simpler algorithm by Bodlaender, which is sufficient for our applications. 

\begin{theorem}[\cite{Bodlaender96}]\label{lem:prelim:f-free}
There is a computable function $f \colon \nn \to \nn$ for which the following holds.
    For any  finite family $\fcal$ of graphs containing a planar graph,
    there is an algorithm that checks if $G$ is $\fcal$-minor-free in time $f(\max_{F \in \fcal} |V(F)|) \cdot |V(G)|$.
    Furthermore, testing if $\tw(G) \le \eta $ can be done in time $f(\eta) \cdot |V(G)|$.
\end{theorem}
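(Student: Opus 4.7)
The plan is to reduce both statements to two classical building blocks: (i) a linear-time algorithm that, given a graph $G$ and a constant $k$, decides $\tw(G) \le k$ and constructs a tree decomposition of width $\le k$ when one exists; and (ii) a linear-time algorithm that, given $G$ equipped with such a decomposition and a fixed pattern $H$, decides whether $H \preccurlyeq_m G$. Block (i) yields the second clause ``testing $\tw(G) \le \eta$'' directly. For the $\fcal$-minor-free clause, I would combine (i) and (ii) as follows. Since $\fcal$ contains a planar graph $F_0$, the excluded-grid theorem (cited in the preliminaries via \cite{chuzhoy2021towards}) yields a computable $\eta_0$ depending only on $|V(F_0)|$ such that any graph of treewidth exceeding $\eta_0$ contains $F_0$ as a minor and is therefore not $\fcal$-minor-free. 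The algorithm first runs (i) with $k = \eta_0$; if $\tw(G) > \eta_0$ it reports ``not $\fcal$-minor-free'', and otherwise it applies (ii) on the produced tree decomposition for each $F \in \fcal$ and answers accordingly.

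For (i), I would follow Bodlaender's self-reduction scheme. Suppose inductively that the algorithm runs in $g(k) \cdot n$ time on all graphs of order less than $n$. Given $G$ on $n$ vertices, the goal is to produce in $\Oh_k(n)$ time a reduced graph $G'$ on at most $(1 - c_k)\, n$ vertices such that $\tw(G') \le k$ if and only if $\tw(G) \le k$, together with information that allows us to lift a width-$k$ decomposition of $G'$ back to one of $G$. The recursion then unfolds to $g(k) \cdot n \cdot (1 + (1-c_k) + (1-c_k)^2 + \cdots) = \Oh_k(n)$. The reduction rules act on local substructures, e.g.\ ``I-simplicial'' vertices or short matchings inside bags, each of which is either safely contracted or removed while preserving ``$\tw \le k$'', or witnesses $\tw(G) > k$. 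The key structural input is the existence of a linear number of vertex-disjoint reducible substructures in any graph of treewidth at most $k$; these are located inside the bags of an $\Oh(k)$-approximate tree decomposition obtained from a preprocessing step runnable in $2^{\Oh(k)} \cdot n$ time via Robertson--Seymour-type balanced separators.

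For (ii), I would use standard leaf-to-root dynamic programming on a nice tree decomposition of width $\eta$. For a fixed pattern $H$ on $h$ vertices, a state at each bag records, for every $v \in V(H)$, which connected subset of already processed vertices has been designated as its branch set restricted to the traces still visible in the current bag, together with the subset of $E(H)$ already realized between those branch sets. The number of admissible states per bag is bounded by a function of $h$ and $\eta$ alone, so the DP runs in $f(h, \eta) \cdot n$ time; equivalently, ``$H \preccurlyeq_m G$'' is expressible in counting MSO for each fixed $H$, so Courcelle's theorem applies. The main obstacle is the technical machinery behind Bodlaender's reduction rules in (i): proving that every graph with $\tw \le k$ contains $\Omega(n)$ vertex-disjoint reducible configurations, and arranging the contractions so that the entire batch can be executed in one linear-time sweep without interfering updates. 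Everything else---the grid-minor bound giving $\eta(\fcal)$, the minor-testing DP, and the lifting of decompositions---follows standard routes.
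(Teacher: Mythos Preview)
The paper does not prove this statement; it is imported from the literature with citation \cite{Bodlaender96} and used as a black box in the preliminaries. There is therefore no ``paper's own proof'' to compare against.

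Your sketch is the standard route and is essentially correct. Block (i) is precisely Bodlaender's linear-time algorithm, and block (ii) is the usual minor-testing dynamic programming on a bounded-width tree decomposition (equivalently, Courcelle's theorem). The combination via the excluded-grid bound $\eta_0 = \eta_0(|V(F_0)|)$ is exactly how one derives the $\fcal$-minor-free clause from the treewidth clause. One minor point worth making explicit: the theorem promises a single computable function $f$ of $\max_{F\in\fcal}|V(F)|$ only, so you should observe that (a) $\eta_0$ is bounded by a computable function of this maximum since the planar member of $\fcal$ has at most that many vertices, and (b) up to isomorphism there are only finitely many candidate patterns of that size, so the number of minor tests is also bounded by a function of $\max_{F\in\fcal}|V(F)|$. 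With those bookkeeping remarks your outline goes through.
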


\paragraph*{Boundaried graphs.} The following definitions are imported from the work~\cite{BasteST23hittingIV}.

\begin{definition}
Let $t \in  \nn$. A {\em $t$-boundaried graph} is a triple $\gbf = (G,B, \rho)$,
where $G$ is a graph, $B \subseteq  V(G)$, $|B|  = t$, and $\rho : B \rightarrow  [t]$ is a bijection, also called a {\em labeling}. 
\end{definition}

We refer to the set $B$ as the {\em boundary} of $\gbf$, also denoted $\partial(\gbf)$. 
We say that $\gbf$ is simply a {\em boundaried graph} if it is a $t$-boundaried graph for some $t \in \nn$.

We say that two boundaried
graphs $\gbf_1 = (G_1,B_1, \rho_1)$ and $\gbf_2 = (G_2,B_2, \rho_2)$ are {\em compatible} if $\rho_2^{-1} \circ  \rho_1$ is an
isomorphism from $G_1[B_1]$ to $G_2[B_2]$. Given two compatible $t$-boundaried graphs $\gbf_1 = (G_1,B_1, \rho_1)$ and $\gbf_2 = (G_2,B_2, \rho_2)$, we define $\gbf_1 \oplus  \gbf_2$ as the graph obtained if we
take the disjoint union of $G_1$ and $G_2$ and, for every $i \in  [t]$, we identify vertices
$\rho_1^{-1}(i)$ and $\rho_1^{-2}(i)$.

Given $h \in  \nn$, we 
define an equivalence relation $\equiv_h$.
We write $\gbf_1 \equiv_h \gbf_2$ if $\gbf_1$ and $\gbf_2$ are compatible and, for every graph $H$ on at most
$h$ vertices and $h$ edges and every boundaried graph $\mathbf{F}$ that is compatible with $\gbf_1$ (hence, with $\gbf_2$ as well), it holds that

\[ H \preccurlyeq_m \mathbf{F} \oplus \gbf_1 \Longleftrightarrow  H \preccurlyeq_m \mathbf{F} \oplus \gbf_2.\]

\begin{observation} \label{obs:prelim:equivalent}
    Let $\fcal$ be a family of graphs with size bounded by $h$.
    and $\mathbf{G}_1, \mathbf{G}_2, \mathbf{G}_3$ be compatible boundaried graphs. If  $\mathbf{G}_1 \equiv_h \mathbf{G}_2$ and $X \sub V(\mathbf{G}_3) \sm \partial(\mathbf{G}_3)$ is an $\fcal$-minor hitting set in $\mathbf{G}_1 \oplus \mathbf{G}_3$,
    then $X$ is an $\fcal$-minor hitting set in $\mathbf{G}_2 \oplus \mathbf{G}_3$ as well.
\end{observation}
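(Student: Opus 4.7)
The plan is to reduce the statement directly to the definition of $\equiv_h$ applied to a boundaried graph that encodes $\mathbf{G}_3$ after the removal of $X$. The key enabling fact is that $X$ lies entirely in $V(\mathbf{G}_3) \sm \partial(\mathbf{G}_3)$, so the boundary of $\mathbf{G}_3$ survives the deletion. Concretely, I would first define $\mathbf{G}_3' = (G_3 - X,\, \partial(\mathbf{G}_3),\, \rho_3)$, and check that this is a valid boundaried graph: since $X$ is disjoint from $\partial(\mathbf{G}_3)$, the set $\partial(\mathbf{G}_3) \sub V(G_3 - X)$ and the labeling $\rho_3$ makes sense without modification.

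Next I would verify that $\mathbf{G}_3'$ is compatible with both $\mathbf{G}_1$ and $\mathbf{G}_2$. This is immediate because compatibility only sees the induced subgraph on the boundary together with its labeling, and the induced subgraph $G_3'[\partial(\mathbf{G}_3)] = G_3[\partial(\mathbf{G}_3)]$ is identical to the one in $\mathbf{G}_3$, which is compatible with $\mathbf{G}_1$ and $\mathbf{G}_2$ by assumption. Then I would observe the commutation identity
\[
(\mathbf{G}_i \oplus \mathbf{G}_3) - X \;=\; \mathbf{G}_i \oplus \mathbf{G}_3' \qquad\text{for } i \in \{1,2\},
\]
which holds because the gluing $\oplus$ identifies vertices only along the boundary, and $X$ is disjoint from that boundary, so deleting $X$ commutes with the gluing operation.

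Finally, I would invoke the definition of $\equiv_h$ with $\mathbf{F} = \mathbf{G}_3'$, separately for each forbidden minor $H \in \fcal$. Since every $H \in \fcal$ has at most $h$ vertices and $h$ edges, the equivalence $\mathbf{G}_1 \equiv_h \mathbf{G}_2$ gives
\[
H \preccurlyeq_m \mathbf{G}_3' \oplus \mathbf{G}_1 \;\Longleftrightarrow\; H \preccurlyeq_m \mathbf{G}_3' \oplus \mathbf{G}_2.
\]
The hypothesis that $X$ is an $\fcal$-minor hitting set in $\mathbf{G}_1 \oplus \mathbf{G}_3$ says the left-hand side fails for every $H \in \fcal$, so by the equivalence the right-hand side fails for every $H \in \fcal$ as well, meaning $\mathbf{G}_2 \oplus \mathbf{G}_3'$ is $\fcal$-minor-free. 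Combined with the identity above, this is precisely the statement that $X$ is an $\fcal$-minor hitting set in $\mathbf{G}_2 \oplus \mathbf{G}_3$.

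There is no genuine obstacle here: the entire argument is a careful unpacking of the definitions of boundaried graphs, gluing, compatibility, and $\equiv_h$. The only point that demands any attention is confirming that removing $X$ commutes with the $\oplus$ operation, and this is an immediate consequence of the hypothesis $X \cap \partial(\mathbf{G}_3) = \emptyset$.
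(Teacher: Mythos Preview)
Your proposal is correct and is exactly the intended argument; the paper states this as an observation without proof, and your unpacking of the definitions—forming $\mathbf{G}_3' = (G_3 - X,\, \partial(\mathbf{G}_3),\, \rho_3)$, noting that deletion of $X$ commutes with $\oplus$ because $X \cap \partial(\mathbf{G}_3) = \emptyset$, and then applying the defining biconditional of $\equiv_h$ with $\mathbf{F} = \mathbf{G}_3'$—is the standard justification.
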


It follows from Courcelle’s Theorem that the relation $\equiv_h$ has finitely many equivalence classes over the family of $t$-boundaried graphs, for every fixed $t \in \nn$ (see  \cite[Lemma 3.2]{bodlaender2016meta}).
A {\em set of $t$-representatives} for $\equiv_h$ is a collection containing minimum-sized representative for each equivalence class of $\equiv_h$, restricted to $t$-boundaried graphs.
Given $t,h \in \nn$ we fix a set of representatives $\rcal_h^{(t)}$ breaking the ties arbitrarily.
What is important for us, there is an explicit bound on a maximal size of a $t$-representative.

\begin{theorem}[{\cite[Theorem 6.6]{BasteST23hittingIV}}]
\label{thm:prelim:representative}
    There is computable function $f \colon \nn^2 \to \nn$ such that if $t,q,h \ge 1$ and $\mathbf{G} = (G,B,\rho)$ is a $K_q$-minor-free $t$-boundaried graph in $\rcal_h^{(t)}$, then $|V(G)| \le f(q,h)\cdot t$.
\end{theorem}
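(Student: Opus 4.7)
The plan is to argue by contradiction: assume $|V(G)|$ exceeds a suitable $f(q,h)\cdot t$ and exhibit a proper substitution that strictly reduces the vertex count while preserving the $\equiv_h$-class of $\mathbf{G}$, contradicting minimality of $\mathbf{G}$ in $\rcal_h^{(t)}$. The substitution will be local: replace some large piece with boundary of bounded size by a smaller $\equiv_h$-equivalent boundaried graph, then use \Cref{obs:prelim:equivalent} (applied in the appropriate direction, since $\equiv_h$ is a congruence with respect to $\oplus$) to lift this to global $\equiv_h$-equivalence.

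The first main ingredient is a \emph{rooted protrusion decomposition} of $K_q$-minor-free graphs around the boundary: I would show that for some $r = r(q)$ and some computable $\alpha(q)$, there exist a ``core'' $Y \supseteq B$ with $|Y| \le \alpha(q)\cdot t$ and a partition of $V(G)\sm Y$ into pieces $A_1,\dots,A_k$ with $k \le \alpha(q)\cdot t$, each of which is an $r$-protrusion in $G$. A decomposition of this kind is standard in the kernelization literature on $H$-minor-free graphs; for the $K_q$-minor-free case the proof is more elementary and can rely on the Kostochka/Thomason bound $|E(G)| \le c_q\cdot|V(G)|$ together with iterative peeling of small-degree vertices and separators around $B$ in a tree decomposition of $G[V\sm B]$ obtained from the structural theorem (or, alternatively, the use of \Cref{lem:treewidth:lca} on an appropriate tree decomposition). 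The outputs $Y$ and $(A_i)$ are purely combinatorial, and both $\alpha(q)$ and $r(q)$ are computable.

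The second ingredient is finiteness of $\equiv_h$ on $r$-boundaried graphs. By Courcelle's theorem, $\equiv_h$ has finitely many classes on $s$-boundaried graphs for every fixed $s$, so there is a computable function $M(r,h)$ bounding the size of the \emph{minimum} representative in every $\equiv_h$-class among $s$-boundaried graphs with $s \le r$. Now view each $A_i$ as an $s_i$-boundaried graph $\mathbf{A}_i = (G[A_i\cup N(A_i)],\, N(A_i),\, \rho_i)$ with $s_i \le r$. If some $|V(\mathbf{A}_i)| > M(r,h)$, pick a minimum representative $\mathbf{A}_i'$ of its $\equiv_h$-class and form $\mathbf{G}'$ by substituting $\mathbf{A}_i'$ for $\mathbf{A}_i$ inside $\mathbf{G}$ along the boundary $N(A_i)$. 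Since $\equiv_h$ is preserved under $\oplus$, we have $\mathbf{G}' \equiv_h \mathbf{G}$, and $|V(\mathbf{G}')| < |V(\mathbf{G})|$, contradicting $\mathbf{G} \in \rcal_h^{(t)}$. Hence $|A_i| \le M(r(q),h)$ for every $i$, and
\[
|V(G)| \;\le\; |Y| + \sum_{i=1}^{k} |A_i| \;\le\; \alpha(q)\cdot t + \alpha(q)\cdot t\cdot M(r(q),h) \;=\; f(q,h)\cdot t
\]
for a computable $f$.

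The main obstacle is making the rooted protrusion decomposition explicit and computable for $K_q$-minor-free graphs with constants depending only on $q$: arbitrary $H$-minor-free proofs invoke the Robertson--Seymour structure theorem with non-explicit constants, which would be unsatisfactory here. Working with $K_q$ rather than arbitrary $H$ is what makes elementary bounds feasible, by iterating the linear edge-density bound and standard separator arguments on a tree decomposition whose existence (but not whose width) is guaranteed by grid-minor exclusion. A secondary delicate point is that, formally, the substitution lemma needs $\mathbf{A}_i$ and $\mathbf{A}_i'$ to share the same labeling $\rho_i$ of $N(A_i)$; this is automatic once we fix $\rho_i$ and choose the minimum representative from the relevant labelled class, and it is this step that lets $M(r,h)$ absorb the dependence on $h$ without any dependence on $t$.
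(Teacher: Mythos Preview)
This theorem is quoted from \cite{BasteST23hittingIV}; the present paper does not supply a proof, so there is no in-paper argument to compare against. I comment only on the soundness of your proposal.

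Your second ingredient (local replacement by a smaller $\equiv_h$-equivalent piece, contradicting minimality) is the right way to conclude. The gap is in the first ingredient. By the paper's definition an $r$-protrusion must have treewidth at most $r$, so your decomposition asserts that every $K_q$-minor-free graph with a size-$t$ boundary can, after deleting $\alpha(q)\cdot t$ vertices, be partitioned into $\alpha(q)\cdot t$ pieces each of treewidth at most $r(q)$. That is false already for $q=5$: take the $n\times n$ planar grid with its four corners as boundary ($t=4$). The union $Y\cup\bigcup_i\partial_G(A_i)$ would be a set of size $O_q(1)$ whose removal leaves a graph of treewidth at most $r(q)=O_q(1)$, yet deleting $O(1)$ vertices from an $n\times n$ grid keeps treewidth $\Omega(n)$. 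Neither the Kostochka--Thomason density bound nor \Cref{lem:treewidth:lca} can rescue this, because there is no tree decomposition of width $O_q(1)$ to feed into the latter. Your closing paragraph flags computability of the constants as ``the main obstacle'', but the actual obstacle is existence: a bounded-treewidth protrusion decomposition of this shape simply does not hold for $K_q$-minor-free graphs. A correct argument has to exploit more than sparsity; the route taken in \cite{BasteST23hittingIV} goes through folios and the flat-wall/irrelevant-vertex machinery to find, in any sufficiently large $K_q$-minor-free $t$-boundaried graph, a region that can be simplified without changing which graphs on at most $h$ vertices occur as rooted minors---your protrusion step works only once treewidth has already been bounded in terms of $q$ and $h$.
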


The dependence on $q$ is immaterial for our work.
Note that whenever $\mathbf{G}_1$ contains $K_h$ as a minor, then for any compatible  $\mathbf{G}_2$ and any $H$ on at most $k$ vertices it holds that $H \preccurlyeq_m \mathbf{G}_1 \oplus \mathbf{G}_2$.
Consequently, $\mathbf{G}_1 \equiv_h \mathbf{G}_3$ where $\mathbf{G}_3$ is a graph with an~isomorphic boundary plus $K_h$.
This bounds the size of the unique boundaried graph in $\rcal_h^{(t)}$ that is not $K_h$-minor-free so we can simplify the statement of \Cref{thm:prelim:representative}.
Finally, we can construct family $\rcal_h^{(t)}$ by enumerating all graphs within the given size bound.

\begin{corollary}\label{cor:prelim:representative}
     There is computable function $f \colon \nn \to \nn$ such that if $t,h \ge 1$ and $\mathbf{G} = (G,B,\rho) \in \rcal_h^{(t)}$, then $|V(G)| \le f(h)\cdot t$.
     Furthermore, family $\rcal_h^{(t)}$ can be computed in time $\Oh(2^{f(h)^2t^2})$.
\end{corollary}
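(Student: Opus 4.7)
The plan is to reduce the statement to Theorem \ref{thm:prelim:representative} by showing that the equivalence class under $\equiv_h$ containing graphs with a $K_h$-minor is represented by a small boundaried graph, so that the size bound from Theorem \ref{thm:prelim:representative} applied with $q = h$ (or any fixed value of our choice) covers all representatives uniformly.

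First I would make precise the claim sketched in the paragraph preceding the corollary. Fix a compatible boundaried graph $\mathbf{G}_*^{(t)} = (G_*, B, \rho)$ consisting of a clique on $B$ (to ensure the induced boundary graph can be chosen to be isomorphic to any prescribed one; in general one takes $G_*[B]$ to be whatever fixed graph the representatives in $\rcal_h^{(t)}$ have on the boundary) together with a disjoint copy of $K_h$ glued to $B$ so that $G_*$ contains $K_h$ as a minor. For any $\mathbf{G}_1$ that already contains $K_h$ as a minor and any compatible $\mathbf{F}$, the join $\mathbf{F} \oplus \mathbf{G}_1$ still contains $K_h$ as a minor, so every graph $H$ on at most $h$ vertices and $h$ edges is a minor of it. The same holds for $\mathbf{F} \oplus \mathbf{G}_*$, and hence $\mathbf{G}_1 \equiv_h \mathbf{G}_*$. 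Consequently, among $t$-boundaried graphs, all graphs containing $K_h$ as a minor collapse into a single equivalence class whose minimum-size representative has at most $t + \Oh(h)$ vertices.

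Second, I would apply Theorem \ref{thm:prelim:representative} with $q = h$: every representative $\mathbf{G} \in \rcal_h^{(t)}$ that is $K_h$-minor-free satisfies $|V(G)| \le f_0(h,h)\cdot t$ for the function $f_0$ from that theorem. Combining with the previous step, every representative in $\rcal_h^{(t)}$ has at most $f(h)\cdot t$ vertices for $f(h) := \max(f_0(h,h),\, h + 1)$ (absorbing the additive $\Oh(h)$ for large enough $t \ge 1$ and $h \ge 1$). This establishes the size bound.

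Third, for the construction, I would enumerate all $t$-boundaried graphs whose underlying graph has at most $f(h) \cdot t$ vertices: there are at most $2^{\Oh(f(h)^2 t^2)}$ labeled graphs on this vertex set, times a polynomial factor for the choice of boundary and labeling, which stays within the claimed $\Oh(2^{f(h)^2 t^2})$ bound after adjusting $f$. Group the enumerated graphs by the equivalence $\equiv_h$: checking $\mathbf{G}_1 \equiv_h \mathbf{G}_2$ is decidable since it amounts to comparing, for each of the finitely many relevant test graphs $H$ on $\le h$ vertices/edges and $\le f(h)\cdot t$ many non-isomorphic $t$-boundaried ``probe'' graphs $\mathbf{F}$, whether $H \preccurlyeq_m \mathbf{F} \oplus \mathbf{G}_i$; by standard finite-index arguments for minor-closed CMSO properties (or a direct argument bounding the size of probes one needs to consider), finitely many probes suffice, and each test is a constant-complexity minor check in a graph of bounded size. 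From each equivalence class we output a graph of minimum size, breaking ties as required to fix $\rcal_h^{(t)}$.

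The main obstacle I anticipate is making the equivalence test algorithmic within the claimed time: naively one needs infinitely many test graphs $\mathbf{F}$, but finite-index of $\equiv_h$ combined with the size bound on representatives gives a uniform cap on the $\mathbf{F}$'s one must try (it suffices to test against all representatives in the candidate family itself). Once that detail is in place the enumeration and grouping both fit comfortably inside $\Oh(2^{f(h)^2 t^2})$ after possibly enlarging $f$ to absorb the checking overhead.
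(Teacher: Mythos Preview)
Your approach is correct and mirrors the paper's argument: collapse the $K_h$-minor-containing class to a single small representative, apply \Cref{thm:prelim:representative} with $q=h$ to the rest, and then brute-force enumerate. The paper disposes of the construction in one line (``enumerating all graphs within the given size bound'') whereas you spell out the equivalence-testing step and correctly resolve the apparent circularity by noting that probing against all candidates of size $\le f(h)\cdot t$ suffices, since each equivalence class has a representative among them; this extra care is sound and still fits in $\Oh(2^{f(h)^2 t^2})$ after enlarging $f$. One small wording issue: your $\mathbf{G}_*^{(t)}$ should carry the \emph{same} boundary graph as the class you are comparing to (as your parenthetical notes), not a clique on $B$; compatibility is part of the definition of $\equiv_h$, so there is one such ``universal $K_h$'' representative per boundary isomorphism type, not one globally.
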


\paragraph*{Minor hitting problems.}

For a finite graph family $\fcal$, a set  $X \sub V(G)$ is an {\em $\fcal$-minor hitting set} in $G$ if $G-X$ is $\fcal$-minor-free.
For a node-weighted graph $G$ %with weight function $w \colon V(G) \to \rr^+$
we denote by $\opt_\fcal(G)$ the minimum total weight of an $\fcal$-minor hitting set in $G$.
The problem of finding an $\fcal$-minor hitting set of minimum weight is called \fdel.
In the special case when $\fcal$ contains at least one planar graph, we deal with the \planardel problem.
For $k \in \nn$ we denote by $\opt_{\fcal,k}(G)$ the minimum weight of a $k$-optimal $\fcal$-minor hitting set, that is, optimal among those on at most $k$ vertices.
If there is no such set, we define $\opt_{\fcal,k}(G) = \infty$.

In the \twdel problem we seek a treewidth-$\eta$ modulator of minimum weight, denoted $\opt_\eta(G)$.
We have $\opt_\eta(G) = \opt_{\fcal(\eta)}(G)$.
In the other direction we only have inequality $\opt_{\eta(\fcal)}(G) \le \opt_\fcal(G)$ because not every graph of treewidth bounded by $\eta(\fcal)$ must be $\fcal$-minor-free.

Baste, Sau, and Thilikos~\cite{BasteST20, BasteST23hittingIV} gave an algorithm for general \ufdel on bounded-treewidth graphs.
Their dynamic programming routine stores a minimum-sized partial solution for each equivalence class in the relation $\equiv_h$ where $h$ is the maximal size of a graph in $\fcal$.
%in a certain relation over boundaried graphs (see \Cref{app:exha}).
As advocated 
in the follow-up work~\cite[\S 7]{SauST22}, this algorithm can be easily adapted to handle weights, both for computing $\opt_{\fcal}(G)$ and  $\opt_{\fcal,k}(G)$.
The first adaptation is straightforward and for the second one observe that the DP routine can store a minimum-weight partial solution for each equivalence  class and each size bound $\ell \in [k]$.
This incurs an additional factor $\poly(k) \le \poly(|V(G)|)$ in the running time.

\begin{theorem}[\cite{BasteST23hittingIV}]
\label{thm:exha:tw-algo}
    There is a computable function $f \colon \nn^2 \to \nn$ for which
    \fdel can be solved in time $f(\max_{F \in \fcal} |V(F)|,\, t) \cdot |V(G)|^{\Oh(1)}$ on graphs of treewidth at most~$t$.
    Furthermore, for a given $k \in \nn$, a $k$-optimal solution can be found within the same running time bound.
\end{theorem}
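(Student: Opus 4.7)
The plan is to lift the unweighted dynamic-programming algorithm of Baste, Sau, and Thilikos~\cite{BasteST23hittingIV} to weighted graphs and to the $k$-optimal setting. First, I would compute a nice tree decomposition $(T,\chi)$ of $G$ of width at most $t$ in time $g(t)\cdot |V(G)|$ via Bodlaender's algorithm (\Cref{lem:prelim:f-free}), for some computable $g$. Set $h = \max_{F \in \fcal}|V(F)|$. By \Cref{cor:prelim:representative}, the equivalence relation $\equiv_h$ restricted to $s$-boundaried graphs (for $s \le t+1$) has finitely many classes, each admitting a canonical representative in $\rcal_h^{(s)}$ of size at most $f(h)\cdot s$, and the entire family can be enumerated within a bound depending only on $h$ and $t$.

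For each node $u \in V(T)$, let $G_u$ denote the subgraph induced by the vertices appearing in bags of the subtree rooted at $u$, viewed as a boundaried graph $\mathbf{G}_u = (G_u, \chi(u), \rho_u)$ under an arbitrary fixed labeling $\rho_u$. A DP state at $u$ consists of a representative $\mathbf{R} \in \rcal_h^{(|\chi(u)|)}$ compatible with $(\emptyset,\chi(u),\rho_u)$ together with an integer $\ell \in \{0,\ldots,k\}$ (for the plain variant, drop $\ell$). The value stored is the minimum weight of a subset $X \sub V(G_u) \sm \chi(u)$ with $|X| \le \ell$ such that $\mathbf{G}_u - X \equiv_h \mathbf{R}$. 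Transitions at introduce/forget/join nodes follow the standard recipe: guess how each bag vertex interacts with the solution and apply $\oplus$-gluing to combine the children's partial graphs. By \Cref{obs:prelim:equivalent}, the class of the resulting boundaried graph depends only on the classes (not the concrete realizations) of the inputs, so the recurrence is well-defined. At the root, we output the minimum weight over all states $(\mathbf{R},\ell)$ for which $\mathbf{R}$ is $\fcal$-minor-free.

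The principal implementation hurdle is to execute each transition in time bounded purely by a function of $h$ and $t$. I would precompute once a transition table: for every pair $\mathbf{R}_1, \mathbf{R}_2$ of representatives (and every way of identifying their boundaries at a join node, or of introducing/forgetting a single vertex), determine the class of $\mathbf{R}_1 \oplus \mathbf{R}_2$ by brute-force testing $H \preccurlyeq_m \mathbf{R}_1 \oplus \mathbf{R}_2$ for every candidate $H$ on at most $h$ vertices and $h$ edges. Since $|V(\mathbf{R}_i)| \le f(h)\cdot(t+1)$, this finite lookup table is computable in time depending only on $h$ and $t$. The weighted case then requires no further change beyond storing the minimum total weight rather than the minimum size per state; adding the $\ell$-coordinate for the $k$-optimal variant multiplies the number of states by $k+1 \le |V(G)|+1$, yielding a total running time of the form $F(h,t)\cdot |V(G)|^{\Oh(1)}$ as required.
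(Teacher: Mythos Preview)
Your proposal is correct and matches the paper's treatment. The paper does not prove this theorem in detail—it is cited from~\cite{BasteST23hittingIV}—but the paragraph preceding the statement gives exactly your outline: the DP of Baste, Sau, and Thilikos stores a minimum-weight partial solution per equivalence class of $\equiv_h$, and for the $k$-optimal variant one additionally indexes the table by the size bound $\ell \in [k]$, incurring only a $\poly(k) \le \poly(|V(G)|)$ overhead.
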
%\micr{check if this reference does not work for single $F$}

%The article~\cite{BasteST23hittingIV} is concerned with the asymptotics of the function $f$ with respect to $t$ but we omit the details since in our application the value of $r$ is always constant.

%The dynamic-programming routine on a tree decomposition, 
%corollary about $k$-optimal?
%Such a modification can be easily incorporated into the algorithm from \Cref{thm:exha:tw-algo} and incurs an additional factor $\ell^{\Oh(1)}$ in the running time while we can always assume $\ell \le |V(G)|$.

\iffalse
\begin{theorem}\label{lem:prelim:tw-decomp}
    There is an algorithm that checks if $\tw(G) \le \eta$ in time $f(\eta) \cdot |V(G)|$ and, if yes, outputs a tree decomposition of $G$ of width at most $\eta$.
\end{theorem}

We state a simple corollary about testing if a graph $G$ is $\fcal$-minor-free for a family $\fcal$ containing a planar graph.
If $G$ is $\fcal$-minor-free then $\tw(G) \le \eta(\fcal)$ and this can be checked using \Cref{lem:prelim:tw-decomp}.
If the treewidth 
\fi

\paragraph{Martingales.} We will need the following concept from probability theory.

\begin{definition}
    A sequence of random variables $\mathbf{X} = (X_0,X_1,X_2,\dots)$ is called a {\em supermartingale} if for each $k \in \nn$ it holds that {\em (i)} $\ex{|X_k|} < \infty$ and  {\em (ii)}
    $\ex{X_{k+1} \mid X_0,\dots,X_k } \le X_k$.

    A random variable $\tau \in \nn$ is called a {\em stopping time} with respect to the process $\mathbf{X}$ if for each $k \in \nn$ the event $(\tau \le k)$ depends only on $X_0,\dots,X_k$.
\end{definition}

%it says that the longer we wait before we take $X_i$
The following theorem is also known in more general forms when the variable $\tau$ may be unbounded but in our applications $\tau$ will be always bounded by the number of vertices in the input graph.
Intuitively, it says that when we observe the values $X_0, X_1, \dots$ before choosing $X_\tau$, then the optimal strategy maximizing $X_\tau$ is to quit at the very beginning with $\tau=0$.

%$\mathbf{X}$ is a supermartingale and we observe the values $X_0, X_1, \dots$

\begin{theorem} [Doob's Optional-Stopping Theorem {\cite[\S 12.5]{grimmett2020probability}}]
\label{thm:prelim:doob}
Let $\mathbf{X} = (X_0,X_1,X_2,\dots)$
be a supermartingale and $\tau$ be a stopping time
with respect to $\mathbf{X}$.
Suppose that $\pr{\tau \le n} = 1$ for some integer $n \in \nn$.
If~so, then $\ex{X_{\tau}}\le\ex{X_{0}}$. 
\end{theorem}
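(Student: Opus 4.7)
The plan is to reduce everything to a finite-horizon argument by working with the stopped process $Y_k = X_{\min(\tau,k)}$ and showing that $\ex{Y_k}\le \ex{X_0}$ for every $k\ge 0$; since $\pr{\tau\le n}=1$ by hypothesis, we then obtain $\ex{X_\tau} = \ex{Y_n} \le \ex{X_0}$, which is the desired conclusion. This avoids any need for a genuine limiting theorem and turns the statement into a finite telescoping computation.

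The first step is to establish the telescoping identity
\[X_{\min(\tau,k)} \;=\; X_0 \;+\; \sum_{i=1}^{k} \mathbf{1}[\tau \ge i]\,(X_i - X_{i-1}),\]
which expresses the stopped process as a discrete stochastic integral of the predictable indicator $\mathbf{1}[\tau\ge i]$ against the supermartingale increments. This identity is verified by induction on $k$: whenever $i > \tau$ the summand vanishes, so the partial sum already equals $X_\tau$ and stops changing.

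The second step is to take expectations term by term. The key observation is that because $\tau$ is a stopping time, the event $\{\tau\ge i\}=\{\tau\le i-1\}^c$ is measurable with respect to $X_0,\dots,X_{i-1}$. Conditioning on this history and pulling the indicator out of the inner expectation yields
\[\ex{\mathbf{1}[\tau\ge i]\,(X_i - X_{i-1})} \;=\; \ex{\mathbf{1}[\tau\ge i]\cdot \ex{X_i-X_{i-1}\mid X_0,\dots,X_{i-1}}} \;\le\; 0,\]
where the inequality uses the supermartingale property $\ex{X_i\mid X_0,\dots,X_{i-1}}\le X_{i-1}$. Summing over $i=1,\dots,k$ gives $\ex{Y_k}\le \ex{X_0}$, and choosing $k = n$ together with the identity $Y_n = X_\tau$ a.s.\ finishes the argument.

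The only genuine subtlety is integrability: one must verify that each $Y_k$ is itself integrable so that the expectations above are well-defined. This is immediate here because $Y_k$ is a finite linear combination of the integrable variables $X_0,\dots,X_k$, and crucially no uniform-integrability step is needed thanks to the deterministic bound $\tau \le n$. In stronger versions of the Optional-Stopping Theorem where $\tau$ is merely finite almost surely, this is precisely the place where one must invoke uniform integrability or an additional bound on the increments; the bounded hypothesis here is exactly what makes the proof routine.
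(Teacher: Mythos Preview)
Your argument is the standard and correct proof of the bounded-stopping-time case of Doob's theorem: the telescoping identity for the stopped process, the predictability of $\mathbf{1}[\tau\ge i]$, and the termwise application of the supermartingale inequality are all sound, and integrability is indeed automatic here because only finitely many $X_j$ are involved.

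Note, however, that the paper does not supply its own proof of this theorem at all; it is quoted as a black-box result from \cite{grimmett2020probability}. So there is no ``paper's proof'' to compare against---your write-up simply fills in the details of a cited classical fact, and does so correctly.
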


\section{Constructing a modulator hitting family}

This section is devoted to the proof of \Cref{thm:main:cover}.
First, we need several additional definitions (to be used only within this section).
Instead of working directly with protrusions we will phrase the construction in terms of {\em $\eta$-separations}.

\begin{definition}
    For a graph $G$ and $\eta \ge 1$, an {\em $\eta$-separation} is a pair $(C,S)$ such that $C, S \sub V(G)$, $N_G(C) \sub S$, $\tw(G[C]) \le \eta$, and $|S| \le 2\eta + 2$.

    An $\eta$-separation $(C,S)$ is called {\em simple} if $G[C]$ is connected and $N_G(C) = S$.

    An $\eta$-separation $(C,S)$ is called {\em semi-simple} if for every $v \in C$ the pair $(R_{G-S}(v),S)$ forms a simple 
    $\eta$-separation.

    We define a partial order relation on $\eta$-separations $(C_1,S_1) \vartriangleleft (C_2,S_2)$ as $C_1 \sub C_2$.
    %if $C_1 \sub C_2$ and $(C_1 \cup S_1) \sub (C_2 \cup S_2)$. 
\end{definition}

We could alternatively define an $\eta$-separation with a second parameter bounding $|S|$ but it would be pointless as we never consider other bounds than $2\eta + 2$.
This bound corresponds to the maximal size of a bag in a tree decomposition of width $\eta$, times 2.
Since  $C_1 \sub C_2$ implies $N[C_1] \sub N[C_2]$ we can observe the following.

\begin{observation}\label{lem:protr:greater}
    If $(C_1,S_1)$ and  $(C_2,S_2)$ are simple $\eta$-separations and $(C_1,S_1)\vartriangleleft(C_2,S_2)$
    then $(C_1 \cup S_1) \sub (C_2 \cup S_2)$.
    %and $C_1 \subseteq C_2$ then $(C_1,S_1)\vartriangleleft(C_2,S_2)$.
\end{observation}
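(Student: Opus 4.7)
The plan is to unfold definitions and reduce the claim to the elementary fact that closed neighborhoods are monotone under set inclusion. Since both $\eta$-separations are simple, I can replace each $S_i$ with $N_G(C_i)$ on the nose (not merely a superset, as in the general definition of $\eta$-separation). Thus $C_i \cup S_i = C_i \cup N_G(C_i) = N_G[C_i]$ for $i = 1, 2$.

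Next I would verify the monotonicity statement $C_1 \subseteq C_2 \Longrightarrow N_G[C_1] \subseteq N_G[C_2]$, which is the one-line argument already hinted at just before the statement. Concretely, pick any $v \in N_G[C_1]$. Either $v \in C_1$, and then $v \in C_2$ by hypothesis, or $v \in N_G(C_1)$, in which case $v$ has a neighbor $u \in C_1 \subseteq C_2$, so $v \in C_2 \cup N_G(C_2) = N_G[C_2]$ (depending on whether $v$ itself lies in $C_2$ or not).

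Combining these two steps, $C_1 \cup S_1 = N_G[C_1] \subseteq N_G[C_2] = C_2 \cup S_2$, which is exactly the desired conclusion.

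There is no real obstacle here; the only mild subtlety is remembering that the equality $S_i = N_G(C_i)$ uses the ``simple'' hypothesis, whereas for a general $\eta$-separation only the inclusion $N_G(C) \subseteq S$ is guaranteed, and the analogous statement would fail. The argument does not use the treewidth bound or the size bound $|S_i| \le 2\eta+2$ at all; it is purely a neighborhood-monotonicity observation, which is consistent with the statement being labeled as an observation rather than a lemma.
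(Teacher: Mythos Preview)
Your proof is correct and follows essentially the same approach as the paper: use simplicity to write $C_i \cup S_i = N_G[C_i]$, and then apply the monotonicity $C_1 \subseteq C_2 \Rightarrow N_G[C_1] \subseteq N_G[C_2]$. Your remark about where the ``simple'' hypothesis is actually used is accurate and matches the paper's intent.
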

\iffalse
\begin{proof}
 We have $N[C_i] = C_i \cup S_i$ for $i \in \{1,2\}$ while 
     the condition $C_1 \sub C_2$ implies $N[C_1] \sub N[C_2]$.       
\end{proof}
\fi

We say that a simple $\eta$-separation  $(C_1,S_1)$ is $\vartriangleleft$-maximal if there is no other simple $\eta$-separation  $(C_2,S_2)$ for which  $(C_1,S_1) \vartriangleleft (C_2,S_2)$.

When $(C,S)$ is an  $\eta$-separation and $(T,\chi)$ is a tree decomposition of $G[C]$ of width $\le \eta$, we can easily turn it into a tree decomposition of $G[C\cup S]$ of width $3\eta+2$ by inserting the vertices of $S$ into every bag of $(T,\chi)$.
Consequently, we have the following.

\begin{observation}
    If  $(C,S)$ is an  $\eta$-separation in $G$, then $C\cup S$ is a $(3\eta+2)$-protrusion in $G$.
\end{observation}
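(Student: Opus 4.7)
The plan is to verify the two defining conditions of a $(3\eta+2)$-protrusion for the set $A = C \cup S$: namely, $\tw(G[A]) \le 3\eta + 2$ and $|\partial_G(A)| \le 3\eta + 2$.

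For the treewidth bound, I will follow exactly the construction sketched right before the statement. Starting from a tree decomposition $(T,\chi)$ of $G[C]$ of width at most $\eta$, I insert every vertex of $S$ into every bag, obtaining $(T,\chi')$ with $\chi'(t) = \chi(t) \cup S$. The resulting bags have size at most $(\eta+1) + |S| \le (\eta+1) + (2\eta+2) = 3\eta+3$, so the width is at most $3\eta+2$. I then check the tree decomposition axioms for $G[C\cup S]$: every vertex of $C$ sits in some bag by the original decomposition, and every vertex of $S$ sits in every bag by construction, so occurrences form connected subtrees. For any edge $uv \in E(G[C\cup S])$, either both endpoints lie in $C$ (handled by the original decomposition), or at least one endpoint lies in $S$ (and thus in every bag, so it appears with the other endpoint in any bag containing the other endpoint).

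For the boundary bound, I claim $\partial_G(C\cup S) \subseteq S$, which immediately yields $|\partial_G(C\cup S)| \le |S| \le 2\eta+2 \le 3\eta+2$. Suppose for contradiction that some $v \in C$ lies in $\partial_G(C \cup S)$. Then $v$ has a neighbor $u \in V(G) \sm (C \cup S)$, so in particular $u \in V(G) \sm C$ and $u \in N_G(C)$. But the $\eta$-separation condition says $N_G(C) \subseteq S$, so $u \in S$, contradicting $u \notin C \cup S$.

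There is no real obstacle here; the statement is essentially a bookkeeping consequence of the definitions, and the only substantive point is the standard observation that adding a fixed set $S$ to every bag of a tree decomposition preserves the decomposition property and increases the width by at most $|S|$. I would present the argument as one short paragraph for the treewidth bound and one sentence for the boundary bound.
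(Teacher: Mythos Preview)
Your proposal is correct and follows exactly the approach the paper itself indicates in the sentence preceding the observation: add $S$ to every bag of a width-$\eta$ tree decomposition of $G[C]$ to bound $\tw(G[C\cup S])$, and observe $\partial_G(C\cup S)\subseteq S$ for the boundary bound. You have simply spelled out the details that the paper leaves implicit.
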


\iffalse
    For a graph $G$ and $\eta \in \nn$, an $\eta$-separation is a pair $(C,S)$ such that $C,S \sub V(G)$ and one the following holds.
    \begin{enumerate}
        \item $\tw(C) \le \eta$, $|S| \le 2\eta$, and $N_G(C) = S$.\micr{we have $N(\emptyset) = \emptyset$}
        \item $C = \emptyset$ and $S = \{u,v\}$ for some $uv \in E(G)$.
    \end{enumerate}
\fi

\iffalse
\begin{observation}\label{lem:prelim:neigh}
    $C_1 \sub C_2$ implies $N_G[C_1] \sub N_G[C_2]$.
\end{observation}
\fi

The definition of a simple $\eta$-separation is tailored to make a convenient connection with important separators.

\begin{lemma}\label{lem:protr:important}
    Let $X \sub V(G)$ satisfy $\tw(G-X) \le \eta$, $(C,S)$ be a simple $\eta$-separation that is $\vartriangleleft$-maximal, and $v \in C$.
    If $X \cap (C \cup S) = \emptyset$ then $S$ is an important $(v,X)$-separator.
\end{lemma}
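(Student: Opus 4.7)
The plan is to verify the three defining properties of an important $(v,X)$-separator in turn, each time exploiting the $\vartriangleleft$-maximality of $(C,S)$ as the main lever. The common template for the two nontrivial steps will be: given any candidate ``better'' separator $T$ (either a proper subset of $S$, or a competing separator $S'$), form the pair $(R_{G-T}(v),\, N_G(R_{G-T}(v)))$ and show it is a simple $\eta$-separation strictly above $(C,S)$ with respect to $\vartriangleleft$, contradicting maximality.

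For the separator property itself, I would observe that since $(C,S)$ is simple we have $R_{G-S}(v) = C$ (using that $G[C]$ is connected and $N_G(C)=S$), so the hypothesis $X \cap (C \cup S) = \emptyset$ immediately gives $R_{G-S}(v) \cap X = \emptyset$.

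For inclusion-minimality, I would suppose toward a contradiction that $T := S \setminus \{u\}$ is still a $(v,X)$-separator for some $u \in S$, and set $C' := R_{G-T}(v)$. Because $S = N_G(C)$, the vertex $u$ has a neighbour in $C$, so $u$ joins the component of $v$ in $G-T$; combined with the trivial inclusion $C \subseteq C'$ (removing fewer vertices can only enlarge the reachable set), this gives $C \subsetneq C'$. Standard reachable-component arguments show that any external neighbour of $C'$ must lie in $T$, hence $N_G(C') \subseteq T$ and $|N_G(C')| \le 2\eta+1$; and because $T$ is a $(v,X)$-separator we have $C' \cap X = \emptyset$, so $G[C'] \subseteq G-X$ and $\tw(G[C']) \le \eta$. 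Thus $(C', N_G(C'))$ is a simple $\eta$-separation with $C' \supsetneq C$, contradicting $\vartriangleleft$-maximality. The argument for the third condition is virtually identical: given $S'$ with $|S'| \le |S| \le 2\eta+2$ and $R_{G-S'}(v) \supsetneq C$, the same reasoning turns $(R_{G-S'}(v),\, N_G(R_{G-S'}(v)))$ into a simple $\eta$-separation dominating $(C,S)$ strictly, again contradicting maximality.

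The only point needing real care is verifying the treewidth bound $\tw(G[C']) \le \eta$ for the newly constructed pair; this is precisely where the hypothesis $X \cap (C \cup S) = \emptyset$ enters the proof, by routing through $C' \cap X = \emptyset$ and monotonicity of treewidth under subgraphs of $G-X$. Bounding $|N_G(C')|$ by $|T|$ and checking that $G[C']$ is connected are routine once one notes that external neighbours of a connected component of $G-T$ must lie in $T$, and that $C'$ is such a component by construction.
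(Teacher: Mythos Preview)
Your proposal is correct and follows essentially the same approach as the paper: verify that $S$ separates $v$ from $X$, then for both inclusion-minimality and the ``no strictly better separator'' condition build a simple $\eta$-separation $(C',N_G(C'))$ with $C'\supsetneq C$ from the hypothetical improvement, using $C'\cap X=\emptyset$ to get $\tw(G[C'])\le\eta$ and thereby contradict $\vartriangleleft$-maximality. The only cosmetic differences are that the paper passes to an inclusion-minimal subset $S'\subsetneq S$ (so that $N_G(C')=S'$ exactly) where you remove a single vertex and work with $N_G(C')\subseteq T$, and likewise assumes inclusion-minimality of the competing $\widehat S$ in the third step; both variants are equally valid.
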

\begin{proof}
    Since $v \in C \sub V(G) \sm X$ and $N(C) = S \sub V(G) \sm X$, the set $S$ is a $(v,X)$-separator.
    We argue that $S$ is inclusion-minimal.
    Suppose that this is not the case and let $S' \subsetneq S$ be an inclusion-minimal $(v,X)$-separator.
    Let $C' = R_{G-S'}(v)$ so $N(C') = S'$.
    Observe that $C' \cap X = \emptyset$ so $\tw(G[C']) \le \eta$ and $(C', S')$ is also a simple $\eta$-separation.
    We have $C \sub C'$ and $S \sm S' \sub C'$ so $C$ is a proper subset $C'$. %\micr{maybe use just this? hmm, but edge-pairs}
    %Moreover, $C \cup S$ is a subset of $C' \cup S' = N[C']$.
    %\Cref{lem:protr:greater} implies that $(C,S) \vartriangleleft (C', S')$ and so
    Hence $(C,S)$ could not have been $\vartriangleleft$-maximal.

    Next, suppose that there is a $(v,X)$-separator $\widehat S$ for which $C = R_{G-S}(v) \subsetneq R_{G-\widehat{S}}(v)$ and $|\widehat S| \le |S|$.
    We can assume that $\widehat S$ is inclusion-minimal.
    Consider $\widehat C = R_{G-\widehat{S}}(v)$; we have $N(\widehat C) = \widehat S$.
    Again, $\widehat C \cap X = \emptyset$ so $\tw(G[\widehat C]) \le \eta$ and $(\widehat C, \widehat S)$ is a simple $\eta$-separation.
    We have $C \subsetneq\widehat C$ and $(C,S) \vartriangleleft (\widehat C,\widehat S)$.
    %We have already established th
    %We obtain  $(C,S) \vartriangleleft (\widehat C, \widehat S)$ from \Cref{lem:protr:greater} because $C \subsetneq\widehat C$.
    This yields a contradiction and implies that indeed $S$ is an important $(v,X)$-separator.
\end{proof}

%As simple $\eta$-separations behave like important separators with respect to a treewidth-$\eta$ modulator~$X$, semi-simple $\eta$-separations can be compared to {\em important clusters} from the work~\cite{MarxR14}. 

In the following construction we will use an equivalence relation $(C_1,S_1) \approx (C_2,S_2)$ defined as $S_1 = S_2$.
For a family $\{(C_1,S), \dots, (C_\ell,S)\}$ we define its {\em merge} as $(C_1 \cup \dots \cup C_\ell, S)$.
Note that when each $(C_i,S)$ is a 
simple $\eta$-separation then the merge is semi-simple.

\begin{definition}  
    We construct the family  $\shat(G,\eta)$ as follows.
    \begin{enumerate}
        \item Let $\scal_{all}(G,\eta)$ be the set of all  simple $\eta$-separations in $G$.
        \item Let $\scal_{max}(G,\eta)$ be the set of those elements in $\scal_{all}(G,\eta)$ that are $\vartriangleleft$-maximal. % with respect to relation $\vartriangleleft$.
        \item Initialize $\shat(G,\eta)$ as empty.
        \item For each $(\approx)$-equivalence class in $\scal_{max}(G,\eta)$ insert its merge into $\shat(G,\eta)$.
        \item For each $uv \in E(G)$, if there is no $(C,S) \in \scal_{all}(G,\eta)$ for which $\{u,v\} \sub C \cup S$, insert the pair $(\emptyset, \{u,v\})$ into $\shat(G,\eta)$.
    \end{enumerate}
\end{definition}

We will refer to the pairs considered in the last step as {\em edge-pairs}.
%An {\em edge-pair} $(C,S)$ is given as $C = \emptyset$ and $S = \{u,v\}$ for some $uv \in E(G)$.
Note that every edge-pair forms an  $\eta$-separation for any $\eta \ge 1$.

We make note of several important properties of this construction, to be used in the main proof.
%The main proof will be oblivious to how $\shat(G,\eta)$ is constructed and will only rely on these facts.

\begin{observation}\label{obs:protr:prop}
    The family $\shat(G,\eta)$ enjoys the following properties.
    \begin{enumerate}
        \item Each element of $\shat(G,\eta)$ is either an edge-pair or a semi-simple $\eta$-separation.\label{item:protr:semisimple}
        \item If $(C,S) \in \shat(G,\eta)$ is an edge-pair $(C,S) = (\emptyset, \{u,v\})$ then there exists no simple $\eta$-separation $(C',S')$ in $G$ for which $\{u,v\} \sub C' \cup S'$.\label{item:protr:edge-pair}
        \item If $(C,S) \in \shat(G,\eta)$ is semi-simple and $v \in C$ then $(R_{G-S}(v),S)$ is a $\vartriangleleft$-maximal simple $\eta$-separation.\label{item:protr:maximal}
        \item For each edge $uv \in E(G)$ there exists $(C,S) \in \shat(G,\eta)$ with $\{u,v\} \sub C \cup S$.\label{item:protr:edge}
        \item For each $S \sub V(G)$ there is at most one $C \sub V(G)$ for which $(C,S) \in \shat(G,\eta)$.
        \label{item:protr:unique}        
    \end{enumerate}
\end{observation}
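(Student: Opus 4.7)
The plan is to verify each of the five properties by tracing them back to the construction of $\shat(G,\eta)$; none require new ideas, but a few points need care.

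For item (1), the only non-edge-pair elements of $\shat(G,\eta)$ are merges $(C_1 \cup \dots \cup C_\ell,\,S)$ of $\approx$-equivalence classes in $\scal_{max}(G,\eta)$. Each $(C_i,S)$ is a simple $\eta$-separation with $N_G(C_i)=S$ and $G[C_i]$ connected, so distinct $C_i$ are distinct connected components of $G-S$ and hence pairwise disjoint. Therefore $G[C_1 \cup \dots \cup C_\ell]$ is the disjoint union of the $G[C_i]$, giving treewidth at most $\eta$; also $N_G(C_1\cup\dots\cup C_\ell)\sub S$ and $|S|\le 2\eta+2$, so the merge is an $\eta$-separation. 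For any $v$ in the merge, $v$ belongs to a unique $C_i$ and then $R_{G-S}(v) = C_i$, so $(R_{G-S}(v),S)=(C_i,S)$ is simple; this confirms the semi-simple property and, since $(C_i,S)\in\scal_{max}(G,\eta)$, also gives item (3).

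Item (2) is immediate from the guard condition in the edge-pair step. For item (4), given $uv\in E(G)$, either no simple $\eta$-separation covers $\{u,v\}$, in which case $(\emptyset,\{u,v\})$ is inserted as an edge-pair, or some simple $\eta$-separation $(C',S')$ with $\{u,v\}\sub C'\cup S'$ exists. In the latter case I would ascend: since $2^{V(G)}$ is finite, any chain of $\vartriangleleft$-comparable simple $\eta$-separations above $(C',S')$ terminates at some $(C^*,S^*)\in\scal_{max}(G,\eta)$; then \Cref{lem:protr:greater} gives $\{u,v\}\sub C'\cup S'\sub C^*\cup S^*$, and the merge of the $\approx$-class of $(C^*,S^*)$ is an element of $\shat(G,\eta)$ containing $\{u,v\}$.

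Item (5) amounts to ruling out two collisions for a fixed $S$. The merge step produces exactly one semi-simple element per $\approx$-class, and distinct classes have distinct second components, so at most one semi-simple pair in $\shat(G,\eta)$ has boundary $S$. An edge-pair has the form $(\emptyset,\{u,v\})$ with $uv\in E(G)$; if a semi-simple $(C,\{u,v\})\in\shat(G,\eta)$ also existed then $C\ne\emptyset$, and by item (3) there would be a simple $\eta$-separation with boundary $\{u,v\}$ covering $\{u,v\}$, contradicting the guard that triggered the insertion of the edge-pair. The main subtlety is this interaction between the two construction steps, together with verifying that the merge of common-boundary simple separations yields a legitimate $\eta$-separation — both handled by the component-wise disjointness observed above.
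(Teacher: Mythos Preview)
Your proposal is correct and supplies exactly the routine verification the paper leaves implicit; the paper states this as an observation without proof, and your argument unwinds the construction step by step in the natural way (merges of a common-boundary class are disjoint components of $G-S$, the edge-pair guard gives items (2) and (4), and uniqueness follows by comparing the two insertion mechanisms). There is nothing to add.
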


\begin{lemma}\label{lem:protr:size}
    For an $n$-vertex graph $G$, the family $\shat(G,\eta)$ has size $n^{\Oh(\eta)}$ and it can be computed in time $n^{\Oh(\eta)}$.
\end{lemma}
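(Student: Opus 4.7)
The plan has two parts: bounding $|\shat(G,\eta)|$ and describing the construction algorithm. Both rely on enumerating the possible boundary sets of size at most $2\eta+2$, of which there are $n^{\Oh(\eta)}$.

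For the cardinality bound I would split $\shat(G,\eta)$ into edge-pairs and the rest. By Observation~\ref{obs:protr:prop}(\ref{item:protr:unique}), each non-edge-pair $(C,S) \in \shat(G,\eta)$ is uniquely determined by its boundary~$S$, and since $|S| \le 2\eta+2$ there are at most $\sum_{i \le 2\eta+2}\binom{n}{i} = n^{\Oh(\eta)}$ such pairs. Edge-pairs are indexed by a subset of $E(G)$, contributing at most $\binom{n}{2}$ more. Hence $|\shat(G,\eta)| = n^{\Oh(\eta)}$.

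For the construction, the plan is to first enumerate $\scal_{all}(G,\eta)$ and then filter and merge. Iterate over every $S \sub V(G)$ with $|S| \le 2\eta+2$; for each such $S$ compute the connected components of $G - S$ in linear time, discard those components $D$ for which $N_G(D) \ne S$, and for the remaining ones test whether $\tw(G[D]) \le \eta$ using the linear-time treewidth check from \Cref{lem:prelim:f-free}. A candidate $(D,S)$ passes all checks iff it is a simple $\eta$-separation, so the collected candidates are exactly $\scal_{all}(G,\eta)$. Its cardinality is at most $n^{\Oh(\eta)}$ since each of the $n^{\Oh(\eta)}$ choices of $S$ contributes at most $n$ candidates.

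Next, isolate the $\vartriangleleft$-maximal elements by a pairwise comparison: $(C,S)$ survives iff no other candidate $(C',S')$ satisfies $C \subsetneq C'$. The surviving separations are grouped by their boundary (the relation $\approx$) and each class is merged into a single semi-simple pair, which is inserted into $\shat(G,\eta)$. Finally, for every edge $uv \in E(G)$ test whether some $(C,S) \in \scal_{all}(G,\eta)$ contains $\{u,v\} \sub C \cup S$; if not, insert $(\emptyset,\{u,v\})$. Each of these steps is polynomial in $|\scal_{all}(G,\eta)| + |V(G)| + |E(G)|$, so the total running time is $n^{\Oh(\eta)}$. I do not anticipate a substantive obstacle: the lemma is essentially a bookkeeping consequence of the uniqueness property from Observation~\ref{obs:protr:prop}(\ref{item:protr:unique}) combined with the $n^{\Oh(\eta)}$ enumeration of small subsets and the linear-time treewidth test.
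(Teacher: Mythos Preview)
Your proposal is correct and follows essentially the same approach as the paper: enumerate $\scal_{all}(G,\eta)$ by iterating over the $n^{\Oh(\eta)}$ choices of $S$ and the at most $n$ components of $G-S$, then filter for $\vartriangleleft$-maximality, merge $(\approx)$-classes, and handle edge-pairs, all in time polynomial in $|\scal_{all}(G,\eta)|$. The only cosmetic difference is that for the size bound on $\shat(G,\eta)$ you invoke Observation~\ref{obs:protr:prop}(\ref{item:protr:unique}) directly, whereas the paper bounds it via $|\shat(G,\eta)| \le |\scal_{max}(G,\eta)| + n^2 \le |\scal_{all}(G,\eta)| + n^2$; both are immediate.
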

\begin{proof}
    %The set $\scal_{all}(G,\eta)$ comprises all edge-pairs and simple $\eta$-separation in $G$.
    %The number of edge-pairs in $G$ is clearly at most $n^2$.
    Each simple $\eta$-separation $(C,S)$ can be characterized by the choice of $S$ and the choice of a single connected component of $G-S$.
    There are $n^{2\eta+2}$ possibilities for $S$ and $n$ possibilities to choose a component of $G-S$.
    Therefore, $|\scal_{all}(G,\eta)| \le n^{\Oh(\eta)}$.

    It suffices to observe that $|\shat(G,\eta)| \le |\scal_{max}(G,\eta)| + n^2 \le |\scal_{all}(G,\eta)| + n^2$.
    Identification of the $\vartriangleleft$-maximal elements and the  $(\approx)$-equivalence classes can be easily performed in polynomial time with respect to the size of $\scal_{all}(G,\eta)$.
\end{proof}

Before we move on to the main proof, we need one more technical lemma to bound the number of simple $\eta$-separations of a certain kind.

\begin{lemma}\label{lem:protr:lca}
    Let $(T,\chi)$ be a tree decomposition of $G$ of width $\eta$, $S \sub V(T)$, $L = \overline{\mathsf{LCA}}(S)$, and $\widehat D = \bigcup_{t\in L} \chi(t)$.
    Then there exists a family $\bcal$ of subsets of $\widehat D$ such that $|\bcal| \le 4^{\eta+2} \cdot |L| + 1$ and for each connected component $C$ of $G - \widehat D$ it holds that $N_G(C) \in \bcal$.
\end{lemma}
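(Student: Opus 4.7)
The plan is to exploit the tree-decomposition structure to localize each connected component of $G - \widehat{D}$ inside a single connected component of $T - L$, then to count possible neighborhoods using the fact, guaranteed by \Cref{lem:treewidth:lca}, that each such tree-component has at most two neighbors in $L$.

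First I would establish a localization claim. For any $v \in V(G) \sm \widehat{D}$, the set $T_v = \{t \in V(T) : v \in \chi(t)\}$ is a nonempty connected subtree of $T$ that avoids $L$, hence lies inside a single component of $T - L$. If $uv \in E(G)$ with both endpoints outside $\widehat{D}$, then some bag contains both of them, so $T_u$ and $T_v$ lie in the same component of $T - L$. Propagating this along paths in $G - \widehat{D}$ shows that every connected component $C$ of $G - \widehat{D}$ is confined to a unique component $K_C$ of $T - L$.

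Next I would bound $N_G(C) \sub \bigcup_{t \in N_T(K_C)} \chi(t)$. Given $v \in N_G(C)$, pick $u \in C$ with $uv \in E(G)$ and some bag $\chi(s)$ containing both; note $s \in K_C$ since $u \notin \widehat{D}$. The subtree $T_v$ is connected and contains both $s$ and at least one $L$-node (because $v \in \widehat{D}$), so the unique path from $s$ to $L$ inside $T_v$ first enters $L$ at a node of $N_T(K_C)$, whose bag therefore contains $v$. Combined with $|N_T(K_C)| \le 2$ from \Cref{lem:treewidth:lca}, this shows $N_G(C)$ is a subset of a vertex set of size at most $2(\eta+1)$.

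Finally I would group the neighborhoods by the underlying set $N_T(K_C) \sub L$. Each nonempty $N \sub L$ arising as some $N_T(K_C)$ contributes at most $2^{2\eta+2} = 4^{\eta+1}$ subsets to $\bcal$. Singleton $N$'s contribute at most $|L|$ such sets. For pairs I would build an auxiliary multigraph $M$ on vertex set $L$ containing one edge $t_1 t_2$ for each component $K$ of $T - L$ with $N_T(K) = \{t_1,t_2\}$, and argue that $M$ is a forest: any cycle in $M$ would lift, via the corresponding internally-disjoint paths through the components $K$, to a cycle in $T$, which is impossible. Hence pair-values contribute at most $|L|-1$ distinct sets. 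Combining gives $|\bcal| \le 4^{\eta+1}(2|L|-1) \le 4^{\eta+2}|L|$; the additional $+1$ absorbs the degenerate case $L = \emptyset$, where $\widehat{D} = \emptyset$ and $\bcal = \{\emptyset\}$ works. The main subtle point is the forest argument for $M$; once the lifting of edges of $M$ to genuine paths in $T - L$ is written out carefully, the rest is routine tree-decomposition bookkeeping.
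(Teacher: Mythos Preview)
Your proof is correct and follows essentially the same approach as the paper: localize each component of $G-\widehat D$ to a component of $T-L$, show its neighborhood sits inside the union of at most two $L$-bags, and then count the possible pairs of $L$-nodes. The only cosmetic difference is in how the auxiliary structure on $L$ is handled: the paper contracts each component of $T-L$ into one of its $L$-neighbors to obtain a tree $T'$ on $L$ (so $|E(T')|=|L|-1$ for free), whereas you build the multigraph $M$ and argue directly that a cycle in $M$ would lift to a cycle in $T$; these are two phrasings of the same fact.
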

\begin{proof}
    First suppose that $L = \emptyset$.
    Then $\widehat D = \emptyset$ and for every connected component $C$ of $G - \widehat D = G$ we have $N_G(C) = \emptyset$.
    Hence $\bcal = \{\emptyset\}$ satisfies the lemma.

    From now on we assume $L \ne \emptyset$ and so we have  $\widehat D \ne \emptyset$.
    Consider a graph $T'$ obtained from $T$ by contracting each connected component of $T - L$ into a one of its neighbors in $L$.
    Clearly $T'$ is a tree as a contraction of $T$.
    We construct $\bcal$ as follows:
    for each $x \in L$ take all the subsets of $\chi(x)$  and for each edge $xy \in E(T')$ add take all the subsets of $\chi(x) \cup \chi(y)$.
    Since $|E(T')| = |L| - 1$ we have $|\bcal| \le 2^{\eta + 1} \cdot |L| + 4^{\eta + 1}  \cdot (|L|-1) \le 4^{\eta + 2} \cdot |L|$.
    
     \Cref{lem:treewidth:lca} states that every connected component of $T - L$ has at most 2 neighbors in $L$.
    %Consequently, every 
    %Observe that when $N(C') = \{x,y\}$ then $xy \in E(T')$.
    Consider a connected component $C$ of $G - \widehat D$.
    There is a unique connected component $C_T$ of $T - L$ such that $C \sub \bigcup_{t \in C_T} \chi(t)$.
    If $N_T(C_T) = \{x\}$ then $N_G(C)$ is contained in $\chi(x)$.
    If $N_T(C_T) = \{x,y\}$ then $xy \in E(T')$ and $N_G(C) \sub \chi(x) \cup \chi(y)$.
    In both cases $N_G(C)$ belongs to $\bcal$.
\end{proof}

We are ready to prove our main technical result, leading to \Cref{thm:main:cover}.
We need to assume $\tw(G) > \eta$ because otherwise $X$ might be empty while $\shat(G,\eta)$ would contain a single semi-simple $\eta$-separation $(V(G),\emptyset)$.
%Similarly, in \Cref{lem:protr:lca} we had to assume $S \ne \emptyset$ as otherwise each  connected component has  an empty neighborhood and $|\bcal| = 1$, $|L| = 0$.

\begin{lemma}\label{lem:protr:cover}
    Suppose that $\tw(G) > \eta$ and $\tw(G-X) \le \eta$. 
    Let $\shat(G,\eta)_X^+ \sub \shat(G,\eta)$ comprise those pairs $(C,S)$ for which $X \cap (C \cup S) \ne \emptyset$. %that intersect $X$.
    Then $|\shat(G,\eta)| \le 2^{\Oh(\eta)} \cdot |\shat(G,\eta)_X^+|$.
\end{lemma}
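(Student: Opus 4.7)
The plan is to show $|\shat^-_X| \le (2^{\Oh(\eta)} - 1) \cdot |\shat(G,\eta)^+_X|$, writing $\shat^-_X := \shat(G,\eta) \sm \shat(G,\eta)^+_X$; the claim then follows by adding $|\shat(G,\eta)^+_X|$. The key auxiliary object is a \emph{marked set} $M \sub V(G) \sm X$, defined so that $m \in M$ iff some $(C,S) \in \shat(G,\eta)$ and some edge $uv \in E(G)$ with $u \in X$, $v \notin X$, satisfy $m \in S$ and $\{u,v\} \sub C \cup S$. Any such witness pair contains $u \in X$ and thus lies in $\shat(G,\eta)^+_X$; since $|S| \le 2\eta+2$, a double count gives $|M| \le (2\eta+2) \cdot |\shat(G,\eta)^+_X|$, so it suffices to prove $|\shat^-_X| \le 2^{\Oh(\eta)} \cdot |M|$ (with the degenerate case $M=\emptyset$ handled by a direct argument that $\shat^-_X$ is then empty, because every edge of $G$ must still be covered by \Cref{obs:protr:prop}(\ref{item:protr:edge})).

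I would then split $\shat^-_X$ according to whether $C$ meets $M$, following the outline in \Cref{sec:outline}. For semi-simple $(C,S) \in \shat^-_X$ with $C \cap M \neq \emptyset$, pick any $v \in C \cap M$: by \Cref{obs:protr:prop}(\ref{item:protr:maximal}), $(R_{G-S}(v),S)$ is a $\vartriangleleft$-maximal simple $\eta$-separation disjoint from $X$, so \Cref{lem:protr:important} asserts that $S$ is an important $(v,X)$-separator of size $\le 2\eta+2$. By \Cref{lem:prelim:important} there are at most $4^{2\eta+2}$ candidates for $S$ per $v$, and by \Cref{obs:protr:prop}(\ref{item:protr:unique}) $C$ is then pinned down, giving at most $4^{2\eta+2} \cdot |M|$ such pairs.

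The heart of the proof is to bound the remaining pairs: semi-simple $(C,S) \in \shat^-_X$ with $C \cap M = \emptyset$, together with the edge-pairs in $\shat^-_X$. I would take a tree decomposition $(T,\chi)$ of $G-X$ of width $\le \eta$, fix for each $m \in M$ a node $t_m$ with $m \in \chi(t_m)$, and form $L = \overline{\mathsf{LCA}}(\{t_m : m \in M\})$. Then $|L| \le 2|M|$ by \Cref{lem:treewidth:lca}, and for $\widehat D := \bigcup_{t \in L} \chi(t)$ we have $M \sub \widehat D$ and $|\widehat D| \le (\eta+1) \cdot 2|M|$; \Cref{lem:protr:lca} then produces a family $\bcal$ of $\Oh(4^\eta |M|)$ subsets of $\widehat D$ that contains $N_G(K)$ for every connected component $K$ of $(G-X) - \widehat D$. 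A win-win argument of the flavor illustrated by \Cref{fig:outline:cover} should show that each such $(C,S)$ has $S \in \bcal$: any $s \in S \sm \widehat D$ is, in particular, unmarked and hence not forced into the boundary by any edge incident to $X$; combined with \Cref{obs:protr:prop}(\ref{item:protr:maximal}) this would let us extend $(C,S)$ by absorbing $s$ into $C$ (or, for edge-pairs, cover $\{u,v\}$ by the simple separation around the hosting component in violation of \Cref{obs:protr:prop}(\ref{item:protr:edge-pair})), contradicting $\vartriangleleft$-maximality. Combined with \Cref{obs:protr:prop}(\ref{item:protr:unique}) this contributes $2^{\Oh(\eta)} \cdot |M|$ further pairs.

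The main obstacle I foresee is the win-win in the last paragraph: pinning down exactly how $\vartriangleleft$-maximality interacts with the position of $S$ relative to the skeleton $\widehat D$, and ruling out pathological configurations where $S$ weaves between components of $(G-X)-\widehat D$ without contributing any marked vertex. I expect the figure-level intuition — that a maximal simple $\eta$-separation whose boundary uses only unmarked vertices can always be pushed further into its home component because no incident $X$-edge blocks it — to carry over, but carrying this out uniformly across all shapes of $C$ and across both semi-simple and edge pairs is where the real combinatorial work lies.
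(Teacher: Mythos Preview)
Your architecture matches the paper's --- mark, build an LCA skeleton $\widehat D$ over a width-$\eta$ decomposition of $G-X$, important separators for pairs meeting the skeleton, \Cref{lem:protr:lca} for the rest --- but the win-win mechanism you sketch does not work. ``Absorb $s \in S \sm \widehat D$ into $C$ because $s$ is unmarked'' fails: nothing bounds $|N_G(C \cup \{s\})|$, so the enlarged pair need not be an $\eta$-separation at all. The paper's argument is different. Fix a simple piece $C_{sim} \sub C$ (so $(C_{sim},S)$ is $\vartriangleleft$-maximal by \Cref{obs:protr:prop}(\ref{item:protr:maximal})), let $\widehat C$ be the component of $(G-X)-\widehat D$ containing it, and branch on whether $\widehat C$ meets $N_G(X)$. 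If not, the LCA property gives $|N_G(\widehat C)| \le 2(\eta+1)$, so $(\widehat C, N_G(\widehat C))$ is itself a simple $\eta$-separation and maximality forces $C_{sim} = \widehat C$, whence $S = N_G(\widehat C) \in \bcal$. If some $v \in \widehat C \cap N_G(X)$ exists, take any $(C_v,S_v) \in \shat(G,\eta)$ covering an edge from $v$ to $X$; by your marking $S_v \sm X \sub M \sub \widehat D$, so $\widehat C$ avoids $S_v$ and the simple piece $R_{G-S_v}(v)$ swallows all of $\widehat C \supseteq C_{sim}$ while its closed neighbourhood meets $X$ --- contradicting $\vartriangleleft$-maximality of $(C_{sim},S)$. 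For this dichotomy you need $C_{sim} \cap \widehat D = \emptyset$, so the split should be on $\widehat D$ rather than on $M$; the important-separator count then ranges over $|\widehat D| = \Oh(\eta|M|)$ anchors, which costs nothing.

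Two smaller gaps. Edge-pairs with both endpoints in $\widehat D$ cannot be forced into $\bcal$ --- there is no hosting component, and an arbitrary $2$-subset of $\widehat D$ need not lie in $\bcal$ --- so the paper counts them separately by $|E(G[\widehat D])| \le \eta\,|\widehat D|$ via \Cref{lem:prelim:sparse}. And your $M=\emptyset$ fallback is not justified by \Cref{obs:protr:prop}(\ref{item:protr:edge}): a pair in $\shat^-_X$ need not cover any $X$-edge. The paper sidesteps this by first reducing to inclusion-minimal $X$ (so $N_G(X)\neq\emptyset$), choosing one covering pair per $v \in N_G(X)$ to form a nonempty subfamily $\shat_X \sub \shat^+_X$, and bounding $|\shat^-_X|$ against $|\shat_X|$ rather than against $|M|$; the ``$+1$'' in \Cref{lem:protr:lca} is then absorbed by $|\shat_X|\ge 1$.
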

\begin{proof}
    First, observe that it suffices to prove the lemma for $X$ being an inclusion-minimal treewidth-$\eta$ modulator because adding vertices to $X$ can only increase $|\shat(G,\eta)_X^+|$.
    Since $\tw(G) > \eta$ we know that $X$ is non-empty.
    We also know that $N_G(X) \ne \emptyset$: otherwise removing a single vertex $u$ from $X$ would create a singleton connected component in $G - (X \sm \{u\})$ which does not increase the treewidth (i.e., $\tw(G - (X \sm \{u\})) = \tw(G-X)$) and that would contradict minimality of $X$. 
    
    We begin by constructing a certain subset $\shat_X \sub \shat(G,\eta)_X^+$.
    For each $v \in N_G(X)$ pick some $u \in X$ such that $uv \in E(G)$; then choose some $(C_v,S_v) \in \shat(G,\eta)$ such that $\{u,v\} \sub C_v \cup S_v$ and insert  $(C_v,S_v)$ to $\shat_X$.
    We know by \Cref{obs:protr:prop}(\ref{item:protr:edge}) that such a pair exists in $\shat(G,\eta)$.
    Furthermore, $u \in X$ guarantees that $(C_v,S_v) \in \shat(G,\eta)_X^+$.
    Note that we may use the same pair $(C_v,S_v)$ multiple times so it might be that $|\shat_X| < |N_G(X)|$.
    But $N_G(X) \ne \emptyset$ ensures that $\shat_X \ne \emptyset$.

    For each $(C,S) \in |\shat_X|$ we mark the vertices in $S \sm X$.
    Let $D$ denote the set of marked vertices; we have $|D| \le 2(\eta+1) \cdot |\shat_X|$.
    Next, consider a tree decomposition $(T,\chi)$ of $G-X$ of width $\le \eta$.
    For each $v \in D$ we mark some $t \in V(T)$ for which $v \in \chi(t)$.
    Let $L$ denote the LCA closure of the set of marked nodes in $T$; we have $|L| \le 2 \cdot |D|$ (\Cref{lem:treewidth:lca}).
    Finally, let $\widehat D$ denote the union of vertices appearing in $\{\chi(t) \mid t \in L\}$.
    We have $D \sub \widehat D$ and  $|\widehat D| \le (\eta + 1)\cdot |L| \le 4 (\eta + 1)^2\cdot |\shat_X|$.

    We denote $\shat(G,\eta)_X^- = \shat(G,\eta) \sm \shat(G,\eta)_X^+$, that is, this set comprises those pairs $(C,S)$ for which $X \cap (C \cup S) = \emptyset$.
Our current goal is to prove that $|\shat(G,\eta)_X^-| \le 2^{\Oh(\eta)} \cdot |\shat_X|$.  
    Since $\shat_X$ is a subset of $\shat(G,\eta)_X^+$, this will entail the claimed inequality.
    To this end, we will inspect the types of elements appearing in $\shat(G,\eta)_X^-$: they are either edge-pairs or semi-simple $\eta$-separations.
    See \Cref{fig:protr:cover} for an illustration.
    %When it comes to semi-simple $\eta$-separations $(C,S)$, we only need to bound the number of possible sets $S$ because they are distinct 
    %(\Cref{item:protr:unique}).

    \begin{figure} 
\centering
\includegraphics[scale=1.05]{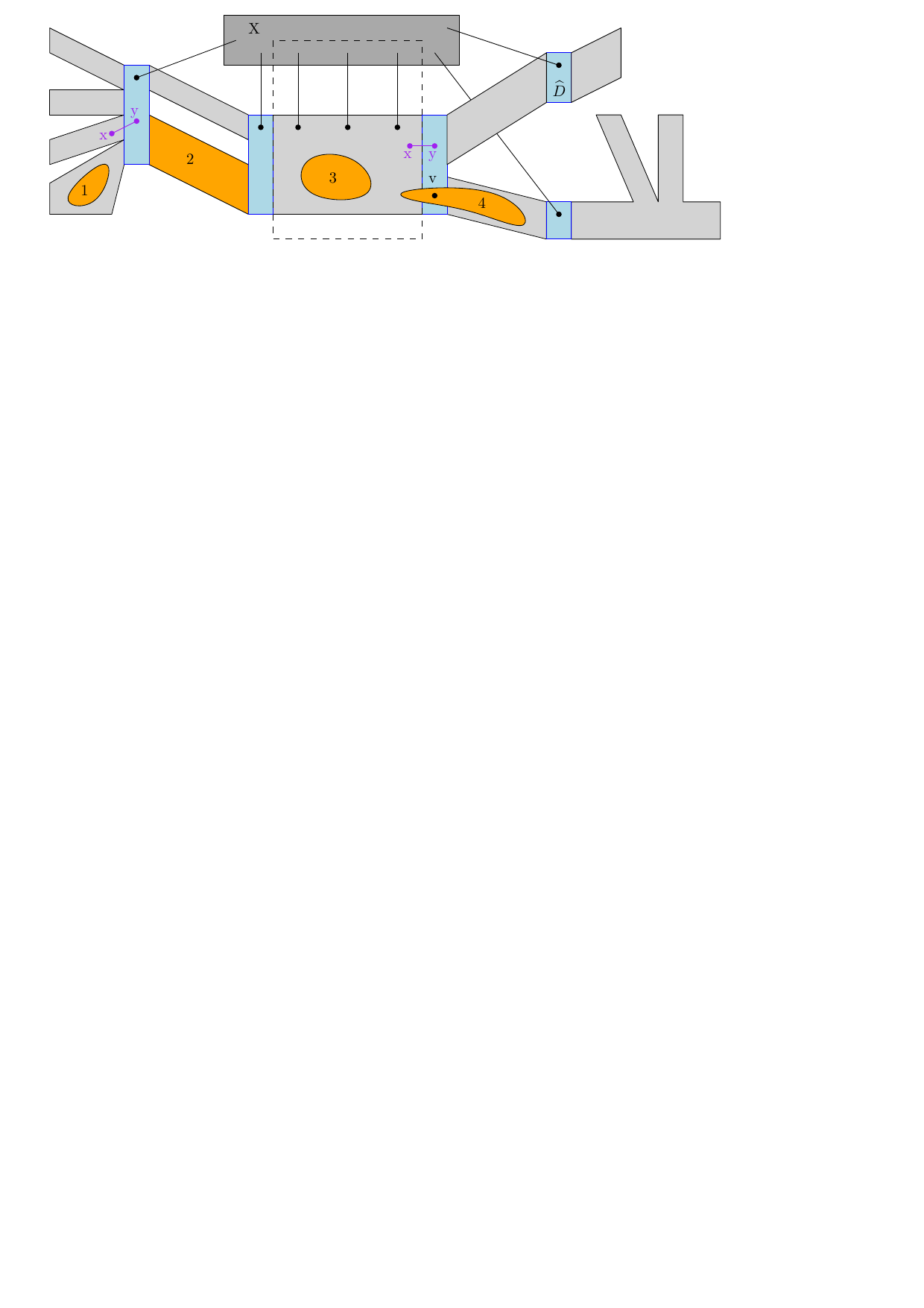}
\caption{An illustration of the case analysis in \Cref{lem:protr:cover}. 
The graph $G-X$ has treewidth bounded by $\eta$ and $\widehat{D} \sub V(G) \sm X$ is the union of vertices from the marked bags, drawn in blue.
The light-gray polygons are the connected components of $G - (X \cup \widehat{D})$.
Each orange set represents $C_{sim}$ in some scenario.
The options 1 and 2 illustrate the Case (1.a.i).
If $C_{sim}$ is a proper subset of $\widehat C$ (the gray one in option 1) then $(C_{sim},S)$ is not $\vartriangleleft$-maximal.
If $C_{sim} = \widehat C$ then $S$ is located within two bags of the tree decomposition of $G-X$ and this option boils down to \Cref{lem:protr:lca}.
The option 3 corresponds to Case (1.a.ii). Here the dashed rectangle depicts the interior of a $\vartriangleleft$-greater simple $\eta$-separation (denoted $(C^v_{sim},S_v)$ in the proof) which covers the three vertices from $N(X)$ as well as $C_{sim}$.
Also, two arrangements of the edge $xy$ from Case (2.a) are showed in purple.
The option 4 illustrates Case (1.b) with $v \in C_{sim} \cap \widehat D$. By \Cref{lem:protr:important} the set $S = N(C_{sim})$ must be an important $(v,X)$-separator.
} \label{fig:protr:cover}
\end{figure}

    \begin{enumerate}

        \item We bound the number of semi-simple $\eta$-separations in $\shat(G,\eta)_X^-$.
        Let $\bcal_X^-$ be the family of subsets $S \sub V(G) \sm X$ for which there exists $C_{sim} \sub V(G) \sm X$ such that $(C_{sim},S)$ is a $\vartriangleleft$-maximal simple $\eta$-separation in $G$.
        By \Cref{obs:protr:prop}(\ref{item:protr:maximal}) for each semi-simple $(C,S) \in \shat(G,\eta)_X^-$ there exists $C_{sim} \sub C$ so that $(C_{sim},S)$ is a $\vartriangleleft$-maximal simple $\eta$-separation.
        Hence by \Cref{obs:protr:prop}(\ref{item:protr:unique}) the number of semi-simple $\eta$-separations in $\shat(G,\eta)_X^-$ is bounded by $|\bcal_X^-|$ and it suffices to estimate the latter quantity.

        Let $S \in \bcal_X^-$ and $C_{sim} \sub V(G) \sm X$ be such that $(C_{sim},S)$ is a $\vartriangleleft$-maximal simple $\eta$-separation.

        \begin{enumerate}
            \item Suppose that $C_{sim} \cap \widehat D = \emptyset$. Let $\widehat C$ induce the connected component of $G-(X \cup \widehat D)$ that contains $C_{sim}$.
            \begin{enumerate}
                \item Suppose that $\widehat C \cap N_G(X) = \emptyset$.      
            By \Cref{lem:treewidth:lca} the set $N_{G-X}(\widehat C) = N_{G}(\widehat C)$ is contained in at most two bags from $L$ so $|N_G(\widehat C)| \le 2\eta+2$ and $(\widehat C,N_G(\widehat C))$ forms a simple $\eta$-separation.
            %Then $N(\widehat C) \sub $
            If $C_{sim}$ was a proper subset of $\widehat C$ then  %\Cref{lem:protr:greater} implies that $(C_{sim},S) \vartriangleleft (\widehat C,N(\widehat C))$ so $(C_{sim},S)$ 
            it would not be $\vartriangleleft$-maximal.
            Hence $C_{sim}$ corresponds to a connected component of $(G-X) - \widehat D$: we have $C_{sim} = \widehat C$ and $S = N_G(\widehat C)$.
            \Cref{lem:protr:lca} bounds the number of distinct neighborhoods of such components: the number of possibilities for the choice of such $S$ is bounded by $2^{\Oh(\eta)} \cdot |L| + 1 \le 2^{\Oh(\eta)} \cdot |\shat_X|$.
            Here we exploited the fact that $|L| \le 4(\eta+1)\cdot |\shat_X|$ and that $|\shat_X| \ge 1$.

            \item Now consider the case  $\widehat C \cap N_G(X) \ne \emptyset$.
            Let $v \in \widehat C \cap N_G(X)$ and $u$ be its neighbor in $X$
            that has been used during the construction of $\shat_X$.
            %By \Cref{obs:protr:prop}(\ref{item:protr:edge}) 
            Then there exists $(C_v,S_v) \in \shat_X$ for which  $\{u,v\} \sub C_v \cup S_v$.
            If $(C_v,S_v)$ was an edge-pair we would have marked the vertex $v$ but since $v \not\in \widehat D$ then $(C_v,S_v)$ must be semi-simple. % (by \Cref{item:protr:semisimple}).
            We also know that $v \in C_v$ because $v$ is not marked.
            
            Let $C^v_{sim} = R_{G-S_v}(v)$; then
        $(C^v_{sim},S_v)$ is a simple $\eta$-separation because  $(C_v,S_v)$ is semi-simple.
            We have $\widehat C \cap S_v = \emptyset$ (because $S_v \sm X \sub \widehat D$ by the marking scheme) so $C^v_{sim} \supseteq \widehat C\supseteq C_{sim}$.
            See \Cref{fig:protr:cover} for a visualization.
            In addition, $N_G[C^v_{sim}] \cap X \ne \emptyset$ implies that $C^v_{sim} \ne C_{sim}$ and so $(C_{sim},S)$ is not $\vartriangleleft$-maximal; a contradiction.
            %We infer from \Cref{lem:protr:greater} that $(C_{sim},S) \vartriangleleft (C^v_{sim},S_v)$ what contradicts the maximality of $(C_{sim},S)$.
            \end{enumerate}
     
        \item Suppose that $C_{sim} \cap \widehat D \ne \emptyset$.      
        %Let $(C,S) \in \shat(G,\eta)_X^-$ be semi-simple and $C \cap \widehat D \ne \emptyset$.
        Let $v \in C_{sim} \cap \widehat D$. % and $C' = R_{G-S}(v)$; then $(C',S)$ is simple and $\vartriangleleft$-maximal.\micr{ref}
        From \Cref{lem:protr:important} we know that $S$ must be an important $(v,X)$-separator.
        For each  $v \in \widehat D$ the number of important $(v,X)$-separators of size $\le(2\eta+2)$ is bounded by $4^{2\eta+2} = 2^{\Oh(\eta)}$ (\Cref{lem:prelim:important}).
        Therefore the maximal number of sets $S$ occurring in this scenario is $2^{\Oh(\eta)} \cdot |\widehat D| \le 2^{\Oh(\eta)} \cdot |\shat_X|$.

        \end{enumerate}

        \item We bound the number of edge-pairs in $\shat(G,\eta)_X^-$.    
        When $(C,S) \in \shat(G,\eta)_X^-$ is an edge-pair then
        $S = \{x,y\}$ for some $xy \in E(G-X)$ and $C = \emptyset$.
        By \Cref{obs:protr:prop}(\ref{item:protr:edge-pair}) we know that
        there can be no simple $\eta$-separation $(C',S')$ in $G$ for which $\{x,y\} \sub C' \cup S'$.
        %that is  $\vartriangleleft$-greater than $(C,S) = (\emptyset,\{x,y\})$.
        \begin{enumerate}
            \item Suppose that $\{x,y\} \not\sub \widehat D$. W.l.o.g.  assume that $x \not\in \widehat D$.
            Let $\widehat C$ % = R_{G-(X\cup\widehat D)}(x)$
            induce the connected component of $G-(X\cup\widehat D)$ that contains $x$. 
            \begin{enumerate}
            
            \item If $\widehat C \cap N_G(X) = \emptyset$,  we can use an analogous argument as in Case (1.a.i): 
            $\{x,y\} \sub N[\widehat C]$ so $(\widehat C, N_G(\widehat C))$ is a simple $\eta$-separation covering $xy$; a contradiction.
            %$(\emptyset,\{x,y\}) \vartriangleleft (\widehat C, N_G(\widehat C))$, what is impossible due to \Cref{obs:protr:prop}(\ref{item:protr:edge-pair}).
            %contradicts the maximality of $(\emptyset,\{x,y\})$.
            %Note that $y \in N_G[\widehat C]$ and the two pairs above must be different because $\widehat C \ne \emptyset$.
           %So this case cannot happen.
            
            \item If $\widehat C \cap N_G(X) \ne \emptyset$, we follow the argument from Case (1.a.ii): we use an~arbitrary $v \in \widehat C \cap N_G(X)$ to certify the existence of a simple $\eta$-separation $(C^v_{sim},S_v)$ for which $\widehat C \sub C^v_{sim}$.
            Then again $\{x,y\} \sub N_G[\widehat C] \sub N_G[C^v_{sim}]$ yields a contradiction.
            \end{enumerate}

            \item In remains to handle the case $\{x,y\} \sub \widehat D$.
            Then $xy$ is an edge in the graph $G[\widehat D]$.
            But the treewidth of this graph is at most $\eta$ so the number of edges in $G[\widehat D]$ is bounded by $\eta \cdot |\widehat D| \le 4(\eta+1)^3 \cdot |\shat_X|$ (\Cref{lem:prelim:sparse}).
            
        \end{enumerate}
     \end{enumerate}
     
We have established that $|\shat(G,\eta)_X^-| \le 2^{\Oh(\eta)} \cdot |\shat_X|$. 
    Recall that $\shat_X \sub \shat(G,\eta)_X^+$.
    This implies that $|\shat(G,\eta)| = |\shat(G,\eta)_X^+| + |\shat(G,\eta)_X^-| \le |\shat(G,\eta)_X^+| + 2^{\Oh(\eta)} \cdot |\shat_X| \le 2^{\Oh(\eta)} \cdot |\shat(G,\eta)_X^+|$, as promised.     
\end{proof}

To derive \Cref{thm:main:cover} from \Cref{lem:protr:cover}, recall that
 when $(C,S)$ is an $\eta$-separation then $(C\cup S)$ is a $(3\eta+2)$-protrusion.
 Hence $\shat(G,\eta)$ constitutes an $(\eta,3\eta+2, 2^{\Oh(\eta)})$\cover.
 The size bound and the running time to construct $\shat(G,\eta)$ follow from \Cref{lem:protr:size}.
    
\section{Exhaustive families}
 \label{sec:exha}

We explain how to enumerate the relevant partial solutions within an $r$-protrusion with Courcelle's Theorem.
A similar idea has been used for trimming the space of partial solutions in DP algorithms on tree decomposition~\cite{BasteST20, fomin2016efficient}.
Our treatment of protrusions is a combination of ideas from~\cite{kim2015linear} for processing protrusion decompositions and from~\cite{JansenK021} for processing hybrid graph decompositions.
We adapt their methods to the weighted setting.
%The following concept of an exhaustive family has been used in~\cite{JansenK021} to deal with unweighted problems on hybrid graph decompositions.

\begin{definition}[\cite{JansenK021}]
    Let $\fcal$ be a finite family of graphs and $A$ be a vertex subset in a node-weighted graph $G$.
    We say that a family $\acal \sub 2^A$ of  subsets of $A$ is {\em $(A,\fcal)$-exhaustive} if for every  $\fcal$-minor hitting set $X$ there exists $X_A \in \acal$ such that $X' = (X \sm A) \cup X_A$ is also an $\fcal$-minor hitting set and $w(X') \le w(X)$.

    Additionally, for $\ell \in \nn$ we say that a family $\acal \sub 2^A$ of  size-$(\le \ell)$ subsets of $A$ is {\em $(A,\fcal,\ell)$-exhaustive} if for every  $\fcal$-minor hitting set $X$ satisfying $|X \cap A|  \le \ell$ there exists $X_A \in \acal$ such that $X' = (X \sm A) \cup X_A$ is also an $\fcal$-minor hitting set and $w(X') \le w(X)$.
\end{definition}

When $X$ is an optimal solution then the replacement $X_A$ cannot be cheaper than $X \cap A$ as otherwise we would have $w(X') < w(X)$.
In particular, whenever $X$ is optimal and $X \cap A$ is non-empty then $w(X_A) = w(X \cap A) > 0$ so $X_A$ must be non-empty as well.

\begin{observation}\label{lem:exha:opt-normal}
    Let $X$ be an optimal $\fcal$-minor hitting set in $G$, $A$ be a vertex subset intersecting $X$, and $\acal \sub 2^A$ be an $(A,\fcal)$-exhaustive family.
    Then there exists a non-empty $X_A \in \acal$ such that $\opt_\fcal(G) - \opt_\fcal(G-X_A) = w(X_A)$.
\end{observation}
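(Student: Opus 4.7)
The plan is to invoke the definition of an $(A,\fcal)$-exhaustive family on the given optimal $X$, use optimality to conclude that the promised replacement is a non-empty set of the right weight, and then verify the claimed equality by a standard two-sided inequality between optimum values.

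Concretely, I would first apply the exhaustive property to $X$ to obtain some $X_A \in \acal$ such that $X' = (X \setminus A) \cup X_A$ is an $\fcal$-minor hitting set in $G$ with $w(X') \le w(X)$. Since $X$ is optimal and $X'$ is also a hitting set we get $w(X') = w(X) = \opt_\fcal(G)$. Because $X_A \subseteq A$ is disjoint from $X \setminus A$, we have $w(X') = w(X \setminus A) + w(X_A)$, while $w(X) = w(X \setminus A) + w(X \cap A)$; hence $w(X_A) = w(X \cap A)$. The assumption $A \cap X \ne \emptyset$ combined with strict positivity of the weight function then forces $w(X_A) > 0$, so in particular $X_A \ne \emptyset$.

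It remains to show $\opt_\fcal(G) - \opt_\fcal(G-X_A) = w(X_A)$. For the upper bound on $\opt_\fcal(G-X_A)$, observe that $X' \setminus X_A = X \setminus A$ is an $\fcal$-minor hitting set in $G - X_A$ (since removing $X_A$ from $G$ and then removing $X \setminus A$ yields $G - X'$, which is $\fcal$-minor-free), so
\[\opt_\fcal(G-X_A) \;\le\; w(X \setminus A) \;=\; w(X') - w(X_A) \;=\; \opt_\fcal(G) - w(X_A).\]
For the matching lower bound, take any optimal $\fcal$-minor hitting set $Y$ in $G-X_A$; then $Y \cup X_A$ is an $\fcal$-minor hitting set in $G$ and $Y \cap X_A = \emptyset$, so
\[\opt_\fcal(G) \;\le\; w(Y) + w(X_A) \;=\; \opt_\fcal(G-X_A) + w(X_A).\]
Combining the two inequalities gives the claimed equality.

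There is no real obstacle here: the statement is essentially unpacking what it means for an exhaustive family to preserve optimality. The only subtlety worth flagging explicitly is that one must use the strict positivity of $w$ to promote ``$w(X_A) = w(X \cap A)$ with $X \cap A \ne \emptyset$'' into ``$X_A$ is non-empty'', since the exhaustive-family definition by itself does not exclude an empty replacement set.
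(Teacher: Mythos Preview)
Your proof is correct and follows essentially the same approach as the paper. The paper only gives a one-sentence justification before the observation (noting that optimality forces $w(X_A)=w(X\cap A)>0$, hence $X_A\ne\emptyset$), and your argument expands this and mirrors the paper's full proof of the analogous \Cref{lem:exha:opt-k}, including the same two-sided inequality for the optimum equality.
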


An analogous statement holds for $k$-optimal solutions as well but the justification is slightly less obvious so we provide a full proof.

\begin{lemma}\label{lem:exha:opt-k}
    Let $k,\ell \ge 1$, $X$ be a $k$-optimal $\fcal$-minor hitting set in $G$, $A$ be a vertex subset satisfying $|X \cap A| = \ell$, and $\acal \sub 2^A$ be an $(A,\fcal,\ell)$-exhaustive family.
    Then there exists a non-empty $X_A \in \acal$ such that $\opt_{\fcal,k}(G) - \opt_{\fcal,k-\ell}(G-X_A) = w(X_A)$.
\end{lemma}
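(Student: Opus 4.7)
The plan is to apply the definition of $(A,\fcal,\ell)$-exhaustiveness directly to $X$ itself. Since $|X\cap A| = \ell$, this definition produces some $X_A \in \acal$ (with $|X_A| \le \ell$) such that $X' := (X\sm A)\cup X_A$ is an $\fcal$-minor hitting set with $w(X') \le w(X)$. The first task is to argue $w(X') = w(X)$: note $|X'| \le |X\sm A| + |X_A| \le (k-\ell)+\ell = k$, so $X'$ is a valid candidate in the definition of $\opt_{\fcal,k}(G)$, and the $k$-optimality of $X$ forces $w(X') = w(X) = \opt_{\fcal,k}(G)$. Equivalently, $w(X_A) = w(X\cap A)$. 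Since $\ell\ge 1$ and the weight function is strictly positive, this quantity is strictly positive, so $X_A\ne\emptyset$.

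Next I would establish $\opt_{\fcal,k-\ell}(G - X_A) = w(X\sm A)$. The upper bound is immediate: $X\sm A$ is disjoint from $X_A$, it has at most $k-\ell$ vertices, and $(G-X_A)-(X\sm A)=G-X'$ is $\fcal$-minor-free, so $X\sm A$ is a feasible solution in $G-X_A$ for the $(k-\ell)$-bounded problem. For the lower bound, suppose some $Y\sub V(G-X_A)$ with $|Y|\le k-\ell$ is an $\fcal$-minor hitting set in $G-X_A$ of weight strictly smaller than $w(X\sm A)$. Then $Y\cup X_A$ is disjoint-union, has size at most $k$, and is an $\fcal$-minor hitting set in $G$ of weight $w(Y) + w(X_A) < w(X\sm A)+w(X_A) = w(X') = w(X) = \opt_{\fcal,k}(G)$, contradicting the $k$-optimality of $X$.

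Combining these, $\opt_{\fcal,k}(G) - \opt_{\fcal,k-\ell}(G-X_A) = w(X) - w(X\sm A) = w(X\cap A) = w(X_A)$, which is the required equality. There is no genuine obstacle here; the proof is essentially bookkeeping with the definitions. The only subtlety worth emphasizing is the need to verify the size constraint $|X'|\le k$ when invoking $k$-optimality (so that we may actually compare $w(X')$ with $\opt_{\fcal,k}(G)$) and symmetrically the constraint $|Y\cup X_A|\le k$ when contradicting it. This is precisely the place where the definition of an $(A,\fcal,\ell)$-exhaustive family (giving subsets of size at most $\ell$) interacts crucially with the bound $|X\cap A|=\ell$.
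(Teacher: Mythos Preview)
Your proof is correct and follows essentially the same route as the paper's: apply the exhaustiveness definition to $X$, use the size bound $|X'|\le k$ together with $k$-optimality to force $w(X')=w(X)$ and hence $w(X_A)=w(X\cap A)>0$, then sandwich $\opt_{\fcal,k-\ell}(G-X_A)$ between $w(X\sm A)$ (upper bound via $X\sm A$) and $w(X\sm A)$ (lower bound via $\opt_{\fcal,k}(G)\le \opt_{\fcal,k-\ell}(G-X_A)+w(X_A)$, which you phrase by contradiction). The only cosmetic difference is that the paper states the lower bound as a direct inequality rather than arguing by contradiction.
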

\begin{proof}
    By the definition of an $(A,\fcal,\ell)$-exhaustive family there exists $X_A \in \acal$ such that $|X_A| \le \ell$, $X' = (X \sm A) \cup X_A$ is an $\fcal$-minor hitting set, and $w(X') \le w(X)$.
    We have $|X'| = |X \sm A| + |X_A| \le (k-\ell) + \ell = k$.
    But $X$ is $k$-optimal so in fact we must have $w(X') = w(X)$ and $w(X_A) = w(X \cap A)$.
    In particular, this implies that $X_A \ne \emptyset$ because $w(X \cap A) > 0$ by  assumption.
    Next, $X \sm A$ forms an $\fcal$-minor hitting set in $G-X_A$ of size $\le k-\ell$ so  $\opt_{\fcal,k-\ell}(G-X_A) \le w(X \sm A) = w(X) - w(X \cap A) = \opt_{\fcal,k}(G) - w(X_A)$.
    On the other hand, we have $\opt_{\fcal,k}(G) \le \opt_{\fcal,k-\ell}(G-X_A) + w(X_A)$ so in fact these two quantities must be equal.
\end{proof}

We take advantage of the fact that when $A$ is an $r$-protrusion then the graph $G-A$ can ``communicate'' with $A$ only through $\partial(A)$.
To construct the exhaustive families, we utilize the bounds on the minimum-sized representatives in the 
the relation $\equiv_h$ over boundaried graphs~\cite{BasteST23hittingIV}.
%Courcelle's Theorem implies that there are only $\Oh_r(1)$ many non-equivalent ``types'' of the subgraph $G - (A \cup X)$: for each of them it suffices to consider a single representative and compute a corresponding optimal partial solution on $A$. 
%The proof of the lemma below relies on a bound on the sizes of such  representatives from~\cite{BasteST23hittingIV} and is similar in spirit to~\cite[Lemma 5.33]{JansenK021-arxiv} so we have placed it in~\Cref{app:exha}.
The following proof is similar in spirit to~\cite[Lemma 5.33]{JansenK021-arxiv}.
 The difference is that here we compute partial solutions using an FPT algorithm parameterized by treewidth while therein %~\cite{JansenK021-arxiv}
 they were computed with an FPT algorithm parameterized by the solution size.

\begin{restatable}{lemma}{lemExha}
\label{lem:exha:compute}
    %For each $h \in \nn$ there is a constant $c_$
    There is a computable function $f \colon \nn^2 \to \nn$  so that the following holds.
    Let $r, h \in \nn$ and
     $\fcal$ be a family of graphs with sizes bounded by $h$.
     Next, let $A \sub V(G)$ be an $r$-protrusion in a graph $G$.
    Then there exists an $(A,\fcal)$-exhaustive family $\mathcal{A} \sub 2^A$ of size at most $f(h,r)$ and it can be computed in time $f(h,r)\cdot |V(G)|^{\Oh(1)}$. 

    Furthermore, for each $\ell \in \nn$ there exists an $(A,\fcal,\ell)$-exhaustive family $\mathcal{A} \sub 2^A$ of size at most $f(h,r)$ and it can be computed in time $f(h,r)\cdot |V(G)|^{\Oh(1)}$. 
\end{restatable}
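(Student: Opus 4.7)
The plan is to identify each relevant partial solution $X \cap A$ by the $\equiv_h$-equivalence class of the boundaried graph it leaves behind on $\partial(A)$. I fix a labeling $\rho : \partial(A) \to [|\partial(A)|]$, set $\mathbf{G}_A := (G[A], \partial(A), \rho)$ as a $t$-boundaried graph with $t \le r$, and decompose $G = \mathbf{G}_A \oplus \mathbf{G}_{ext}$ for the external boundaried graph $\mathbf{G}_{ext}$ on $(V(G) \sm A) \cup \partial(A)$. For every choice of $B \sub \partial(A)$ (intended to model $X \cap \partial(A)$) and every representative $\mathbf{R} \in \rcal_h^{(t-|B|)}$ compatible with $G[\partial(A) \sm B]$ under $\rho|_{\partial(A) \sm B}$, I will run a dynamic program on a width-$r$ tree decomposition of $G[A]$ (computable via \Cref{lem:prelim:f-free}) to find the minimum-weight set $Y_{B,\mathbf{R}} \sub A \sm \partial(A)$ such that the boundaried graph $\br{G[A \sm (B \cup Y_{B,\mathbf{R}})], \partial(A) \sm B, \rho|_{\partial(A) \sm B}}$ lies in the $\equiv_h$-class of $\mathbf{R}$; whenever such a witness exists, add $B \cup Y_{B,\mathbf{R}}$ to $\acal$. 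There are $2^r$ choices of $B$ and, by Courcelle's Theorem, boundedly many $\equiv_h$-classes on $t'$-boundaried graphs for $t' \le r$, which immediately gives $|\acal| \le f(h,r)$.

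For correctness, take an arbitrary $\fcal$-minor hitting set $X$ and decompose $X \cap A = B \cup Y_0$ with $B = X \cap \partial(A)$ and $Y_0 = X \cap (A \sm \partial(A))$. The boundaried graph $\br{G[A \sm (B \cup Y_0)], \partial(A) \sm B, \rho|_{\partial(A) \sm B}}$ belongs to some $\equiv_h$-class whose representative in $\rcal_h^{(t-|B|)}$ is some $\mathbf{R}$, and the DP output gives $w(Y_{B,\mathbf{R}}) \le w(Y_0)$. Moreover,
\[
G - X \;=\; \br{G[A] - (B \cup Y_0)} \;\oplus\; \br{\mathbf{G}_{ext} - (B \cup (X \sm A))}
\]
is $\fcal$-minor-free, so invoking \Cref{obs:prelim:equivalent} with $\mathbf{G}_1 = G[A] - (B \cup Y_0)$, $\mathbf{G}_2 = G[A] - (B \cup Y_{B,\mathbf{R}})$, $\mathbf{G}_3 = \mathbf{G}_{ext} - (B \cup (X \sm A))$, and empty hitting set yields that $G - X'$ is $\fcal$-minor-free for $X' := (X \sm A) \cup (B \cup Y_{B,\mathbf{R}})$, and $w(X') \le w(X)$ by the weight inequality above.

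The DP itself is the one underlying \Cref{thm:exha:tw-algo} of Baste, Sau, and Thilikos, whose bag states already track the $\equiv_h$-class of the boundaried graph formed by the processed subtree with boundary set to the current bag contents. I modify it to record, for every equivalence class, the minimum weight of deleted vertices achieving that class (rather than collapsing across classes) together with a traceback witness, and after root processing I read off $Y_{B,\mathbf{R}}$ for each pair $(B,\mathbf{R})$. For the $(A, \fcal, \ell)$-exhaustive variant, I additionally index the DP states by the size $|Y|$ of the partial deletion, capped at $\ell$, and retain the minimum-weight $Y$ satisfying $|Y| \le \ell - |B|$; this multiplies the state space by $\ell \le |V(G)|$, so the running time stays at $f(h,r) \cdot |V(G)|^{\Oh(1)}$ and $|\acal|$ stays at $f(h,r)$. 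The main technical obstacle I foresee is the clean bookkeeping for boundary deletions $B \sub \partial(A)$, because deleting a boundary vertex shrinks the boundary and breaks the gluing convention; I sidestep this by running one independent DP per choice of $B$, so that inside each DP the live boundary is always $\partial(A) \sm B$ and matches the boundary of the enumerated representative $\mathbf{R}$.
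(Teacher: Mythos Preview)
Your proposal is correct and follows the same high-level template as the paper—enumerate boundary-deletion subsets $B \subseteq \partial(A)$, enumerate $\equiv_h$-representatives, and solve a bounded-treewidth optimization for each pair—but instantiates it dually. The paper iterates over representatives $\mathbf{H}$ of the \emph{external} boundaried graph $\mathbf{G}^X$, glues $\mathbf{G}_S \oplus \mathbf{H}$ into a small instance (setting weights outside $A \setminus \partial(A)$ to $\infty$), and solves \fdel on it as a black box via \Cref{thm:exha:tw-algo}; the resulting partial solution $X_H$ need not leave the internal side in any particular $\equiv_h$-class. You instead iterate over representatives $\mathbf{R}$ of the \emph{internal} boundaried graph after deletion and open up the DP to compute, for each target class, the minimum-weight deletion achieving it. Correctness in both cases is an application of \Cref{obs:prelim:equivalent}, with the roles of the two sides swapped. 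The paper's route is slightly cleaner in that it invokes \Cref{thm:exha:tw-algo} unmodified, whereas yours requires arguing that the Baste--Sau--Thilikos DP can be refined to output per-class optima (which it can, since its bag states already encode $\equiv_h$-classes); conversely, your argument avoids the $\infty$-weight device and makes the structure of $\acal$ more transparent. One small point worth making explicit: the root bag of the tree decomposition of $G[A \setminus B]$ must contain $\partial(A) \setminus B$ so that the root DP state is indexed by the intended boundary; this is standard (e.g., add $\partial(A)\setminus B$ to every bag, raising the width to at most $2r$) but should be stated.
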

\begin{proof}
Recall that $\partial_G(A) = N_G(V(G) \sm A)$.
We will also denote $A \sm \partial_G(A)$ by $\triangle_G(A)$.
We first give an algorithm to construct a family $\acal \sub 2^A$ and then we will prove that it is $(A,\fcal)$-exhaustive.

We begin by computing families $\rcal_h^{(\ell)}$ for each $\ell \in [r]$ (\Cref{cor:prelim:representative}).
For each subset $S \sub \partial_G(A)$ consider the graph $G - S$, in which  $A \sm S$ still forms an $r$-protrusion. % in $G - S$.
Let $\ell = |\partial_G(A) \sm S|$ and consider the 
boundaried graph $\mathbf{G}_S = (G[A \sm S], \partial_{G-S}(A \sm S), \rho)$ where $\rho$ is an arbitrary labeling.
Next, for each  $\mathbf{H} \in \rcal_h^{(\ell)}$ that is compatible with $\mathbf{G}_S$, consider the graph $G_{S,H} = \mathbf{G}_S \oplus \mathbf{H}$.
We define the weight function on  $G_{S,H}$ by preserving the weights from $G$ on $\triangle_G(A)$ and setting the remaining weights to~$\infty$.
Note that the size of $\mathbf{H}$ is bounded by some function of $h$ and $r$ (\Cref{cor:prelim:representative}) so the treewidth of $G_{S,H}$ is likewise bounded.
Therefore, we can take advantage of \Cref{thm:exha:tw-algo} to solve \fdel on $G_{S,H}$ in time $f(h,r) \cdot |V(G_{S,H})|^{\Oh(1)}$; let $X_H$ denote the computed optimal solution.
If there is no finite-weight solution then we move on to the next  $\mathbf{H} \in \rcal_h^{(\ell)}$, otherwise we have $X_H \sub \triangle_G(A)$. In this case we insert $S \cup X_H$ into $\acal$.
The size of $\acal$ can be bounded by the number of choices of $S$, that is, $2^r$, times the bound on $|\rcal_h^{(\ell)}|$, for $\ell \le r$, which follows from \Cref{cor:prelim:representative}.

We argue that $\acal$ is $(A,\fcal)$-exhaustive.
Consider an $\fcal$-minor hitting set $X$. % intersecting $A$.
Let $\widehat S = X \cap \partial_G(A)$ and  $\mathbf{G}^X = (G - (X \cup \triangle_G(A)), \partial_G(A) \sm X, \rho)$; note that $\mathbf{G}^X \oplus \mathbf{G}_{\widehat S}  = G - (X \sm \triangle_G(A))$.
Next, let $\mathbf{H} \in \rcal_h^{(\ell)}$ be
a boundaried graph that is $\equiv_h$-equivalent to $\mathbf{G}^X$.
By \Cref{obs:prelim:equivalent} set $Y \subseteq \triangle_G(A)$ is an $\fcal$-minor hitting set in $\mathbf{G}^X \oplus \mathbf{G}_{\widehat S}$ if and only if it is an $\fcal$-minor hitting set in $\mathbf{H} \oplus \mathbf{G}_{\widehat S}$.
Consequently, $X \cap \triangle_G(A)$ is an $\fcal$-minor hitting set in $\mathbf{H} \oplus \mathbf{G}_{\widehat S}$.
This means that the algorithm above must have detected some $X_H \sub \triangle_G(A)$ of weight $w(X_H) \le w(X \cap \triangle_G(A))$ and inserted $\widehat{S} \cup X_H$  into $\acal$. 
On the other hand, $X_H$ forms an $\fcal$-minor hitting set in $\mathbf{G}^X \oplus \mathbf{G}_{\widehat S} = G - (X \sm \triangle_G(A))$.
So $X' = (X \sm \triangle_G(A)) \cup X_H$ is an $\fcal$-minor hitting set in $G$ and $w(X') \le w(X)$.
As $X \cap \partial_G(A) = \widehat{S}$ we can alternatively write $X' = (X \sm A) \cup (\widehat{S} \cup X_H)$.
%Because $X$ is inclusion-minimal, $X \sm A$ is not an $\fcal$-minor hitting set
%we must have $w(X) \le w(X')$ and so $W(X \cap A) \le w(\widehat{S} \cup X_H)$.
%This means that in fact $W(X \cap A) = w(X_H)$, $w(X') = w(\widehat{S} \cup X)$, and in particular 
%so $\widehat{S} \cup X_H \ne \emptyset$ and it has been inserted into $\acal$.
This means that $\acal$ includes a replacement for $X \cap A$ and proves that $\acal$ is indeed $(A,\fcal)$-exhaustive.

Constructing an $(A,\fcal,\ell)$-exhaustive family is analogous.
The only difference is that when considering the graph $G_{S,H}$ we compute an $\ell$-optimal solution, i.e., a minimum-weight $\fcal$-minor hitting set among those on at most $\ell$ vertices.
This is doable within the same time bound (\Cref{thm:exha:tw-algo}).% and incurs an additional factor $\ell^{\Oh(1)}$ in the running time while we can always assume $\ell \le |V(G)|$.
%The exchange argument this time follows from \Cref{lem:exha:opt-k}.
\end{proof}

\section{Applications}
\subsection{The approximation algorithm}

For the ease of presentation, we first give an algorithm for \twdel and then use it as a subroutine to solve general \planardel.
We begin with the analysis of a single iteration  and compare the expected drop in the optimum to the expected cost of the sampled partial solution.

\begin{lemma}\label{lem:apx:step}
    %Let $\fcal$ be a finite family of graphs containing a planar graph and $\eta$ be the treewidth bound .
    For each $\eta \in \nn$ there is a constant $c \in \nn$ and an algorithm that, given a node-weighted $n$-vertex graph $G$ %on $n$ vertices 
    with $\tw(G) > \eta$, runs in time $\Oh_\eta\br{ n^{\Oh(\eta)}}$, and outputs a non-empty random subset $Y \sub V(G)$ with a following guarantee.
    \[\ex{w(Y)} \le c \cdot \ex{\opt_\eta(G) - \opt_\eta(G-Y)}  \]
\end{lemma}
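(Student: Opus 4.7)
The plan is to combine the modulator hitting family from \Cref{thm:main:cover} with the exhaustive families from \Cref{lem:exha:compute}, then output a random sample from a carefully weighted distribution over pairs (protrusion, candidate partial solution). First I would apply \Cref{thm:main:cover} to construct, in time $n^{\Oh(\eta)}$, an $\br{\eta, r, c'}$\cover $\pcal$ with $r = \Oh(\eta)$, $c' = 2^{\Oh(\eta)}$, and $|\pcal| \le n^{\Oh(\eta)}$. For each $A \in \pcal$, I invoke \Cref{lem:exha:compute} on the $r$-protrusion $A$ with the family $\fcal(\eta)$ of forbidden minors for treewidth-$\le\eta$ graphs, producing in time $\Oh_\eta(n^{\Oh(1)})$ an $(A, \fcal(\eta))$-exhaustive family $\acal(A) \sub 2^A$ of size at most $M = \Oh_\eta(1)$. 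Since $\opt_\eta(G') = \opt_{\fcal(\eta)}(G')$, the exhaustiveness transfers to treewidth-$\eta$ modulators.

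Next, define $\Omega = \{(A, Y_0) : A \in \pcal,\, Y_0 \in \acal(A) \sm \{\emptyset\}\}$ and $Z = \sum_{(A, Y_0) \in \Omega} 1/w(Y_0)$. The sampling step draws $(A, Y_0) \in \Omega$ with probability $(1/w(Y_0))/Z$ and outputs the non-empty set $Y = Y_0$. Because $\tw(G) > \eta$, every optimal treewidth-$\eta$ modulator $X^*$ is non-empty, and the modulator-hitting property gives at least $|\pcal|/c' \ge 1$ protrusions $A \in \pcal$ meeting $X^*$; for each such $A$, \Cref{lem:exha:opt-normal} supplies a non-empty $X_A \in \acal(A)$ with $\opt_\eta(G) - \opt_\eta(G - X_A) = w(X_A)$. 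Hence $\Omega \ne \emptyset$ and $Z > 0$, so the procedure is well-defined and always outputs a non-empty $Y$.

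For the analysis I fix an optimal $X^*$ and let $P_+ = \{A \in \pcal : A \cap X^* \ne \emptyset\}$, so $|P_+| \ge |\pcal|/c'$. For every $Y_0 \in \Omega$, the set $X^* \sm Y_0$ is a treewidth-$\eta$ modulator in $G - Y_0$, so each summand $\opt_\eta(G) - \opt_\eta(G - Y_0)$ is non-negative. Keeping on the drop side only the contributions from the distinguished pairs $(A, X_A)$ for $A \in P_+$, each contributing $w(X_A) \cdot (1/w(X_A))/Z = 1/Z$, we obtain
\begin{align*}
\ex{\opt_\eta(G) - \opt_\eta(G - Y)} &\ge \sum_{A \in P_+} \frac{1/w(X_A)}{Z} \cdot w(X_A) = \frac{|P_+|}{Z}, \\
\ex{w(Y)} &= \sum_{(A, Y_0) \in \Omega} w(Y_0) \cdot \frac{1/w(Y_0)}{Z} = \frac{|\Omega|}{Z} \le \frac{M \cdot |\pcal|}{Z}.
\end{align*}
Dividing the two bounds yields $\ex{w(Y)} \le M \cdot c' \cdot \ex{\opt_\eta(G) - \opt_\eta(G - Y)}$, so the required constant is $c = M \cdot c' = \Oh_\eta(1)$. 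The running time is dominated by building $\pcal$ and all $\acal(A)$, giving $\Oh_\eta\br{n^{\Oh(\eta)}}$.

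The main design subtlety, and what I expect to be the only nontrivial step, is insisting that $(A, Y_0)$ be sampled with weight proportional to $1/w(Y_0)$ rather than picking $A$ uniformly and then $Y_0$ from $\acal(A)$: this cancellation reduces $\ex{w(Y)}$ to a pure combinatorial count $|\Omega|/Z$, while the expected drop still survives through the unknown canonical pairs $(A, X_A)$ which each contribute exactly $1/Z$. Without such cancellation a single heavy protrusion could push $\ex{w(Y)}$ arbitrarily far from the drop even when many protrusions meet $X^*$, which is precisely the obstacle that blocks a direct weighted adaptation of protrusion replacement.
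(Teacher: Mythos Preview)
Your proposal is correct and follows essentially the same approach as the paper: build the modulator hitting family via \Cref{thm:main:cover}, attach to each protrusion an $(A,\fcal(\eta))$-exhaustive family via \Cref{lem:exha:compute}, sample a pair $(A,Y)$ with probability proportional to $1/w(Y)$, and compare $|\Omega|/Z$ against $|P_+|/Z$ using \Cref{lem:exha:opt-normal}. The constants, notation, and even the closing remark about why the $1/w(Y)$ weighting is essential match the paper's argument.
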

\begin{proof}
    We apply \Cref{thm:main:cover} to compute an $(\eta, r, c_1)$\cover $\pcal$ for $G$ with $r = \Oh(\eta)$ and $c_1 = 2^{\Oh(\eta)}$.
    This takes time $n^{\Oh(\eta)}$. % and guarantees  $|\pcal| = n^{\Oh\eta+1}$.
    Next, we use \Cref{lem:exha:compute} to compute, for each $A \in \pcal$, an $(A,\fcal(\eta))$-exhaustive family $\acal_A$ of size $\le c_2$, where $c_2 = f(\max_{F\in \fcal(\eta)}|V(F)|,r)$ and $f$ is the function from that lemma.
    This takes time $n^{\Oh(\eta)} \cdot \Oh_\eta\br{ n^{\Oh(1)}}$.
    Note that $c_1,c_2$ are constants depending only on $\eta$ and
    recall that a graph has treewidth bounded by $\eta$ if and only if it is $\fcal(\eta)$-minor-free so $\opt_\eta(G) = \opt_{\fcal(\eta)}(G)$.
    Now, consider the family $\bcal$ of all pairs $(A,Y)$ where $A \in \pcal$ and $Y\in \acal_A$ is non-empty.
    We sample an element $(A,Y)$ from $\bcal$ with probability proportional to $\frac{1}{w(Y)}$.
    Note that $w(Y) > 0$ because $Y \ne \emptyset$.
    Let $\widehat Y$ denote the corresponding random variable.
    
    We claim that the random variable $\widehat Y$ satisfies the guarantee of the lemma.
    Let $W = \sum_{(A,Y) \in \bcal} \frac{1}{w(Y)}$.
    Then the probability of choosing a pair $(A,Y)$ equals $\frac{1}{w(Y) \cdot W}$.
    We have
    
    \[\ex{w(\widehat Y)} = \sum_{(A,Y) \in \bcal} w(Y) \cdot \frac{1}{w(Y) \cdot W} = \frac{|\bcal|}{W}. \]

    Now we estimate the expected value of $\opt_\eta(G) - \opt_\eta(G-\widehat Y)$.
    Clearly, for every $Y$ this quantity is non-negative.
    %We will identity a sufficiently large subset of $\bcal$ 
    Let $X$ be an optimal solution to \twdel on~$G$.
    By the definition of a modulator hitting family, we know that there is a subfamily $\pcal_X \sub \pcal$ of size 
    at least $|\pcal|/c_1$ so that for each $A \in \pcal_X$ we have $A \cap X \ne \emptyset$.
    Furthermore, \Cref{lem:exha:opt-normal} implies that for each  $A \in \pcal_X$ there is some non-empty $Y_A \in \acal_A$ satisfying $\opt_\eta(G) - \opt_\eta(G-Y_A) = w(Y_A)$.
    In particular, we have $(A,Y_A) \in \bcal$
    and we can estimate 

    \[\ex{\opt_\eta(G) - \opt_\eta(G-\widehat Y)} \ge
    \sum_{A \in \pcal_X} w(Y_A) \cdot \frac{1}{w(Y_A) \cdot W} = \frac{|\pcal_X|}{W}. \]
    It holds that $|\bcal| \le c_2 \cdot|\pcal| \le c_2 \cdot c_1 \cdot|\pcal_X|$.
    Hence the claimed inequality holds for $c = c_1 \cdot c_2$.   
\end{proof}

In the proof it was important that when $Y \sub V(G)$ belongs to multiple $(A_i,\fcal(\eta))$-exhaustive families for different sets $(A_i)$ then we consider all its copies in the family $\bcal$.
Alternatively, we could multiply the probability of choosing $Y$ by the number of such sets $(A_i)$.
By the same reason we had to include the factor $\deg(v)$ in the probabilities in the simplified algorithm for \wei\fvs in \Cref{sec:outline}.

We will perform the described sampling iteratively.
At each step, the expected value of the revealed part of a solution is within a factor $c$ from the cost of the chosen set $Y$.
The algorithm is terminated when the value of optimum drops to 0, so the expected value of $w(Y_1) + w(Y_2) + \dots$ should be bounded by $c$ times the initial optimum.
We will formalize this intuition by analyzing a random process being a supermartingale (see \Cref{sec:prelim}).

\begin{lemma}\label{lem:apx:doob}
For each $\eta \in \nn$ there is a constant $c \in \nn$ and a randomized algorithm that, given a node-weighted $n$-vertex graph $G$ graph $G$, runs in time $\Oh_\eta\br{ n^{\Oh(\eta)}}$, and outputs a treewidth-$\eta$ modulator $X$ in $G$ such that $\ex{w(X)} \le c \cdot \opt_\eta(G)$.
\end{lemma}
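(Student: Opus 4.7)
The plan is to iteratively invoke \Cref{lem:apx:step} and accumulate the sampled sets into a solution $X$, stopping once the treewidth of the residual graph drops to at most $\eta$. The expected total cost will then be controlled by recognizing that the process defines a supermartingale, and applying Doob's Optional-Stopping Theorem (\Cref{thm:prelim:doob}). Let $c$ be the constant from \Cref{lem:apx:step}.

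Concretely, set $G_0 := G$. At step $i \ge 1$, first test whether $\tw(G_{i-1}) \le \eta$ using \Cref{lem:prelim:f-free}; if so, halt and output $X := Y_1 \cup \cdots \cup Y_{i-1}$. Otherwise, invoke \Cref{lem:apx:step} on $G_{i-1}$ to sample a non-empty $Y_i \subseteq V(G_{i-1})$ and set $G_i := G_{i-1} - Y_i$. Let $\tau$ be the random stopping index. Since each $Y_i$ is non-empty, $\tau \le |V(G)|$; and by construction, $G - X = G_\tau$ satisfies $\tw(G_\tau) \le \eta$, so $X$ is a valid treewidth-$\eta$ modulator.

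To bound $\ex{w(X)}$, define the potential $Z_i := c \cdot \opt_\eta(G_i) + \sum_{j=1}^{i} w(Y_j)$ for $i \le \tau$, extended by $Z_i := Z_\tau$ for $i > \tau$ (equivalently, once we stop we take $Y_j := \emptyset$ and $G_j := G_\tau$). Let $\mathcal{F}_{i-1}$ denote the history up to step $i-1$. The inequality from \Cref{lem:apx:step}, applied conditionally on $\mathcal{F}_{i-1}$ while the process has not yet stopped, gives
\[
\ex{w(Y_i) \mid \mathcal{F}_{i-1}} \le c \cdot \ex{\opt_\eta(G_{i-1}) - \opt_\eta(G_i) \mid \mathcal{F}_{i-1}},
\]
which rearranges to $\ex{Z_i \mid \mathcal{F}_{i-1}} \le Z_{i-1}$ (after stopping the inequality holds trivially). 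Thus $(Z_i)_{i \ge 0}$ is a supermartingale with bounded stopping time $\tau \le |V(G)|$, and \Cref{thm:prelim:doob} yields $\ex{Z_\tau} \le Z_0 = c \cdot \opt_\eta(G)$. Since $\opt_\eta(G_\tau) = 0$ at termination, $\ex{Z_\tau} = \ex{w(X)}$, completing the bound.

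The running time consists of at most $|V(G)|$ iterations, each costing $\Oh_\eta(n^{\Oh(\eta)})$ via \Cref{lem:apx:step} plus a linear-time treewidth test from \Cref{lem:prelim:f-free}, for a total of $\Oh_\eta(n^{\Oh(\eta)})$. The main conceptual step is to recognize that the per-step randomized inequality of \Cref{lem:apx:step} can be converted into a global expectation bound through a supermartingale coupling; once the potential $Z_i$ is written down, the argument reduces to a direct invocation of Doob's theorem. The minor technical point to verify is that \Cref{lem:apx:step} is applied to the random graph $G_{i-1}$ conditionally on its history, which is legitimate because the algorithm there depends only on the current graph.
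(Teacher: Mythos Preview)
Your proof is correct and follows essentially the same approach as the paper: iteratively apply \Cref{lem:apx:step}, define the potential $Z_i = c\cdot\opt_\eta(G_i) + w(X_i)$ (the paper calls it $P_i$), verify it is a supermartingale via the per-step inequality, and conclude with Doob's Optional-Stopping Theorem. The only differences are cosmetic: you explicitly extend the process past $\tau$ and spell out the running-time bookkeeping, neither of which changes the argument.
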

\begin{proof}
    We will use the constant $c$ defined in \Cref{lem:apx:step}. 
    Consider the following random process $(X_i)$ where $X_i \sub V(G)$.
    We start with $X_0 = \emptyset$. Next, for $i \ge 0$ we compute a random non-empty subset $Y_{i+1} \sub V(G) \sm X_i$ by applying \Cref{lem:apx:step} to the graph $G_i = G - X_{i}$.
    Then we set $X_{i+1} = X_{i} \cup Y_{i+1}$.
    The process stops at iteration $\tau$ when $\tw(G_\tau)$ becomes bounded by $\eta$ and the algorithm returns $X = X_\tau$.
    Note that $X_i$ grows at each iteration so $\tau$ is always bounded by $|V(G)|$.
    We will show that $X$ is a $c$-approximate solution in expectation.

    We claim that the process $P_i = w(X_i) + c \cdot \opt_\eta(G_i)$ is a supermartingale, which means that
      $\ex{P_{i+1}  \mid P_0, \dots P_i} \le P_i$.
      This condition is equivalent to  $\ex{P_{i+1} - P_i \mid P_0, \dots P_i} \le 0$.

    \[ P_{i+1} - P_i = w(Y_{i+1}) + c \cdot \br{ \opt_\eta(G_{i+1}) -  \opt_\eta(G_{i})} = w(Y_{i+1}) - c \cdot \br{ \opt_\eta(G_i) -  \opt_\eta(G_i - Y_{i+1})}\]

    \Cref{lem:apx:step} states that the expected value of such a random variable is upper bounded by 0 for any fixed graph $G_i$. %, i.e., $\ex{P_{i+1} - P_i \mid G_i} \le 0$.
    Hence the process $(P_i)$ is indeed a supermartingale.
    Consider the stopping time~$\tau$, for which it holds that $\opt_\eta(G_\tau) = 0$.
    We can now take advantage of the Doob's Optional-Stopping Theorem (\ref{thm:prelim:doob}) which states that $\ex{P_\tau} \le \ex{P_0}$.
    %We consider the stopping time $\tau$, for which we have $\opt_\eta(G_\tau) = 0$, and calculate

    \[\ex{w(X)} = \ex{w(X_\tau) + c \cdot \opt_\eta(G_\tau)} = \ex{P_\tau} \le \ex{P_0} =  w(X_0) + c \cdot \opt_\eta(G_0) = c\cdot \opt_\eta(G)\]
    This concludes the proof.
\end{proof}

Finally, we solve \planardel by first reducing the treewidth to $\eta(\fcal)$.

\thmApx*
\begin{proof}
    Any $\fcal$-minor hitting set is also a treewidth-$\eta(\fcal)$ modulator so
    %Let $\eta = \eta(\fcal)$.\micr{define}
    %for any graph $G$     
    it holds that $\opt_{\eta(\fcal)}(G) \le \opt_\fcal(G)$.
    We apply \Cref{lem:apx:doob} to compute a set $X_1 \sub V(G)$ such that $\tw(G-X_1) \le \eta(\fcal)$ and  $\ex{w(X_1)} \le c \cdot \opt_{\eta(\fcal)}(G)$.
    Since  the treewidth of $G-X_1$ is bounded, we can now solve \fdel optimally on $G-X_1$ using \Cref{thm:exha:tw-algo}.
    We compute $X_2 \sub V(G) \sm X_1$ such that $w(X_2) = \opt_\fcal(G-X_1) \le \opt_\fcal(G)$.
    The union $X_1 \cup X_2$ forms a solution that is $(c+1)$-approximate in expectation.
\end{proof}

\subsection{The single-exponential FPT algorithm}

Unlike for approximation, this time we cannot afford to first solve \twdel and then process a bounded-treewidth graph, so we will treat the bounded-treewidth scenario as a special case in the branching algorithm.
The main difference compared to \Cref{thm:main:apx} is that now we will sample both the protrusion $A \in \pcal$ and the replacement $Y \in \acal$ according to a uniform distribution.
We also need to guess the integer $\ell = |X \cap A|$ to guide the construction 
of an $(A,\fcal,\ell)$-exhaustive family;
note that we can compute such a family of size $\Oh_\fcal(1)$ only when $\ell$ is known.
However, we cannot sample $\ell$ uniformly from $[1,k]$ because when the correct value of $\ell$ is always $\Oh(1)$ we need to achieve constant success probability in every step.
As a remedy, we will use geometric distribution to sample $\ell$.

\thmFpt*
\begin{proof}
    We will abbreviate $\eta = \eta(\fcal)$.
    Consider an instance $(G, k)$ and proceed as follows.
    If $k=0$ check if $G$
    is $\fcal$-minor-free in time $\Oh_\fcal(n)$ (\Cref{lem:prelim:f-free});
    if yes return an empty set and otherwise abort the computation.
    If $\tw(G) \le \eta$ solve the problem optimally in time $\Oh_\fcal(n^{\Oh(1)})$ using \Cref{thm:exha:tw-algo}. % (see the proof of \Cref{lem:exha:compute} for a remark about adapting that algorithm to detect $k$-optimal solutions).\micr{ref to prelims}
    
    We will assume now that $k \ge 1$ and $\tw(G) > \eta$.
    In order to prepare a recursive call, we make three random choices.
    \begin{enumerate}
        \item  Use \Cref{thm:main:cover}
    to compute an $(\eta, r, c_1)$\cover $\pcal$ % of size $\le n^{\eta+1}$ 
    for $r = \Oh(\eta)$ and $c_1 = 2^{\Oh(\eta)}$.
    This takes time $n^{\Oh(\eta)}$.
    Then choose a set $A \in \pcal$ uniformly at random.
    
    \item Consider a random variable $L \in [1,k]$ where the probability of choosing $L = \ell$ is proportional to $2^{-\ell}$.
    Note that we have $\pr{L = \ell} \ge 2^{-\ell}$ for each $\ell \in [1,k]$.
 
 \item Use \Cref{lem:exha:compute} to compute an $(A,\fcal,L)$-exhaustive family $\acal$ of size $\le c_2$, where $c_2 = f(\max_{F\in \fcal}|V(F)|, r)$ and $f$ is the function from that lemma. This takes time $\Oh_\fcal(n^{\Oh(1)})$.
Then choose a non-empty set $Y \in \acal$ again uniformly at random (if $\acal = \{\emptyset\}$ abort the computation).
   Add $Y$ to the solution and recurse on $(G-Y, k - L)$.
   \end{enumerate}

    %We claim that we probability at least $1/c$ the algorithm chooses a set $Y$\micr{define exhaustive family to mark the smallest solution}

    Let $Q(G,k)$ denote the event that 
    the algorithm returns a $k$-optimal solution to the instance $(G,k)$ under an assumption that $\opt_{\fcal,k}(G) < \infty$. %$(G,k)$ admits a solution.
    We will prove that $\pr{Q(G,k)} \ge (2c_1c_2)^{-k}$ by induction on $k$.
    The claim clearly holds for $k=0$ so we will 
    assume that $k > 0$ and that the claim holds for
    all $k' < k$.
    First, if $\tw(G) \le \eta$ then the algorithm always returns a $k$-optimal solution.
    We will now assume $\tw(G) > \eta$ and analyze the three random guesses performed before a recursive call.

    Let $X$ be a fixed $k$-optimal solution to \fdel.
    It is also a solution to \twdel. % for $\eta = \eta(\fcal)$.
    %Let $\pcal_X \sub \pcal$ be a subfamily of those $A \in \pcal_X$ for which $A \cap X \ne \emptyset$.
    By \Cref{def:prelim:cover} of an $(\eta,r,c_1)$\cover, we know that $\pr{A \cap X \ne \emptyset} \ge 1/c_1$ for a random choice of $A \in \pcal$.
    Suppose that $A$ has been chosen successfully and let $\ell_A = |A \cap X| \ge 1$.
    Then  $\pr{L = \ell_A} \ge 2^{-\ell_A}$.
    %Assume that the first two guesses were correct.
    %The probability of picking the right $\ell$ in the second step equals  $2^{-\ell}$.
     \Cref{lem:exha:opt-k} states that the $(A,\fcal,\ell_A)$-exhaustive family $\acal$ contains a non-empty set  $X_A$ of size $\le \ell_A$ for which $\opt_{\fcal,k}(G) - \opt_{\fcal,k- \ell_A}(G-X_A) = w(X_A)$. 
    The probability of choosing $Y = X_A$ in the third step is $1/|\acal| \ge 1/c_2$.
    The call $(G,k)$ will be processed correctly if all these guesses are correct and the call $(G-X_A,k-\ell_A)$ is processed correctly.
    We are ready to estimate the probability of $Q(G,k)$, by first analyzing the success probability for a fixed $A \in \pcal$.

    \begin{align*}  
    \pr{Q(G,k) \mid A \cap X \ne \emptyset} & \ge  \pr{L = \ell_A} \cdot \pr{Y = X_A} \cdot \pr{Q(G-X_A,k-\ell_A)} \\
    & \ge  2^{-\ell_A} \cdot c_2^{-1} \cdot (2c_1c_2)^{-(k-\ell_A)} \ge (2c_2)^{-1} \cdot (2c_1c_2)^{-(k-1)} 
     \end{align*}

      We have used the inductive assumption and the fact that $\ell_A \ge 1$ when $A \cap X \ne \emptyset$.
     
     \[ \pr{Q(G,k)}  \ge  \pr{A \cap X \ne \emptyset} \cdot \pr{Q(G,k) \mid A \cap X \ne \emptyset} \ge c_1^{-1} \cdot (2c_2)^{-1} \cdot (2c_1c_2)^{-(k-1)}  =(2c_1c_2)^{-k}\]
     
     We have thus proven that the success probability of the presented procedure is at least $c^{-k}$ for $c = 2c_1c_2$ whereas its running time is $\Oh_\fcal\br{n^{\Oh(\eta)}}$.
     Note that if the computation has not been aborted, then the returned solution has at most $k$ vertices.
     We repeat this procedure $c^k$ many times and output the outcome with the smallest weight.
     This gives constant probability of witnessing a $k$-optimal solution.
     %Repeating this procedure $c^k$ many times yields constant probability of witnessing a $k$-optimal solution so
     %The constant hidden in the $\Oh_\fcal$-notation can be incorporated into the constant $c$ so
     The total running time is $c^k \cdot \Oh_\fcal\br{n^{\Oh(\eta)}} =  2^{\Oh_\fcal(k)}\cdot n^{\Oh(\eta)}$.
\end{proof}

\section{Conclusion}

We have shown that constant approximation for {\sc Planar} \ufdel can be extended to the weighted setting, by introducing a new technique that circumvents the need to perform weighted protrusion replacement.
%Our single-exponential FPT algorithm yields an improvement already for unweighted graphs as it does not require all the graphs in $\fcal$ to be connected.\micr{false} %, unlike the one in~\cite{FominLMS12}.
There are several natural questions about potential refinements of \Cref{thm:main:apx}.
Can the algorithm be derandomized? 
Can we improve the running time so that the degree of the polynomial is a universal constant that does not depend on $\eta$? %, as in~\cite{FominLMS12, GuptaLLM019}?
Can the approximation factor be bounded by a slowly growing function of $\eta$? %, as in \cite{GuptaLLM019}?
All these improvements are possible in the unweighted realm~\cite{FominLMS12, GuptaLLM019}.
It is however unknown whether there exists a universal constant $c$ for which every unweighted {\sc Planar} \ufdel problem admits a $c$-approximation algorithm.

Even though our approximation factor is a computable function of the treewidth bound, this function grows very rapidly. 
Is it possible to get
a ``reasonable'' approximation factor (say, under 100) for {\sc Weighted Treewidth-2 Deletion} or {\sc Weighted Outerplanar Deletion}?
In the unweighted setting, 
the approximation factor (for both problems) can be pinned down to 40~\cite{DonkersJW22, GuptaLLM019}.
As another potential direction, can we
combine the ideas from \Cref{thm:main:cover} and the scaling lemma from~\cite{FominL0Z20} to
improve the status of \planardel on  $H$-minor-free graphs from QPTAS to PTAS?

One message of our work is that {\sc Planar} \ufdel 
does not get any harder for approximation or FPT algorithms when we consider weighted graphs.
What about a polynomial kernelization under parameterization by the number of vertices in the sought solution?
While this is well-understood on unweighted graphs~\cite{FominLMS12}, we are only aware of a polynomial kernelization for
\wei{ \sc Vertex Cover}~\cite{EtscheidKMR17}. 
This result could be possibly extended to the weighted variants of \fvs, {\sc Treewidth-2 Deletion}, and {\sc Outerplanar Deletion}, for which we know kernelization algorithms based on
explicit reductions rules~\cite{DonkersJW22, schols2022kernelization}.
Does general \planardel admit a polynomial kernelization for every fixed family $\fcal$?

\bibliographystyle{plain}
\bibliography{main}

\iffalse
\appendix
\section{Proof of \Cref{lem:exha:compute}}
\label{app:exha}

We are ready to explain how the exhaustive families are computed. We restate the lemma.

\lemExha*
\fi

\end{document}